\documentclass{article}
\usepackage[a4paper]{geometry}
\usepackage[sumlimits,namelimits]{amsmath} 
\usepackage{amssymb, amsthm}
\usepackage{mathabx}
\usepackage[english]{babel}
\usepackage{spverbatim}
\usepackage{lastpage}
\usepackage{tikz-cd}
\usetikzlibrary{arrows}
\usepackage{thmtools}
\usepackage[colorinlistoftodos,prependcaption,textsize=tiny]{todonotes}
\usepackage[T1]{fontenc}
\usepackage[utf8]{inputenc} 
\usepackage{authblk}
\usepackage{enumerate}
\usepackage{stmaryrd}
\usepackage[hidelinks]{hyperref}
\usepackage[capitalize]{cleveref} 
\usepackage{listings}
\usepackage{csquotes}
\usepackage[noend]{algpseudocode}  
\usepackage[toc,page]{appendix}
\usepackage[linesnumbered,rightnl,ruled]{algorithm2e}
\usepackage{sagetex}

\SetInd{2.5em}{0.25em}
\SetAlgoVlined
\SetKwProg{Try}{try}{:}{}
\SetKwProg{Catch}{catch}{:}{}
\SetKwFor{IfCustom}{if}{:}{}
\SetKwFor{ElseCustom}{else}{:}{}
\SetKwFor{ForCustom}{for}{:}{}
\SetKwFor{WhileCustom}{while}{:}{}
\SetKwComment{Comment}{/* }{ */}

\algnewcommand\algorithmicinput{\textbf{Complexity:}}
\algnewcommand\COMP{\item[\algorithmicinput]}
\SetKwComment{Comment}{//}{}
\definecolor{db}{RGB}{0, 153, 255}
\definecolor{dg}{RGB}{0, 153, 255}
\definecolor{dg}{RGB}{0, 19, 222}
\definecolor{dg}{RGB}{0, 204, 0}
\definecolor{dv}{RGB}{204, 0, 255}
\definecolor{dr}{RGB}{204, 0 , 0}
\definecolor{do}{RGB}{255,140,0}

\numberwithin{equation}{subsection} 

\fontencoding{T1}
\fontfamily{ptm}
\fontseries{m}
\fontshape{n}
\fontsize{12}{15}
\selectfont

\theoremstyle{definition}
\newtheorem{Definition}{Definition}[subsection]

\newtheorem{Remark}[Definition]{Remark}
\newtheorem{Hypothesis}[Definition]{Hypothesis}
\theoremstyle{plain}
\newtheorem{Corollary}[Definition]{Corollary}
\newtheorem{Lemma}[Definition]{Lemma}
\newtheorem{TheoremIntro}{Theorem}
\newtheorem{Theorem}[Definition]{Theorem}
\newtheorem{Proposition}[Definition]{Proposition}

\newcommand\FF{\mathbb{F}}
\newcommand\KK{\mathbb{K}}
\newcommand\LL{\mathbb{L}}
\newcommand\DD{\mathbb{D}}
\newcommand\can{K}
\newcommand\rr{L}
\newcommand\Pl{\mathcal{P}}
\newcommand\LAG{\operatorname{LAG}}
\newcommand\LD{\operatorname{LD}}
\usepackage{graphicx}
\newcommand{\hookuparrow}{\mathrel{\rotatebox[origin=c]{90}{$\hookrightarrow$}}}
\newcommand{\hookdownarrow}{\mathrel{\rotatebox[origin=c]{-90}{$\hookrightarrow$}}}
\newcommand{\End}{\operatorname{End}}
\newcommand{\Trd}{\operatorname{Trd}}
\newcommand{\Resrd}{\operatorname{Resrd}}
\newcommand{\res}{\operatorname{Res}}
\newcommand{\Res}{\operatorname{\mathcal{R}{es}}}
\newcommand{\ddesign}{d^*}
\newcommand{\dsrk}{d_{\mathrm{srk}}}
\newcommand{\wsrk}{w_{\mathrm{srk}}}
\newcommand{\rhoalgo}{\rho_{\mathrm{algo}}}
\newcommand{\Htpl}[1]{{\bf (H2-$#1$)}}

\algnewcommand{\algorithmicgoto}{\emph{go to}}
\algnewcommand{\Goto}[1]{\algorithmicgoto~\ref{#1}}

\makeatletter
\newcommand{\subjclass}[1]{
  \gdef\@subjclass{#1}
}
\newcommand{\keywords}[1]{
  \gdef\@keywords{#1}
}
\newcommand{\printclassifications}{
  \ifx\@subjclass\@empty\else
    \begingroup
      \renewcommand\thefootnote{}
      \footnotetext{\textit{2020 Mathematics Subject Classification.} \@subjclass}
    \endgroup
  \fi
  \ifx\@keywords\@empty\else
    \begingroup
      \renewcommand\thefootnote{}
      \footnotetext{\textit{Keywords:} \@keywords}
    \endgroup
  \fi
}
\makeatother

\newcommand{\ie}{\textit{i.e.,~}}

\keywords{algebraic curve, sum-rank metric, decoding, division algebra}
\subjclass{Primary: 11R52,94B35; Secondary: 14A22,14R56} 

\title{Duality and decoding of\\linearized Algebraic Geometry codes}
\author{Elena Berardini}
\affil{CNRS, Mathematical Institute of Bordeaux, University of Bordeaux\\France\thanks{elena.berardini@math.u-bordeaux.fr}}
\author{Xavier Caruso}
\affil{CNRS, Mathematical Institute of Bordeaux, University of Bordeaux\\France\thanks{xavier.caruso@math.u-bordeaux.fr}}
\author{Fabrice Drain}
\affil{Mathematical Institute of Bordeaux\\University of Bordeaux\\France\thanks{fabrice.drain@math.u-bordeaux.fr}}
\date{\today}

\begin{document}

\maketitle

\printclassifications

\begin{abstract}
	We design a polynomial time decoding algorithm for linearized Algebraic Geometry codes with unramified evaluation places, a family of sum-rank metric evaluation codes on division algebras over function fields introduced in \cite{1}.
	By establishing a Serre duality and a Riemann--Roch theorem for these algebras, we prove that the dual codes of such linearized Algebraic Geometry codes, that we term linearized Differential codes, coincide with the linearized Algebraic Geometry codes themselves over the adjoint algebra, and that our decoding algorithm is correct.
\end{abstract}
\tableofcontents
\section*{Introduction}

Sum-rank metric codes have seen a resurgence of interest in recent decades due to their broad applications in areas like network coding, distributed data storage, and code-based cryptography as described in \cite{13}.
Recent code constructions, such as the linearized Reed--Solomon codes introduced in \cite{12}, achieve the sum-rank Singleton-type bound.
Furthermore, linearized Algebraic Geometry (LAG) codes were introduced and studied by the first and second authors in~\cite{1}, using Riemann--Roch spaces on some division algebras over function fields of curves.
They generalize both Algebraic Geometry (AG) codes as described in~\cite{2} and linearized Reed--Solomon codes.
This lends them good properties in the sum-rank metric: they almost meet the sum-rank Singleton bound and their length is not limited by the size of the alphabet.
We also mention that other algebraic constructions exist, such as linearized Reed--Muller codes, described in~\cite{31}.

The main contributions of this paper are to study the duality properties of linearized AG codes and to design a decoding algorithm for them, under the hypothesis that the evaluation places used in their definition are unramified.

We briefly outline the construction of linearized AG codes (a more detailed presentation is given at the beginning of Section~\ref{LAGCODEDUALITYANDDECODING} of the present paper).
Let $\KK$ be a the function field of an algebraic curve defined over a field $\FF$.
We fix a cyclic extension $\LL/\KK$ of degree $r$ and a generator $\theta$ of its Galois group. We let $\LL[T;\theta]$ be the corresponding Ore polynomial ring.
We set $$\DD:=\frac{\LL[T;\theta]}{T^r-x},$$ where $x \in \KK$ is chosen so that $\DD$ is a division algebra as in \cite[Section IV A]{1}.

In~\cite{1}, valuations on $\DD$ (indexed by the places of $\LL$) are introduced and used to define Riemann--Roch spaces $\Lambda_\DD(A) \subset \DD$ associated to certain rational divisors $A$ of $\LL$.
Moreover, for any rational place $P$ of $\KK$ away from $A$ (meaning that the support of $A$ does not contain any place of $\LL$ above $P$) and satisfying a certain hypothesis called \Htpl{P_i} in \cite{1}, an evaluation morphism
$\bar\varepsilon_P : \Lambda_\DD(A) \to \operatorname{End}_\FF(V_P)$ is constructed, 
where $V_P$ is the residual algebra of $\LL$ at $P$.
Fixing a finite set $\{P_1, \ldots, P_s\}$ of such places and writing $D = P_1 + \cdots + P_n$, one can concatenate the $\bar\varepsilon_i$ to obtain a multievaluation morphism
$$\bar\varepsilon_{D} : \Lambda_{\DD}(A) \rightarrow \prod_{i=1}^s\operatorname{End}_{\FF}\left(V_{P_i}\right)\simeq \big(\FF^{r \times r}\big)^s,$$
whose image is by definition the linearized AG code $\LAG_\DD(A, D)$.
It is endowed with the sum-rank metric defined by the weight function $(M_1, \ldots, M_s) \mapsto \sum_{i=1}^s \operatorname{rank}(M_i)$.

Theorem~2 of \cite{1} provides estimates on the parameters of the code $\operatorname{LAG}_{\DD}(A,D)$:
assuming $\operatorname{deg}(A) < sr$, its minimum distance is at least $\ddesign(A) := sr-\operatorname{deg}A$, 
while its dimension is that of the Riemann--Roch space $\Lambda_\DD(A)$,
which is itself lower bounded by $r\deg(A) - r (g-1)$, where $g$ is given by an explicit formula (see Definition \ref{def:genus}) and is called the \emph{genus} of $\DD$. 

In this paper, we propose a decoding algorithm for linearized AG codes (see Algorithm \ref{ALGO}) and prove the following theorem.

\begin{TheoremIntro}[Theorem \ref{LAGDECODING}]
	Let $A$ be a divisor as above, and satisfying  $2g -2< \operatorname{deg}A < sr$. We assume the existence of a non-split non-evaluation rational place of $\KK$ (see Hypothesis \ref{HYPO2}).
	Set $$\rhoalgo := \left\lfloor \frac{\ddesign(A) -g-1}{2}\right\rfloor.$$
	Then, our algorithm decodes any error relative to the code $\operatorname{LAG}_{\DD}(A,D)$ of sum-rank weight at most $\rhoalgo$, with polynomial complexity in $s$ and $r$.\end{TheoremIntro}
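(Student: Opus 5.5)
We follow the classical error-locating-pair (Pellikaan / Duursma–Kötter) strategy, transposed to the non-commutative setting of $\DD$. We receive $y = c + e$ with $c = \bar\varepsilon_D(f)$, $f \in \Lambda_\DD(A)$, and $\wsrk(e) = t \le \rhoalgo$.

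\textbf{Step 1 (error locators).} We choose an auxiliary divisor $B$, a multiple of the non-split non-evaluation rational place $Q$ of Hypothesis \ref{HYPO2}, with $\deg B \approx t + g$. The role of $Q$ is that its multiples are rational divisors of all degrees, which gives fine control on $\deg B$. For each $i$, let $K_i \subset V_{P_i}$ be the kernel of the error component $e_i$, so that $\sum_i \operatorname{codim} K_i = t$. We want a nonzero $\Lambda \in \Lambda_\DD(B)$ whose evaluation $\bar\varepsilon_{P_i}(\Lambda)$ kills the image of $e_i$ (acting on the appropriate side). Each such condition imposes $r\cdot\operatorname{rank}(e_i)$ linear conditions over $\FF$, so $rt$ conditions in total. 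The Riemann--Roch lower bound $\dim \Lambda_\DD(B) \ge r\deg B - r(g-1)$ then guarantees a nonzero solution once $\deg B \ge t + g$.

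\textbf{Step 2 (linearization).} Multiplicativity of the evaluation morphisms gives $\bar\varepsilon_{P_i}(\Lambda f) = \bar\varepsilon_{P_i}(\Lambda)\,\bar\varepsilon_{P_i}(f)$. Hence the key equation $\bar\varepsilon_{P_i}(\Lambda)\, y_i = \bar\varepsilon_{P_i}(N)$ holds with $N = \Lambda f \in \Lambda_\DD(A+B)$. We solve the $\FF$-linear system in the unknowns $(\Lambda, N) \in \Lambda_\DD(B) \times \Lambda_\DD(A+B)$, which has polynomial size in $s$ and $r$. To solve it, we take $\FF$-bases of these Riemann--Roch spaces, computed as in \cite{1}.

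\textbf{Step 3 (correctness).} For any solution $(\Lambda', N')$, the element $R = \Lambda' f - N'$ lies in $\Lambda_\DD(A+B)$. Its evaluation satisfies $\bar\varepsilon_{P_i}(R) = -\bar\varepsilon_{P_i}(\Lambda')\, e_i$, so the sum-rank weight of $\bar\varepsilon_D(R)$ is at most $t$. On the other hand, $\bar\varepsilon_D$ restricted to $\Lambda_\DD(A+B)$ has minimum distance at least $sr - \deg(A+B)$. This exceeds $t$ exactly when $2t + g + 1 \le \ddesign(A)$, which is the condition defining $\rhoalgo$. Hence $R = 0$ and $N' = \Lambda' f$. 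Since $\DD$ is a division algebra, we recover $f = \Lambda'^{-1} N'$ by right division in the Ore ring, which is effective, and finally $c = \bar\varepsilon_D(f)$.

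\textbf{Main obstacle.} The delicate point is Step 3, namely that every solution of the linear system, not only the intended one, leads to $f$. In particular, $\Lambda' \neq 0$ must be guaranteed for a solution with $N' \ne 0$, and the argument must cope with non-commutativity and left/right conventions. Making the degree bookkeeping exact may require the duality and Riemann--Roch results for $\DD$ established later in the paper. Specifically, a Riemann--Roch equality for $\deg > 2g-2$ is what the hypothesis $\deg A > 2g-2$ is there to provide. It ensures the dimension count for $\Lambda_\DD(A+B)$ versus $\Lambda_\DD(B)$ is tight enough to exclude spurious solutions, for instance solutions with $\Lambda'=0$ and $N' \neq 0$ in the kernel of the evaluation. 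If Step 3 fails as stated, we would first try to show this via the dual code, identified with a LAG code over the adjoint algebra, and its distance bound.
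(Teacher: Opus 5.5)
There is a genuine gap in Step 2. It is the left/right issue you mention but do not resolve. The claim $N=\Lambda f\in\Lambda_\DD(A+B)$ is false in general. In $\Lambda f=\sum_{k,l}\lambda_k\theta^k(f_l)T^{k+l}$, the coefficients of the \emph{right} factor are twisted by $\theta^k$. So at a place $Q$ one only gets $\omega_Q(\Lambda f)\ge -B_Q-\max_{Q'}A_{Q'}$, where the maximum runs over the conjugates $Q'$ of $Q$; this is the computation in the proof of Lemma \ref{LAGBELG1}. Since $A$ is not assumed Galois, the inclusion really fails. For instance, if $A$ is supported on a single place $Q_1$ of a split fibre, a term $\lambda_1T\cdot f_0=\lambda_1\theta(f_0)T$ moves the pole of $f_0$ from $Q_1$ to a conjugate place where $A+B$ has coefficient $0$.

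This has two consequences. First, the intended pair $(\Lambda,\Lambda f)$ need not solve your system on $\Lambda_\DD(B)\times\Lambda_\DD(A+B)$, so you have not shown that a nonzero solution exists. Second, $R=\Lambda'f-N'$ need not lie in $\Lambda_\DD(A+B)$, so the distance bound in Step 3 does not apply. Enlarging $A$ to its Galois hull restores the inclusion, but it replaces $\deg A$ by a possibly much larger degree in the radius. It also means you misread the role of Hypothesis \ref{HYPO2}: the non-split place is there to produce \emph{Galois} divisors, not merely to control degrees.

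The repair is to put the locator on the right.
\begin{itemize}
\item A divisor $B$ supported on the unique place of $\LL$ above the non-split place is Galois. The proof of Lemma \ref{LAGBELG1}, which works verbatim for $\DD$, then gives $f\Lambda\in\Lambda_\DD(A+B)$.
\item Ask for $e_i\circ\bar\varepsilon_i(\Lambda)=0$, i.e.\ $\operatorname{im}\bar\varepsilon_i(\Lambda)\subset\ker e_i$. This is still $r\operatorname{rank}(e_i)$ conditions, and it is where your $K_i$ belong.
\item Solve $y_i\circ\bar\varepsilon_i(\Lambda)=\bar\varepsilon_i(N)$ and put $R=f\Lambda'-N'$, so that $\bar\varepsilon_i(R)=-e_i\circ\bar\varepsilon_i(\Lambda')$.
\item Fix $\deg B$ through $\rhoalgo$ rather than the unknown $t$. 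By Theorem \ref{ELEM}, any Galois $B$ with $\rhoalgo+g-1<\deg B<sr-\deg A-\rhoalgo$ gives $\lambda_\DD(B)>r\rhoalgo$ and $\deg(A+B)<sr-\rhoalgo$. This interval has length at least $2$.
\end{itemize}
Step 3 then goes through, and your ``main obstacle'' disappears. Every solution satisfies $N'=f\Lambda'$, so $\Lambda'=0$ forces $N'=0$, and $f=N'\Lambda'^{-1}$. Neither $\deg A>2g-2$ nor any duality is needed.

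Repaired this way, your argument is a genuinely different, Welch--Berlekamp-type route. The paper instead follows the syndrome-based algorithm of \cite[Section 8.5]{2}. It searches for a locator in $\Lambda_{\DD^\star}(A_1)$, with $A+A_1$ Galois, through the conditions $\langle\bar\varepsilon_D(fh),m\rangle=0$ for all $h\in\Lambda_{\DD^\star}(\can+\operatorname{CoNr}(D)-A-A_1)$. Proving that every such solution localizes the error uses the duality Theorem \ref{NONCAN}, the surjectivity in Theorem \ref{FUNDTHEO} (the extended Riemann--Roch theorem) and Lemma \ref{PAIRING}. The error $e$ is then recovered from syndrome equations (Proposition \ref{SYNEQ}). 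Your route is more elementary: it uses only the lower bound of Theorem \ref{ELEM}, the product rule and the distance bound of \cite{1}, and it returns the message directly. The price is a larger linear system. The paper's route is the one that puts its duality theory to work.
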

Our decoding algorithm is the analogue in the sum-rank metric to the classical decoding algorithm for AG codes in the Hamming metric described in \cite[Section 8.5]{2}.
We express the received word $m$ as the sum $m = c+e$ of the original codeword $c \in \operatorname{LAG}_{\DD}(A,D)$ and an error $e$ of sum-rank weight at most $\rhoalgo$.
We first determine a space $U$ localizing the error $e$ with sufficient precision, \ie satisfying $\operatorname{im}(e) \subset U$
and $\operatorname{dim}(U) < \ddesign(A)$.
Then, we solve the syndrome equations for $e$ restricted to the localizing space.

The fundamental theorem underlying our decoding algorithm is a Riemann--Roch Theorem on $\DD$ for what we call \emph{extended divisors}.
Those are pairs composed of a divisor $A$ away from $D$ and a product of vector subspaces $W := \bigoplus_{i=1}^{s} W_{i}$, where $W_{i} \subset V_{P_i}$.
To an extended divisor $(A, W)$, we associate its degree $\deg(A, W) = \deg A - \dim_\FF W$ and the Riemann--Roch space
$$\Lambda_{\DD}(A,W):=\left\{f \in \Lambda_{\DD}(A) \mid \bar\varepsilon_i(f)_{|W_{i}}=0 \quad \forall i \in \{1,\ldots,s\} \right\}.$$
We denote its dimension by $\lambda_{\DD}(A,W)$.
Besides, we define a canonical divisor $K$ related to $\DD$, whose degree is $2g - 2$.
Letting $\DD^\star = \frac{L[T; \theta]}{T^r - x^{-1}},$
our Riemann--Roch Theorem reads as follows.

\begin{TheoremIntro}[Theorem \ref{RROCHTHEO}]
	Let $(A,W)$ be an extended divisor. We have
	$$\lambda_{\DD}(A,W) = r\operatorname{deg}(A,W)+r(1-g)+\lambda_{\DD^\star}\big(\can + \operatorname{CoNr}(D) - A, W^\perp\big)$$
	where $\operatorname{CoNr}(D) = \sum_{i=1}^s \sum_{Q\mid P_i} Q$ and $W^\perp$ is the orthogonal of $W$ with respect to the trace bilinear form
	(see Subsection~\ref{RROmega}).

	Moreover, if $\operatorname{deg}(A,W)<0$, then we have $\lambda_{\DD}(A,W)=0$.
\end{TheoremIntro}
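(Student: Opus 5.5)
The plan is to reduce the statement to the Riemann--Roch theorem for ordinary divisors on $\DD$ (with the Serre duality $\Lambda_\DD(A)$ versus differentials, i.e.\ $\Lambda_{\DD^\star}(\can - A)$). We then measure how much the evaluation conditions cut down, using the trace pairing at the evaluation places.

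\textbf{Step 1: the case $W=0$.} Here $\Lambda_\DD(A,0)=\Lambda_\DD(A)$. We will establish
\[
\lambda_\DD(A)=r\deg A+r(1-g)+\lambda_{\DD^\star}(\can-A).
\]
The plan is to view $\DD$ as a free $\LL$-module of rank $r$ with basis $1,T,\dots,T^{r-1}$. The valuation condition defining $\Lambda_\DD(A)$ splits coordinatewise into ordinary Riemann--Roch spaces $L_\LL(A_j)$, where $A_j$ is a divisor of $\LL$ shifted by $\frac{j}{r}\operatorname{div}(x)$, suitably rounded. Summing classical Riemann--Roch over $j$ gives the main term, and this is how $g$ arises. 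The dual terms $\ell(K_\LL-A_j)$ are identified with the coordinates of $\Lambda_{\DD^\star}(\can-A)$: the map $T\mapsto T^{-1}$ intertwines $x$ with $x^{-1}$. Concretely, we use the pairing $\DD\times\DD^\star\to\KK$, $(f,g)\mapsto \Trd(f g)$, followed by residues. This is the Serre-duality step.

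\textbf{Step 2: general $W$.} Set $B=A-\operatorname{CoNr}(D)$. We have the inclusions
\[
\Lambda_\DD(B)\subset\Lambda_\DD(A,W)\subset\Lambda_\DD(A),
\]
and $\Lambda_\DD(A,W)/\Lambda_\DD(B)$ is the preimage, under the evaluation map
\[
\Lambda_\DD(A)/\Lambda_\DD(B)\hookrightarrow\prod_i\End_\FF(V_{P_i}),
\]
of the subspace $\prod_i\operatorname{Hom}(V_{P_i}/W_i,V_{P_i})$. Define
\[
E=\operatorname{im}\bigl(\Lambda_\DD(A)\to\textstyle\prod_i\End(V_{P_i})\bigr).
\]
This gives
\[
\lambda_\DD(A,W)=\lambda_\DD(B)+\dim\Bigl(E\cap\prod_i\operatorname{Hom}(V_{P_i}/W_i,V_{P_i})\Bigr).
\]
To compute that intersection, we identify $E^\perp$ under the trace form $(M,N)\mapsto\sum_i\operatorname{tr}(MN)$ on $\prod\End(V_{P_i})$ with the residue image of $\Lambda_{\DD^\star}(\can+\operatorname{CoNr}(D)-A)$ modulo $\Lambda_{\DD^\star}(\can-A)$. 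This is a residue theorem: the sum of local residues of $\Trd(fg)$ vanishes, and the local residue at $P_i$ equals the trace of the product of the evaluations. Exactness comes from dimension counting with Step 1. Linear algebra then gives
\[
\dim\bigl(E\cap H_W\bigr)=\dim H_W-\dim E^\perp+\dim\bigl(E^\perp\cap H_W^\perp\bigr),
\qquad H_W=\prod_i\operatorname{Hom}(V_{P_i}/W_i,V_{P_i}).
\]
Here $H_W^\perp$ consists of the endomorphisms whose image lies in $W^\perp$ (equivalently, that vanish on $W^\perp$, up to transposition conventions). Its preimage is exactly $\Lambda_{\DD^\star}(\can+\operatorname{CoNr}(D)-A,W^\perp)$. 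Substituting Step 1 for both $\lambda_\DD(B)$ and the $\DD^\star$ spaces, the terms telescope, using $\dim H_W=r(sr-\dim W)$ and $\deg\operatorname{CoNr}(D)=sr$. This yields the formula.

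\textbf{Vanishing claim.} If $\deg(A,W)<0$ and $f\neq0$ lies in $\Lambda_\DD(A,W)$, then the reduced norm of $f$ lies in a Riemann--Roch space of $\LL$ (or $\KK$) of degree $r\deg A$. The kernel conditions force extra zeros, since $\operatorname{rank}\bar\varepsilon_i(f)\le r-\dim W_i$ and the valuation of the norm at $P_i$ is at least $\dim W_i$. The norm would then be a nonzero function with more zeros than poles, which is a contradiction.

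\textbf{Main obstacle.} The hard part is the local residue compatibility: that the residue of $\Trd(fg)$ at an unramified $P_i$ equals $\operatorname{tr}\bigl(\bar\varepsilon_i(f)\bar\varepsilon_i^\star(g)\bigr)$. Together with it comes the exact identification of the orthogonal of $E$ (perfectness of the local pairing). I would verify this first in the split and inert cases separately, working in local coordinates with $T^r=x$ for $x$ a unit at $P_i$.
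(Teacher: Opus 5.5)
Your proposal is correct in outline, but it takes a genuinely different route from the paper.

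\textbf{The paper's route.} The paper follows Weil's adelic method. It first identifies $i(A,W)$ with $\dim \mathcal{A}_\DD/(\mathcal{A}_\DD(A,W)+\DD)$ (Proposition~\ref{ROCH}). It then proves an adelic Serre duality (Theorem~\ref{DUALITY}) pairing this quotient perfectly with $\Omega_{\DD^\star}(A,W)\simeq\Lambda_{\DD^\star}((A,W)^\star)$. Injectivity there uses that $\DD$ is a division algebra; surjectivity comes from a symmetric sandwich of inequalities. Only afterwards does it deduce the code duality of Theorem~\ref{NONCAN} from the extended theorem.

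\textbf{Your route.} You reverse this order. Your Step~1 is essentially Theorem~\ref{ELEM}; no pairing is needed there, since the identification is the rounding identity $\can_\LL-\pi_j(A)=\pi_j^\star(\can-A)$, which is where $\Delta$ enters. Applying it to $A$ and $A-\operatorname{CoNr}(D)$, with Lemma~\ref{VANISHING}, gives $\dim E+\dim E'=sr^2$, where $E'$ is the residue image. Once orthogonality is known, this yields $E^\perp=E'$, i.e.\ the content of Theorem~\ref{NONCAN}, before any extended Riemann--Roch is available. The identity $\dim(E\cap H_W)=\dim H_W-\dim E^\perp+\dim(E^\perp\cap H_W^\perp)$ then handles arbitrary $W$, and the bookkeeping does telescope to the stated formula.

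\textbf{What each buys.} Your argument stays in the finite-dimensional space $\mathcal H$ and avoids adeles. It never uses that $\DD$ is a division algebra for the identity itself. That hypothesis enters only through $\operatorname{Nrd}(f)\neq 0$ in your vanishing argument, which replaces the citation of \cite[Theorem~3.5]{1}. The bound $v_P(\operatorname{Nrd} f)\deg P\ge -\sum_{Q\mid P}A_Q\deg Q$ it needs follows from a weighted determinant estimate on right multiplication. Your route also shows that extended Riemann--Roch is equivalent to code duality. The paper's route yields, in addition, the adelic Serre duality for extended divisors as a structural result.

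\textbf{Points still to supply.}
\begin{enumerate}
\item Orthogonality needs the reduced residues at places $P\notin\{P_1,\ldots,P_s\}$ to vanish, not just the compatibility at the $P_i$. This is exactly where $\can=\can_\LL+\Delta$ and $\operatorname{Diff}(\LL/\KK)$ are used (second half of the proof of Lemma~\ref{WELLDEF2}).
\item $\can$ must be the specific $(\frac{dt_D}{t_D})_\DD$, so that the residue of $g\frac{dt_D}{t_D}$ at $P_i$ is $\bar\varepsilon_i(g)$.
\item $\Trd(fg)$ must be read as $\Trd(fg^\star)$. The local term at $P_i$ is then $\operatorname{Tr}(\bar\varepsilon_i(f)\circ\bar\varepsilon_i(g)^\star)$ by Lemma~\ref{COMMLEMMA}. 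That lemma works uniformly via $u_{P_i}$ for any unramified $P_i$: not only split or inert ones (partially split places occur), and without $x$ being a unit there.
\item $H_W^\perp$ consists of tuples vanishing on $W^\perp$ for the paper's pairing, or with image in $W$ for the plain trace form. It is not the tuples with image in $W^\perp$. Your conclusion about its preimage is nevertheless right.
\end{enumerate}
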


We show this theorem by extending the proof of the classical Riemann--Roch Theorem as presented in \cite[Chapter 1]{2} and in \cite[Chapter 7]{8}.
As an important ingredient of the proof, we define adele spaces associated to $\DD$ and differential spaces associated to $\DD^\star$, and
we show a Serre-type duality theorem, establishing a perfect pairing between them.
As in the classical case, this theorem brings along a canonical duality between linearized AG codes
and some linearized Differential codes defined by means of residues (see Definition~\ref{CODEDIFF}).

\begin{TheoremIntro}[Theorem~\ref{NONCAN}]
	For any divisor $A$ away from $D$ and a particular choice of $K$, we have the following equalities of codes:
	$$\operatorname{LAG}_{\DD}(A,D)^\perp  = \operatorname{LD}_{\DD^\star}(A,D) = \operatorname{LAG}_{\DD^\star}({\can+\operatorname{CoNr}(D)-A},D).$$
\end{TheoremIntro}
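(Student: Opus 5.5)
We will follow the classical proof of $C_\Omega(A,D)=C_L(A,D)^\perp$ (Stichtenoth, Thm.~2.2.7–2.2.8), transported to $\DD$ and $\DD^\star$ via the Serre duality and the Riemann--Roch Theorem~\ref{RROCHTHEO}. The duality on codes is taken for the pairing on $\prod_i \End_\FF(V_{P_i})$ given by $\sum_i \operatorname{Tr}(M_i N_i)$ (possibly twisted by the adjoint $V_{P_i}\to V_{P_i}^\vee$ coming from the trace form on $V_{P_i}$). We proceed in three steps: orthogonality, a dimension count, and the identification of $\LD_{\DD^\star}(A,D)$ with a linearized AG code.

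\textbf{Step 1: orthogonality.} Let $f\in\Lambda_\DD(A)$ and let $\omega$ be a differential of $\DD^\star$ in the space $\Omega_{\DD^\star}(A-\operatorname{CoNr}(D))$ defining $\LD_{\DD^\star}(A,D)$. The product $f\omega$ is then a differential with at worst simple poles at the places above the $P_i$, and it is regular elsewhere. By the residue theorem for $\DD^\star$-differentials (the sum over all places of the reduced residues vanishes), we get $\sum_i \Resrd_{P_i}(f\omega)=0$. We then check the local identity
\[
\Resrd_{P_i}(f\omega)=\operatorname{Tr}\big(\bar\varepsilon_i(f)\circ\res_{P_i}(\omega)\big),
\]
using the unramified hypothesis: locally, $\DD\otimes\KK_{P_i}$ is a matrix algebra acting on $V_{P_i}$, and $\omega$ is a uniformizer$^{-1}$ times a unit there. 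This yields $\LD_{\DD^\star}(A,D)\subset\LAG_\DD(A,D)^\perp$.

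\textbf{Step 2: dimensions.} The kernel of $\bar\varepsilon_D$ on $\Lambda_\DD(A)$ is $\Lambda_\DD(A,V)$ with $V=\bigoplus V_{P_i}$, so
\[
\dim\LAG = \lambda_\DD(A)-\lambda_\DD(A,V).
\]
Similarly, the kernel of the residue map on $\Omega_{\DD^\star}(A-\operatorname{CoNr}(D))$ is $\Omega_{\DD^\star}(A)$. Applying Theorem~\ref{RROCHTHEO} twice, once with $W=0$ and once with $W=V$, and noting that $\dim_\FF V=sr^2$, we obtain $\dim\LAG+\dim\LD=sr^2$. Here the term $\lambda_{\DD^\star}(K+\operatorname{CoNr}(D)-A,\cdot)$ is identified, via the Serre duality isomorphism $\omega\mapsto$ divisor shift $(\omega)=K$, with the dimension of the differential spaces. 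Combined with Step 1, this gives the first equality.

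\textbf{Step 3: second equality.} We choose $K=(\omega_0)$ for a differential $\omega_0$ of $\DD^\star$ whose local residues at every $P_i$ are the identity. Such an $\omega_0$ exists by weak approximation or strong approximation on the adele space, using the description of local differentials as $\DD^\star_{P}$-module generators. The map $f\mapsto f\omega_0$ then identifies $\Lambda_{\DD^\star}(K+\operatorname{CoNr}(D)-A)$ with $\Omega_{\DD^\star}(A-\operatorname{CoNr}(D))$, and the residue of $f\omega_0$ at $P_i$ equals $\bar\varepsilon_i(f)$. This gives
\[
\LD_{\DD^\star}(A,D)=\LAG_{\DD^\star}(K+\operatorname{CoNr}(D)-A,D).
\]

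\textbf{Main obstacle.} We expect the hardest part to be the local residue identity in Step 1, together with the existence of $\omega_0$ with identity residues in Step 3. Both require a careful local computation in the completed algebra at an unramified place, matching the residue map with the trace pairing on $\End_\FF(V_{P_i})$. For Step 3, we would first try to construct $\omega_0$ locally and then glue using the adelic surjectivity proved with Serre duality.
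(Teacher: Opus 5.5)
Your architecture matches the paper's: identify $\LD_{\DD^\star}(A,D)$ with $\LAG_{\DD^\star}(\can+\operatorname{CoNr}(D)-A,D)$ via a canonical differential, get one inclusion from the residue theorem, and conclude with Theorem~\ref{RROCHTHEO} for $W=0$ and $W=V_D$. Steps~2 and~3 are essentially right. Step~1, which is the actual duality, has two genuine gaps.

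\textbf{First gap: the product is undefined.} You have $f\in\DD$ and $\omega\in\Omega_{\DD^\star}$, and $\DD$, $\DD^\star$ are different algebras ($T^r=x$ versus $T^r=x^{-1}$), so $f\omega$ makes no sense. The missing ingredient is the adjunction $g\mapsto g^\star$, an anti-isomorphism between $\DD^\star$ and $\DD$, extended to differentials. The paper applies the residue theorem to $f\nu^\star\in\Omega_\DD$.

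The local identity then becomes $\Res_{P_i}(f\nu^\star)=\bar\varepsilon_i(f)\circ\Res_{P_i}(\nu)^\star$. This rests on $\bar\varepsilon_i(g^\star)=\bar\varepsilon_i(g)^\star$ for the trace form on $V_i$ (Lemma~\ref{COMMLEMMA}). That lemma is a computation with partial norms of $u_{P_i}$, using that $x\mapsto x^{-1}$ replaces $u_{P_i}$ by $u_{P_i}^{-1}$. One then applies $\Resrd_{P_i}=\operatorname{Tr}_\FF\circ\Res_{P_i}$ (Proposition~\ref{CRUCLEMMA}).

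So the adjoint is not optional. Your untwisted formula $\operatorname{Tr}(\bar\varepsilon_i(f)\circ\res_{P_i}(\omega))$ is not what comes out, and the duality holds for the pairing $\sum_i\operatorname{Tr}(\varphi_i\circ\psi_i^\star)$. For the untwisted pairing, the orthogonal of $\LAG_\DD(A,D)$ is instead the code of adjoints $(\Res_{P_i}(\nu)^\star)_i$. The local matrix-algebra picture you flag as the main obstacle is routine; this compatibility is the real content.

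\textbf{Second gap: the regularity claim is false.} The statement that $f\nu^\star$ is regular away from the $P_i$ fails when $A$ is not Galois. Take $P$ split into $Q_1,Q_2$, with $r=2$, $v_P(x)=0$ and $A=nQ_1$ near $P$. Write $\nu=g\otimes\eta$ with $\eta$ a unit at $P$. The $T$-coefficient of $fg^\star$ contains $f_1\theta(g_0)$. Here $w_{Q_1}(f_1)\geq -n$, while $w_{Q_1}(\theta(g_0))=w_{Q_2}(g_0)$ is only known to be $\geq 0$. So a pole of order $n$ at $Q_1$ can survive, and it does for explicit choices over $\FF'(t)/\FF(t)$. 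This is exactly why Lemma~\ref{LAGBELG1} needs a Galois divisor.

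What is true, and suffices, is $\Resrd_P(f\nu^\star)=0$. The reduced trace only sees the constant coefficient $\sum_j f_jg_j$. The componentwise bounds
$w_Q(f_j)\geq -A_Q-\frac jr w_Q(x)$ and $w_Q(g_j)+w_Q(\operatorname{Diff}(\LL/\KK))\geq A_Q+\frac jr w_Q(x)-\frac{b_P-1}{b_P}$
give, after rounding, $w_Q(f_jg_j)\geq -w_Q(\operatorname{Diff}(\LL/\KK))$. Hence $\operatorname{Tr}(f_jg_j)$ is integral at $P$; this is the argument in the proof of Lemma~\ref{WELLDEF2}.

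\textbf{Smaller points.}
\begin{itemize}
\item In Step~2, $\dim_\FF V_D=sr$, not $sr^2$. It is the term $r\dim_\FF V_D$ in Theorem~\ref{RROCHTHEO} that equals $sr^2$.
\item In Step~3 no approximation or gluing is needed. Take $\omega_0=\frac{dt_D}{t_D}$ with $t_D\in\KK$ a uniformizer at every $P_i$. Because $\omega_0$ is central, $(f\omega_0)_{\DD^\star}=(f)+\can$ and $\Res_{P_i}(f\omega_0)=\bar\varepsilon_i(f)$ follow directly from the definitions (Theorem~\ref{ISOGA}). A non-central $\omega_0$ glued from local generators need not give a divisor shift for non-Galois divisors.
\end{itemize}
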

We note that a similar theorem was proved in~\cite{34} for linearlized Reed--Solomon codes,
which correspond to the case where $\KK=\mathbb{F}_q(t)$, $\LL=\mathbb{F}_{q^r}(t)$ and $A$ is a scalar multiple of the point at infinity.
On the other hand, in this situation, the result of~\cite{34} encompasses a more general setting, as it includes Ore polynomials with derivations, and restrictions of the morphisms $\bar\varepsilon_D$ to subspaces.

\paragraph{Organisation of the paper.} 
The paper is structured as follows. In Section \ref{RRTHEOSEC}, we establish our Riemann--Roch Theorem.
In Section \ref{LAGCODEDUALITYANDDECODING}, we move to applications to coding theory, establishing first the duality theory for linearized AG codes, and then designing their decoding algorithm. Besides, in Appendix~\ref{SAGE}, we provide some results from a SageMath implementation of our algorithm.

\section{Riemann--Roch theorem on division algebras}\label{RRTHEOSEC}

We consider the setting of \cite{1}.
Let $\FF$ be an arbitrary field.
Let $\KK$ be an algebraic function field in one variable over its constant field $\FF$.
Let $\LL/\KK$ be a cyclic extension of degree $r \geq 1$.
We denote by $\theta$ a generator of its cyclic Galois group.

We denote by $\Pl_{\KK}$ and $\Pl_{\LL}$ the sets of places of $\KK$ and $\LL$, respectively. 
For a place $P \in \Pl_{\KK}$ (resp.~$Q \in \Pl_{\LL}$), we write $\KK_P$ (resp.~$\LL_Q$) for the completion of $\KK$ (resp.~$\LL$) at that place.
We also use $\LL_P$ (resp. $\DD_P$) to denote the completion of $\LL$ (resp. $\DD$) at a place $P \in \Pl_{\KK}$. 
We denote by $v_P$ (resp. $w_Q$) the normalized valuation of $\KK$ (resp. of $\LL$) at a place $P$ (resp. $Q$). We use the notation $Q|P$ to indicate that $Q$ is a place of $\LL$ above a place $P$ of $\KK$. 
Writing $e_P$ the ramification index of $Q|P$, we have $w_Q = e_P v_P$. 
We also use the notation $\mathcal{O}_{P}$ (resp.~$\mathcal{O}_{Q}$) for the valuation ring of $\KK_P$ (resp.~$\LL_Q$) and set 
$$\mathcal{O}_{\LL,P}:= \prod_{Q|P}\mathcal{O}_{Q}\subset \LL_P.$$

We consider the Ore polynomial ring $\LL[T;\theta]$, defined in \cite{24} as polynomials equipped with a twisted multiplication given by the rule $Ta= \theta(a)T$ for any $a \in \LL$.
We note that the ring $\LL[T;\theta]$ is left and right Euclidean as shown in \cite[Section 2]{24}.
Let $x \in \KK$. We define as in \cite{1} a cyclic algebra $\DD$ by
$$\DD:=\frac{\LL[T;\theta]}{T^r-x}.$$
Throughout the paper, we assume that $x$ is chosen in such a way that $\DD$ is a division algebra (see Hypothesis \textbf{(H1)} in \cite[Section IV]{1}).
For $P \in \Pl_\KK$, we set $\DD_P = \KK_P \otimes_\KK \DD$.

In the sequel, we will tacitly use the obvious isomorphisms
$$\begin{array}{rclcrcl}
  \DD & \tilde{\longrightarrow} & \LL^r & \quad\text{and}\quad & 
  \LL \otimes_{\KK} \KK_P & \tilde{\longrightarrow} & \prod_{Q|P} \LL_Q \\
  \sum_{i=0}^{r-1}\alpha_i T^i & \mapsto & (\alpha_i)_{i=0}^{r-1} & &
  y & \mapsto & (y_Q)_{Q|P}
\end{array}.$$
We will denote by $y_i \in \LL$ (resp.~$y_i \in \prod_{P\in\Pl_{\KK}}\LL\otimes_{\KK} \KK_{P}$) the $i$-th component of $y \in \DD$ (resp.~$y\in \prod_{P\in\Pl_{\KK}}\DD\otimes_{\KK} \KK_{P}$), induced by the first isomorphism, and by $y_Q \in \LL_Q$ the $Q$-th component of $y \in \LL \otimes_{\KK} \KK_P$ for $Q|P$, induced by the second isomorphism.

\begin{Remark}
	As shown in \cite[Theorem 32.20]{11}, 
	if $\FF$ is a finite field, all finite dimensional central simple algebras are cyclic and can be represented in the form $\frac{\LL[T;\theta]}{T^r-x}$. Therefore, in the context of application to linear codes over algebraic curves defined over a finite field of constants, our choice of $\DD$ is general. 
	However, for an arbitrary constant field $\FF$ or for a higher dimensional function field $\KK$, the decomposition $\frac{\LL[T;\theta]}{T^r-x}$ does not represent all central simple algebras over $\KK$ anymore.
\end{Remark}

The goal of this section is to prove a Riemann--Roch Theorem on $\DD$. Beyond its own interest, this result will be pivotal to design our decoding algorithm for linearized AG codes.

We mention that, following the Ph.D. thesis of Witt \cite{25}, Tamagawa \cite{14} and later Mattson \cite{28} showed a Riemann--Roch Theorem in the more general setting of central simple algebras over a function field.
However, our approach based on the decomposition $\DD=\frac{\LL[T;\theta]}{T^r-x}$, is more explicit and allows for effective duality with differential forms. In particular, this will be pivotal, in Subsection \ref{LAGCODEDUALITY}, to canonically characterize the dual of a linearized AG code as a linearized Differential code.

\subsection{The case of numerical divisors}\label{RRTHEO}

For any place $P\in \Pl_{\KK}$, we define the positive integer
$$b_{P}:=\frac{r}{\operatorname{gcd}(e_P v_P(x),r)}.$$
We note that we have $b_{P}=1$ at all places where $v_{P}(x)=0$, hence almost everywhere.

We define the group of \emph{divisors} of $\DD$ 
$$\operatorname{Div}(\DD) := \bigoplus_{P \in \Pl_{\KK}}\bigoplus_{Q | P} \frac{\mathbb{Z}}{b_{P}} Q.$$

The \emph{degree} of a divisor $A = \sum_Q A_Q Q$ is, by definition, $\deg A := \sum_Q A_Q\deg Q$.

We define the following $Q$-functions $\omega_{Q}$
$$\begin{array}{rcl}
\omega_{Q}: \quad
		\DD_P \simeq \sum_{j=0}^{r-1} \LL_P T^j     & \longrightarrow & \frac{\mathbb{Z}}{b_{P}}                                               \\
		y=\sum\limits_{j=0}^{r-1}({y_j}_{Q'})_{Q'|P} T^j & \mapsto & \min\limits_{0 \leq j <r} \left(w_Q({y_j}_Q) + j \frac{w_{Q}(x)}{r}\right)
	\end{array}$$
	and the gauge $$\begin{array}{rcl} \omega_P : \quad
		\DD_P     & \longrightarrow & \frac{\mathbb{Z}}{b_{P}e_P}                       \\
		(\alpha_Q)_{Q|P} & \mapsto & \min\limits_{Q|P}{\frac{\omega_{Q}({\alpha_Q})}{e_P}}
	\end{array}.$$
Note that the definition $\omega_{Q}$ corresponds to the definition $e_P \cdot w_{j,x}$ from \cite{1}.
\begin{Lemma}\label{SURMULT}
	The gauge $\omega_{P}$ is surmultiplicative, that is, for every $f,g \in \DD_P$ it holds $$ \omega_{P}(fg) \geq  \omega_{P}(f) + \omega_{P}(g).$$
\end{Lemma}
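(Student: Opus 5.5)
The plan is to rewrite $\omega_P$ as a weighted minimum over the $T$-adic coefficients, using one auxiliary function on $\LL_P$ that is invariant under $\theta$. Surmultiplicativity then reduces to a computation on monomials $aT^i$, followed by the ultrametric inequality.

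First, for $a=(a_Q)_{Q|P}\in \LL_P=\prod_{Q|P}\LL_Q$, set $\nu(a):=\min_{Q|P} w_Q(a_Q)/e_P$. Since $w_Q(x)=e_Pv_P(x)$ for every $Q|P$, unwinding the definitions of $\omega_Q$ and $\omega_P$ gives, for $y=\sum_{j=0}^{r-1}y_jT^j\in\DD_P$,
$$\omega_P(y)=\min_{0\le j<r}\Big(\nu(y_j)+j\,\frac{v_P(x)}{r}\Big).$$
I will check this first, since it is only a matter of swapping the two minima over $Q$ and over $j$. Next I record the basic properties of $\nu$. It satisfies the ultrametric inequality $\nu(a+b)\ge\min(\nu(a),\nu(b))$, because each $w_Q$ does. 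It satisfies $\nu(ab)\ge\nu(a)+\nu(b)$, because $w_Q(a_Qb_Q)=w_Q(a_Q)+w_Q(b_Q)$ componentwise and a minimum of sums is at least the sum of the minima. Finally, $\nu(x)=v_P(x)$ for $x\in\KK_P$.

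The key step, and the one I expect to need the most care, is the $\theta$-invariance $\nu(\theta(a))=\nu(a)$. The automorphism $\theta$ extends to $\LL_P=\LL\otimes_\KK\KK_P$ and permutes the factors $\LL_Q$: it sends $\LL_Q$ isomorphically onto $\LL_{\theta(Q)}$. It also preserves valuations, in the sense that $w_{\theta(Q)}(\theta(a)_{\theta(Q)})=w_Q(a_Q)$. Hence the multiset $\{w_Q(\cdot)\}_{Q|P}$ is only permuted, and its minimum is unchanged. This is precisely why $\omega_P$ takes the minimum over all $Q|P$ rather than using a single $\omega_Q$.

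Then I treat monomials. Let $a,b\in\LL_P$ and $0\le i,j<r$. Then $(aT^i)(bT^j)=a\,\theta^i(b)\,T^{i+j}$. If $i+j<r$, the $\omega_P$-value is
$$\nu(a\theta^i(b))+\frac{(i+j)v_P(x)}{r}\ \ge\ \nu(a)+\frac{i\,v_P(x)}{r}+\nu(b)+\frac{j\,v_P(x)}{r},$$
by $\theta$-invariance and the multiplicativity inequality. If $i+j\ge r$, the product equals $a\theta^i(b)x\,T^{i+j-r}$, and its value is
$$\nu(a\theta^i(b))+v_P(x)+\frac{(i+j-r)v_P(x)}{r},$$
which is the same expression, so the same bound holds. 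Finally, for general $f=\sum f_iT^i$ and $g=\sum g_jT^j$, expand $fg=\sum_{i,j}(f_iT^i)(g_jT^j)$ and collect terms by the resulting power of $T$. The ultrametric inequality for $\nu$ shows that $\omega_P(fg)\ge\min_{i,j}\omega_P\big((f_iT^i)(g_jT^j)\big)$. This is at least $\min_i\omega_P(f_iT^i)+\min_j\omega_P(g_jT^j)=\omega_P(f)+\omega_P(g)$.
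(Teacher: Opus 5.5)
Your proof is correct and is essentially the paper's argument. Like the paper, you expand $fg$, use the ultrametric inequality to reduce to the individual terms $f_k\theta^k(g_l)T^{k+l}$, use that $\theta$ permutes the places above $P$ so the minimum over $Q\mid P$ absorbs the shift (the paper tracks this with the index shift $Q_j\mapsto Q_{j+k}$), and treat the wrap-around $T^r=x$ separately, exactly as in the paper's case split $i_0\geq r$ versus $i_0<r$; your $\theta$-invariant function $\nu$ and reduction to monomials are a cleaner packaging of the same computation.
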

\begin{proof}
	Let $P \in \Pl_{\KK}$. For $f,g \in \DD_P$, we have
	$$\omega_P(fg)=\omega_P\left(\sum_{i=0}^r f_{i}T^{i} \sum_{i=0}^r g_{i} T^{i}\right)= \omega_P\left(\sum_{i=0}^r \sum_{k=0}^i f_{k}\theta^{k}(g_{i-k}) T^{i} + \sum_{i=r}^{2r-1} \sum_{k=i-r}^r f_{k}\theta^{k}(g_{i-k}) T^{i-r} \right).$$
	Let $i_0$ be the index of a minimal summand.
	We let  $Q_1, \ldots, Q_{m_P}$ be the places above $P$, indexed such that $\Phi(Q_j) = Q_{(j+1) \% m_P}$.
	We distinguish the two cases $i_0 \geq r$ and $0\leq i_0 < r$.
	In the first case, we have $$\begin{aligned} \omega_P(fg)&\geq \min_j\left(\frac{w_{Q_j}\left(\sum_{0\leq k \leq i_0} f_{k}\theta^{k}(g_{i_0-k})   \right)}{e_P}+\frac{(i_0-r) v_P(x)}{r}\right)  \\
        &\geq \min_j \min_{0\leq k \leq i_0}\left(\frac{w_{Q_j}\left(x f_{k}\theta^{k}(g_{i_0-k})\right)}{e_P}+\frac{(i_0-r) v_P(x)}{r}\right) \\
	& \geq \min_j \min_{0\leq k \leq i_0}\left(\frac{w_{Q_j}(f_{k})}{e_P}+\frac{k v_P(x)}{r}+\frac{w_{Q_{(j+k)\%m_P}}(g_{i_0-k})}{e_P}+\frac{(i_0-k)  v_P(x)}{r}\right),\end{aligned}$$
	which is greater than $ \omega_P(f) +\omega_P(g)$.
	In the second case, we have $$\begin{aligned} \omega_P(fg) &\geq \min_j\left(\frac{w_{Q_j}\left(\sum_{0\leq k \leq i_0} f_{k}\theta^{k}(g_{i_0-k})  \right)}{e_P}+\frac{i_0  v_P(x)}{r}\right) \\
        &\geq \min_j \min_{0\leq k \leq i_0}\left(\frac{w_{Q_j}\left(f_{k}\theta^{k}(g_{i_0-k})\right)}{e_P}+\frac{i_0  v_P(x)}{r}\right)\\
	&\geq \min_j \min_{0\leq k \leq i_0}\left(\frac{w_{Q_j}(f_{k})}{e_P}+\frac{k v_P(x)}{r}+\frac{w_{Q_{(j+k)\%m_P}}(g_{i_0-k})}{e_P}+\frac{(i_0-k)  v_P(x)}{r}\right),\end{aligned}$$
	which is also greater than $ \omega_P(f) +\omega_P(g)$.
\end{proof}

Using the surmultiplicativity property of the gauge from Lemma \ref{SURMULT}, we can state the following definition.
\begin{Definition}\label{def:gaugeorders}
	For any place $P\in\Pl_{\KK}$, we define the order $\Lambda_P$ as  $$\Lambda_P:=\{f \in \DD_P \mid \omega_P(f)\geq 0 \}.$$
\end{Definition}

\begin{Definition}
	The \emph{principal divisor} of a function $f \in \DD$ is
	$(f) = \sum_{Q \in \Pl_{\LL}} \omega_Q(f) Q.$
\end{Definition}

\begin{Definition}[\protect{\cite[Definition 2.3]{1}}]
	For any $A \in \operatorname{Div}(\DD)$, the \emph{Riemann--Roch space} associated to $A$ is 
	$$\Lambda_{\DD}(A):=\left\{f \in \DD \mid \omega_{Q}(f)+A_{Q} \geq 0 \quad \forall Q \in \Pl_{\LL}  \right\}.$$
	We denote by $\lambda_{\DD}(A)$ its  dimension. 
\end{Definition}

The aim of our Riemann--Roch Theorem is to relate $\lambda_{\DD}(A)$ to the dimension of another Riemann--Roch space.
We now introduce the necessary notions to define the parameters corresponding to this second Riemann--Roch space.

\begin{Definition}\label{DELTADEF}
	The \emph{different divisor} $\Delta$ of $\DD$ (relative to the gauges $\omega_P$) is
        $$\Delta:=\sum\limits_{P \in \Pl_{\KK}}\sum\limits_{Q|P} \frac{b_{P}-1}{b_{P}} Q.$$
\end{Definition}

\begin{Definition}\label{DEFK}
	A \emph{canonical divisor of $\DD$} is
	$\can := \can_\LL+ \Delta$
        where $\can_\LL$ is a canonical divisor of~$\LL$.
\end{Definition}
\begin{Definition}\label{def:genus}
	The \emph{genus} of $\DD$ (relative to the gauges $\omega_P$) is the rational number $g$ defined by $2g - 2 =\operatorname{deg}(\can)$.
\end{Definition}
The genus $g$ does not depend on the choice of $\can$ as $\operatorname{deg}(\can_{\LL})$ does not either.

\begin{Definition}\label{DEFADJOINTALGEBRA}
	The \emph{adjoint algebra} of $\DD$ is
	$${\DD}^\star := \frac{\LL[{T};\theta]}{{T}^r-x^{-1}}.$$
\end{Definition}
\begin{Definition}\label{DEFADJUNCTION}
	We define the \emph{adjunction} on $\DD$ as the following involution: $$\begin{array}{rcl}
			\DD & \longrightarrow & {\DD}^\star \\
			f=\sum\limits_{i=0}^{r-1}\alpha_i T^i  & \mapsto & {f}^{\star}= \sum\limits_{i=0}^{r-1} {T}^{-i}\alpha_i
                        \quad \text{(with } \alpha_i \in \LL\text{)}.
		\end{array}$$ 
\end{Definition}
We readily check that $f \mapsto f^\star$ is an anti-isomorphism, \ie it satisfies $(fg)^\star = g^\star f^\star$, and
that $\omega_P(f) = \omega_P(f^\star)$ for any place $P \in \Pl_\KK$.
We define the maps
	$$\begin{array}{rclcrcl}
			\operatorname{Div}(\DD)              & \stackrel{\pi_i}{\longrightarrow} & \operatorname{Div}(\LL) & \quad ; \quad & 
			\operatorname{Div}(\DD^\star)        & \stackrel{\pi^\star_i}{\longrightarrow} & \operatorname{Div}(\LL)  \\
			\sum\limits_{Q \in \Pl_{\LL}} A_{Q} Q & \mapsto & \sum\limits_{Q \in \Pl_{\LL}}\lfloor A_{Q}+i \frac{w_{Q}(x)}{r} \rfloor Q & &
			\sum\limits_{Q \in \Pl_{\LL}} A_{Q} Q & \mapsto & \sum\limits_{Q \in \Pl_{\LL}}\lfloor A_{Q}-i \frac{w_{Q}(x)}{r} \rfloor Q
		\end{array}$$
		where $\operatorname{Div}(\DD^\star):=\operatorname{Div}(\DD)$. 
In the sequel, we shall use ${\pi_i}$ and ${\pi^\star_i}$ in combination with the isomorphisms below as in \cite[Equation (5)]{1}, and write
\begin{equation}\label{ISOLBDA}
	\Lambda_{\DD}(A) \simeq \oplus_{i=0}^{r-1}L(\pi_i(A)) \quad \text{ and }  \quad \Lambda_{\DD^\star}(A)\simeq \oplus_{i=0}^{r-1}L(\pi^\star_i(A)),
\end{equation}
where $L(-)$ denotes the classical Riemann--Roch space of $\LL$ associated to the divisor in argument.

\begin{Theorem}[Riemann--Roch Theorem]\label{ELEM}
	For any $A \in \operatorname{Div}(\DD)$ and any choice of a canonical divisor $\can$, we have
	\begin{equation*}\lambda_{\DD}(A)=r \cdot \operatorname{deg}{A}+r(1-g) +\lambda_{\DD^\star}({\can-A}).\end{equation*}
\end{Theorem}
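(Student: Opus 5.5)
The plan is to reduce everything to the classical Riemann--Roch theorem on $\LL$ through the decompositions \eqref{ISOLBDA}. Writing $g_\LL$ for the genus of $\LL$, these give
$$\lambda_{\DD}(A)=\sum_{i=0}^{r-1}\ell(\pi_i(A)),$$
where $\ell$ denotes the classical dimension. Classical Riemann--Roch on $\LL$, applied with the same $\can_\LL$ that enters $\can=\can_\LL+\Delta$, gives for each $i$
$$\ell(\pi_i(A))=\deg\pi_i(A)+1-g_\LL+\ell(\can_\LL-\pi_i(A)).$$
Summing over $i$, it then suffices to prove two identities:
\begin{enumerate}[(a)]
\item $\sum_{i=0}^{r-1}\deg\pi_i(A)+r(1-g_\LL)=r\deg A+r(1-g)$;
\item $\can_\LL-\pi_i(A)=\pi_i^\star(\can-A)$ for every $0\le i<r$.
\end{enumerate}
Identity (b) yields $\sum_i \ell(\can_\LL-\pi_i(A))=\lambda_{\DD^\star}(\can-A)$, again by \eqref{ISOLBDA}.

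For (a), I work place by place. Fix $Q|P$ and write $w_Q(x)/r=m/b$ with $b=b_P$ and $\gcd(m,b)=1$; this is exactly how $b_P$ is defined. The coefficient $A_Q$ lies in $\frac{1}{b}\mathbb{Z}$, so each $y_i:=A_Q+i\,m/b$ also lies in $\frac1b\mathbb{Z}$. As $i$ runs over $0,\dots,r-1$, the fractional part of $y_i$ takes each value $k/b$, $0\le k<b$, exactly $r/b$ times. This holds because $b\mid r$ and multiplication by $m$ permutes $\mathbb{Z}/b$. Hence
$$\sum_{i}\lfloor y_i\rfloor=rA_Q+\frac{r(r-1)}{2}\cdot\frac{w_Q(x)}{r}-\frac{r}{b}\cdot\frac{b-1}{2}.$$
Multiplying by $\deg Q$ and summing over $Q$, the middle term vanishes because $\sum_Q w_Q(x)\deg Q=\deg(x)_\LL=0$ (a principal divisor has degree $0$). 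The last term sums to $-\tfrac r2\deg\Delta$. Since $2g-2=2g_\LL-2+\deg\Delta$, we have $r(1-g)=r(1-g_\LL)-\tfrac r2\deg\Delta$, and (a) follows.

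For (b), I use the elementary fact that $-\lfloor y\rfloor=\lfloor -y+\frac{b-1}{b}\rfloor$ for every $y\in\frac1b\mathbb{Z}$. Applying it with $y=A_Q+i\,w_Q(x)/r$ and adding the integer $(\can_\LL)_Q$ gives
$$(\can_\LL)_Q-\lfloor y\rfloor=\Big\lfloor (\can_\LL)_Q+\tfrac{b_P-1}{b_P}-A_Q-i\tfrac{w_Q(x)}{r}\Big\rfloor.$$
This is precisely the $Q$-coefficient of $\pi_i^\star(\can-A)$, because the coefficient of $\can-A$ at $Q$ is $(\can_\LL)_Q+\frac{b_P-1}{b_P}-A_Q$.

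The only delicate step is the fractional-part bookkeeping in (a). It needs the equidistribution of $i\,m/b$ modulo $1$ and the vanishing of $\deg(x)$ to cancel the linear term; once that is in place, everything else is formal.
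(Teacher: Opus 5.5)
Your proof is correct and takes the same route as the paper: you sum the classical Riemann--Roch identities for the $\pi_i(A)$, use $g_\LL = g - \frac{\deg\Delta}{2}$, and identify $\can_\LL - \pi_i(A)$ with $\pi_i^\star(\can - A)$ via $-\lfloor a\rfloor = \lfloor \frac{b_P-1}{b_P} - a\rfloor$. The only difference is that you prove $\sum_i \deg\pi_i(A) = r\deg A - \frac{r}{2}\deg\Delta$ yourself, using the equidistribution of the fractional parts and $\deg (x)_\LL = 0$, whereas the paper cites \cite[Lemma 2.5]{1} for it.
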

\begin{proof}\label{proof}
	Applying the classical Riemann--Roch theorem on $\LL$ \cite[Theorem 1.5.15]{2}, we have for any choice of $\can_\LL$ and for any $i \in \{0,\ldots,r-1\} $
	\begin{equation}\label{Ri} \operatorname{dim} \rr\left(\pi_i(A)\right)=\operatorname{deg}\pi_i(A)+1-g_{\LL}+\operatorname{dim} \rr\left(\can_\LL-\pi_i(A)\right).\end{equation}
	We sum the Equations \eqref{Ri} over the indices $i$ and consider each summand of the equation separately.
	First, using Equation \eqref{ISOLBDA}, we get $\sum_{i=0}^{r-1}\operatorname{dim}\rr(\pi_i(A))=\lambda_{\DD}(A)$.
		Secondly, from \cite[Lemma 2.5]{1}, we have $\sum_{i=0}^{r-1} \operatorname{deg}(\pi_i(A)) = r\operatorname{deg}A -r\frac{\operatorname{deg}(\Delta)}{2}$. Thirdly, since $g_{\LL}=g - \frac{\operatorname{deg}{\Delta}}{2}$,  we have $\sum_{i=0}^{r-1} (1-g_{\LL}) = r(1-g)+r\frac{\operatorname{deg}(\Delta)}{2}$.
Finally, we set $a = A_{Q}+i \frac{w_{Q}(x)}{r}$. As $a$ belongs to $\frac{\mathbb{Z}}{b_P}$, we get $\left\lfloor\frac{b_{P}-1}{b_{P}}-(a-\lfloor a\rfloor)\right\rfloor =0  $.
		      Now, summing $-\lfloor a\rfloor =\left\lfloor\frac{b_{P}-1}{b_{P}}-a\right\rfloor$ over all places $Q \in \Pl_{\LL}$, we get $-\pi_i(A)=\pi_i^\star(\Delta-A)$.
		      Thus, we get $\can_\LL-\pi_i(A)= \can_\LL+\pi_i^\star(\Delta-A)$. 
			  As $\can_\LL$ belongs to $\operatorname{Div}(\LL) \subset \operatorname{Div}(\DD)$, we have 
			  $\can_\LL+ \pi_i^\star(\Delta-A)=\pi_i^\star({\can-A})$.
		      Hence, using Equation \eqref{ISOLBDA}, we retrieve $\sum_{i=0}^{r-1}\operatorname{dim}\rr\left(\can_\LL-\pi_i(A)\right)=\sum_{i=0}^{r-1}\operatorname{dim}\rr(\pi^\star_i({\can-A}))=\lambda_{\DD^\star}({\can-A})$. 

	Summing the four terms, we recover the statement.
\end{proof}

\subsection{Riemann theorem for extended divisors} \label{RAPPEL}

We now aim at proving an extension of Theorem~\ref{ELEM} allowing for more general divisors, that we call ``extended divisors'' (see Definition \ref{EXTDIV} below).

For this, we fix $\{P_1,\ldots,P_s\}$, a finite set of $s$ rational \emph{unramified places} of $\KK$ indexed by $[s]:=\{1,\ldots,s\}$
and set $D:=P_1+\cdots+P_s \in \operatorname{Div}(\KK)$.
We assume moreover that all $P_i$ satisfy the hypothesis \Htpl{P_i} of \cite[Section IV A]{1}, \ie that there exist elements $u_Q \in \LL_Q^\times$ such that, for all $Q \in \Pl_{\LL}$ above $P_i$, we have
	\begin{equation}
		w_{Q}\left(u_{Q}\right) =\frac{w_{Q}(x)}{r}
                \quad \text{and} \quad
		x                       =\left(\operatorname{Nr}_{\LL_{Q} / \KK_{P_i}}\left(u_{Q}\right)\right)_{Q|P} \in \LL_P. \label{H2PREC}
	\end{equation}
If we assume $\FF$ to be finite, a sufficient condition for this to hold is that $r$ divides $v_{P_i}(x)$ (see \cite[Lemma 6]{1}).

We set $u_{P_i} :=(u_Q)_{Q|P_i} \in \LL_{P_i}$ and recall from~\cite[Section IV A]{1} that there is an isomorphism of $\KK_{P_i}$-algebras
$\varepsilon_{i} : \DD_{P_i} \to \operatorname{End}_{\KK_{P_i}}(\LL_{P_i})$, $f(T) \mapsto f(u_{P_i}\theta)$.
Moreover, its restriction to $\Lambda_{P_i}$ induces a second isomorphism
\begin{equation} \varepsilon_{i|\Lambda_{P_i}}:  \Lambda_{P_i} \tilde{\longrightarrow} \operatorname{End}_{\mathcal{O}_{P_i}}(\mathcal{O}_{\LL,{P_i}}).\end{equation}
For $i\in[s]$, we define the \emph{residual algebra} at $P$ by $V_i:= \mathcal{O}_{\LL,P_i}/P_i\mathcal{O}_{\LL,P_i}$ and set
$V_D:= \oplus_{i=1}^s V_{i}$.
Since $P_i$ is a rational place, we have $\dim_\FF V_i = r$ and so $\operatorname{dim}_{\FF} V_D= sr$.

	We denote by $\operatorname{Div}(\DD,D)$ the subgroup of divisors of $\DD$ away from $D$, that~is, $$\operatorname{Div}(\DD,D):=\{A \in \operatorname{Div}(\DD) \mid A_Q =0 \quad \forall Q | P_i \quad \forall i \in  [s]\}.$$

\begin{Definition}	\label{EXTDIV}
	An \emph{extended divisor} $(A,W)$ is a pair composed of a divisor $A \in \operatorname{Div}(\DD,D)$ and a direct sum of $\FF$-vector subspaces $W :=\bigoplus_{i=1}^s W_{i} \subset V_{D}$, where $W_{i} \subset V_{i}$.

	Its \emph{degree} is $\operatorname{deg}(A,W):=\operatorname{deg}A-\operatorname{dim}_{\FF}(W)$.
\end{Definition}
We recall the definition of the space of adeles $\mathcal{A}_{\LL}(A)$ of $\LL$ for a divisor $A= \sum_{Q \in \Pl_{\LL}} A_Q Q \in \operatorname{Div}(\LL)$ as in \cite[Definition 1.5.3]{2}:
$$\mathcal{A}_{\LL}(A):= \left\{(y_Q)_Q \in  \prod\limits_{Q \in \Pl_{\LL}}\LL_Q \, \middle\vert\, w_{Q}({y}_Q)+A_Q \geq 0  \right\}.$$
We also let $\mathcal{A}_{\LL}$ be the union of all $\mathcal{A}_{\LL}(A)$. It is also the set of sequences $(y_Q)_Q$ such that $w_{Q}({y}_Q) \geq 0$ for almost all $Q$.

\begin{Definition}
        For a divisor $A= \sum_{Q \in \Pl_{\LL}} A_Q Q \in \operatorname{Div}(\DD)$, we define
	$$\mathcal{A}_{\DD}(A)  :=\left\{(\alpha_P)_P \in \prod\limits_{P \in \Pl_{\KK}}\DD_{P}   \, \middle\vert\, \omega_{Q}({\alpha}_P) + A_Q \geq 0 \quad  \forall P \quad  \forall Q | P\right\}$$
        and let $ \mathcal{A}_{\DD}$ be the union of all $\mathcal{A}_{\DD}(A)$.
\end{Definition}
Note that by definition, we have $\Lambda_{\DD}(A)= \mathcal{A}_{\DD}(A) \cap \DD$.
\begin{Lemma}\label{TRIVISO}
	We have the following isomorphism of $\mathcal{A}_\KK$-algebras $$\mathcal{A}_{\DD} \simeq \oplus_{i=0}^{r-1}\mathcal{A}_{\LL}T^i.$$
\end{Lemma}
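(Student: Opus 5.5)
The plan is to build the isomorphism place by place and then check that the adelic finiteness condition matches on both sides. For each $P \in \Pl_\KK$ the completed algebra satisfies
$$\DD_P = \KK_P \otimes_\KK \DD \simeq \bigoplus_{i=0}^{r-1} (\KK_P\otimes_\KK \LL)\, T^i \simeq \bigoplus_{i=0}^{r-1} \Big(\prod_{Q|P}\LL_Q\Big) T^i,$$
using the tacit isomorphisms fixed at the start of the section. Taking the product over all places yields a $\prod_P \KK_P$-linear bijection
$$\Phi:\ \prod_{P}\DD_P \;\tilde{\longrightarrow}\; \bigoplus_{i=0}^{r-1}\Big(\prod_{Q\in\Pl_\LL}\LL_Q\Big)T^i,\qquad (\alpha_P)_P \mapsto \sum_i \big((\alpha_{P,i})_Q\big)_Q\, T^i .$$
Here $\sum_i \prod_Q \LL_Q\,T^i$ carries the twisted multiplication $T a = \theta(a) T$ and $T^r = x$. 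The action of $\theta$ on $\prod_{Q|P}\LL_Q$ is the one induced on $\LL\otimes_\KK\KK_P$, which permutes the factors $\LL_Q$. Since $\Phi$ is built from the algebra structure of each $\DD_P$, it is multiplicative. It is also linear over $\prod_P \KK_P$, and hence over the subring $\mathcal{A}_\KK$.

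The main step is to show that $\Phi$ restricts to a bijection $\mathcal{A}_\DD \to \bigoplus_i \mathcal{A}_\LL T^i$, i.e.\ to compare the two finiteness conditions. Take $(\alpha_P)_P\in\mathcal{A}_\DD(A)$ with $A\in\operatorname{Div}(\DD)$. By definition of $\omega_Q$, for every $Q$ and every $j$ we have
$$w_Q\big((\alpha_{P,j})_Q\big) \;\geq\; -A_Q - j\,\frac{w_Q(x)}{r}.$$
Taking floors, the component $\big((\alpha_{P,j})_Q\big)_Q$ lies in $\mathcal{A}_\LL(\pi_j(A))$, since $\lceil -a\rceil = -\lfloor a\rfloor$. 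Conversely, suppose each component lies in $\mathcal{A}_\LL(B_j)$ for some $B_j\in\operatorname{Div}(\LL)$. Choose $A_Q := \max_j\big(B_{j,Q} - j\,w_Q(x)/r\big)$, which lies in $\frac{\mathbb{Z}}{b_P}$. This divisor is finitely supported because $w_Q(x)=0$ and $B_{j,Q}=0$ for almost all $Q$. With this choice $\omega_Q(\alpha_P)+A_Q\geq 0$ everywhere, so the element lies in $\mathcal{A}_\DD(A)$. Put together, $\mathcal{A}_\DD(A)\simeq \bigoplus_i \mathcal{A}_\LL(\pi_i(A))T^i$, the adelic analogue of \eqref{ISOLBDA}, and taking unions gives the claim.

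The step needing most care is that multiplication preserves $\bigoplus_i\mathcal{A}_\LL T^i$, i.e.\ that $\mathcal{A}_\DD$ is genuinely a subalgebra. I would handle this directly on the right-hand side. The map $\theta$ preserves $\mathcal{A}_\LL$, because it permutes the places above each $P$ and preserves the valuations accordingly. Also $x\in\KK\subset\mathcal{A}_\LL$, and $\mathcal{A}_\LL$ is a ring. Hence products of the form $a T^i\cdot bT^j = a\theta^i(b)T^{i+j}$, reduced using $T^r=x$, stay in the image, and $\Phi$ is the claimed isomorphism of $\mathcal{A}_\KK$-algebras.
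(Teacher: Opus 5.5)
Your proposal is correct and follows the paper's route. The paper's one-line proof is the place-by-place identification $\DD_P \simeq \bigoplus_i \big(\prod_{Q|P}\LL_Q\big)T^i$ combined with the finiteness of the support of $(x)$, which is exactly what you use to match the two restricted-product conditions; your explicit valuation bounds and your check that $\theta$ and $x$ preserve $\mathcal{A}_\LL$ spell out details the paper leaves implicit.
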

\begin{proof}
	The support of $(x)$ in $\Pl_{\LL}$ being finite,
	we conclude using the isomorphism $\LL\otimes_{\KK} \KK_P \simeq \prod_{Q|P}\LL_Q $.
\end{proof}
In the sequel, we will use that $\LL$ (resp.~$\DD$) injects diagonally in $\mathcal{A}_{\LL}$ (resp.~$\mathcal{A}_{\DD}$).
\begin{Lemma}\label{ADELEVAL}
	Let $A \in \operatorname{Div}(\DD)$.
	We have the following isomorphisms of $\FF$-vector spaces 
        $$\mathcal{A}_{\DD}(A) \simeq \oplus_{i=0}^{r-1}\mathcal{A}_{\LL}(\pi_i(A))
        \quad\text{and}\quad
        \frac{\mathcal{A}_{\DD}}{\mathcal{A}_{\DD}(A) + \DD}  \simeq \oplus_{i=0}^{r-1}\frac{\mathcal{A}_{\LL}}{ \mathcal{A}_{\LL}(\pi_i(A))+\LL}.$$
\end{Lemma}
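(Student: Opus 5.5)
The plan is to make Lemma~\ref{TRIVISO} explicit, check it coordinate by coordinate, and then pass to the quotients. By Lemma~\ref{TRIVISO}, every adele $\alpha = (\alpha_P)_P \in \mathcal{A}_\DD$ can be written uniquely as $\alpha = \sum_{i=0}^{r-1} \alpha_i T^i$ with $\alpha_i \in \mathcal{A}_\LL$. Here $\alpha_i = ((\alpha_{P})_{i,Q})_{Q}$ is the adele of $\LL$ obtained by gathering, over all $P$ and all $Q|P$, the $Q$-components of the $i$-th coefficient of $\alpha_P$. We denote this isomorphism by $\Psi : \mathcal{A}_\DD \to \bigoplus_{i=0}^{r-1}\mathcal{A}_\LL$. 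It is $\FF$-linear, and it is even $\mathcal{A}_\KK$-linear, because $\KK_P$ is central and acts coefficientwise.

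For the first isomorphism, we unwind the definition of $\omega_Q$. For $\alpha\in\mathcal{A}_\DD$ and $Q|P$, the condition $\omega_Q(\alpha_P) + A_Q \ge 0$ holds if and only if, for every $0\le i<r$,
\[ w_Q\bigl((\alpha_i)_Q\bigr) + i\tfrac{w_Q(x)}{r} + A_Q \geq 0 .\]
Since $w_Q((\alpha_i)_Q)$ is an integer, this is equivalent to $w_Q((\alpha_i)_Q) + \lfloor A_Q + i\frac{w_Q(x)}{r}\rfloor \ge 0$, that is, to $w_Q((\alpha_i)_Q) + \pi_i(A)_Q \ge 0$. Taking this over all $Q$ gives $\Psi(\mathcal{A}_\DD(A)) = \bigoplus_i \mathcal{A}_\LL(\pi_i(A))$. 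This step is the same floor-function argument that underlies Equation~\eqref{ISOLBDA}.

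For the second isomorphism, we check that $\Psi$ maps the diagonal copy of $\DD$ onto $\bigoplus_i \LL T^i$, with $\LL$ embedded diagonally in $\mathcal{A}_\LL$. This follows because the decomposition $\DD\simeq\LL^r$ is compatible with the identifications $\DD_P \simeq \bigoplus_i \LL_P T^i$ and $\LL_P\simeq\prod_{Q|P}\LL_Q$. Since $\Psi$ is an isomorphism respecting the direct sum decomposition, we get
\[ \Psi\bigl(\mathcal{A}_\DD(A)+\DD\bigr) = \bigoplus_i \bigl(\mathcal{A}_\LL(\pi_i(A)) + \LL\bigr). \]
Passing to quotients then yields the second isomorphism.

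The only delicate point is bookkeeping. One must confirm that the coordinates used in the definition of $\omega_Q$ agree with those of Lemma~\ref{TRIVISO}, i.e.\ that no twist by $\theta$ appears when $\DD$ is embedded into $\prod_P \DD_P$. This holds because the twist only enters through multiplication, and no multiplication is involved in these statements. The lemma is purely additive.
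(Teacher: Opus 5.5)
Your proof is correct and follows the same route as the paper. It transports everything through the coefficientwise isomorphism of Lemma~\ref{TRIVISO}, identifies $\mathcal{A}_\DD(A)$ with $\bigoplus_i \mathcal{A}_\LL(\pi_i(A))$ by the same floor-function argument as for Equation~\eqref{ISOLBDA}, notes that the diagonal copy of $\DD$ maps onto $\bigoplus_i \LL T^i$, and passes to quotients, merely spelling out the integrality step and the coordinate-compatibility of $\DD \hookrightarrow \prod_P \DD_P$ that the paper leaves implicit.
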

\begin{proof}
	The first isomorphism is analogue to the isomorphism of Equation \eqref{ISOLBDA}.
	The isomorphism of Lemma \ref{TRIVISO}  and the isomorphism $\mathcal{A}_{\DD}(A) + \DD \simeq \oplus_{i=0}^{r-1}(\mathcal{A}_{\LL}(\pi_i(A))+\LL) $ together with the natural injection $\mathcal{A}_{\DD}(A) + \DD \hookrightarrow \mathcal{A}_{\DD}$ induced by the diagonal embedding of $\DD$
	in $\mathcal{A}_{\DD}$ allow us to build the quotients and prove the second isomorphism. 
\end{proof}
We recall the definition of the evaluation map at $P_i \in \Pl_{\KK}$ for any divisor $A \in \operatorname{Div}(\DD,D)$ as the following composition of morphisms according to \cite[Definition 3.3]{1}:
\begin{equation}\label{EPS}
\bar\varepsilon_i: \Lambda_{\DD}(A) \hookrightarrow  \Lambda_{P_i} \overset{\varepsilon_{i}}{\tilde{\longrightarrow}} \operatorname{End}_{\mathcal{O}_{{P_i}}}(\mathcal{O}_{\LL,{P_i}}) \twoheadrightarrow \operatorname{End}_{\FF}(V_{i}),
\end{equation}
where the first map is the inclusion and the last one is the projection.

\begin{Definition}\label{RREXT}
	Let $(A,W)$ be an extended divisor. We define:
        \begin{align*}
	\mathcal{A}_{\DD}(A,W) & :=\left\{f \in \mathcal{A}_{\DD}(A)\mid \bar\varepsilon_i( f )_{|W_{i}}=0 \quad \forall i \in [s]\right\}, \\
        \Lambda_{\DD}(A,W) & :=\mathcal{A}_{\DD}(A,W) \cap \DD.
        \end{align*}
	We denote by $\lambda_{\DD}(A,W)$ the dimension of $\Lambda_{\DD}(A,W)$.
\end{Definition}

	We define the conorm of a divisor as follows:
	$$\begin{array}{rcl}
                        \operatorname{CoNr}: \quad \operatorname{Div}(\KK) & \rightarrow & \operatorname{Div}(\LL) \subset  \operatorname{Div}(\DD) \\
			P                     & \mapsto & e_P \sum_{Q|P}  Q
		\end{array}.$$

\begin{Lemma} \label{VANISHING}
	For $A \in \operatorname{Div}(\DD,D)$, we have $\mathcal{A}_{\DD}(A,V_D) = \mathcal{A}_{\DD}(A-\operatorname{CoNr}(D)).$
\end{Lemma}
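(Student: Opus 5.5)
The plan is to reduce the statement to a purely local statement at each evaluation place. Away from the places above $P_1,\dots,P_s$, the divisors $A$ and $A-\operatorname{CoNr}(D)$ have the same coefficients. Moreover, the vanishing conditions in Definition~\ref{RREXT} only involve the components $\alpha_{P_i}$. So the two adele spaces impose identical conditions at every $P\notin\{P_1,\dots,P_s\}$, and it suffices to fix $i\in[s]$ and compare the conditions on $\alpha_{P_i}\in\DD_{P_i}$.

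Since $A\in\operatorname{Div}(\DD,D)$, we have $A_Q=0$ for all $Q\mid P_i$. Since $P_i$ is unramified, $e_{P_i}=1$ and $\operatorname{CoNr}(P_i)=\sum_{Q\mid P_i}Q$. So we must show the following equivalence for $\alpha\in\DD_{P_i}$:
\[
\Big(\omega_Q(\alpha)\ge 0\ \ \forall Q\mid P_i \ \text{ and } \ \bar\varepsilon_i(\alpha)=0 \text{ in } \operatorname{End}_\FF(V_i)\Big)
\iff \omega_Q(\alpha)\ge 1\ \ \forall Q\mid P_i .
\]
Because $\omega_{P_i}=\min_{Q\mid P_i}\omega_Q$ (as $e_{P_i}=1$), the first condition on the left is exactly $\alpha\in\Lambda_{P_i}$. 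Here $\bar\varepsilon_i$ means the composite in~\eqref{EPS} applied to the component $\alpha_{P_i}\in\Lambda_{P_i}$.

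Fix a uniformizer $\pi\in\KK$ at $P_i$. The key steps, in order, are as follows.

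First, describe the kernel of the projection $\operatorname{End}_{\mathcal{O}_{P_i}}(\mathcal{O}_{\LL,P_i})\to\operatorname{End}_\FF(V_i)$. An endomorphism $\varphi$ lies in this kernel iff $\varphi(\mathcal{O}_{\LL,P_i})\subset\pi\mathcal{O}_{\LL,P_i}$. Since $\mathcal{O}_{\LL,P_i}$ is a free $\mathcal{O}_{P_i}$-module and $\pi$ is a non-zero-divisor, this holds iff $\varphi=\pi\psi$ for some $\psi\in\operatorname{End}_{\mathcal{O}_{P_i}}(\mathcal{O}_{\LL,P_i})$. Hence the kernel is $\pi\operatorname{End}_{\mathcal{O}_{P_i}}(\mathcal{O}_{\LL,P_i})$.

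Second, pull this back through $\varepsilon_i$. The map $\varepsilon_i$ is $\KK_{P_i}$-linear, and its restriction to $\Lambda_{P_i}$ is an isomorphism onto $\operatorname{End}_{\mathcal{O}_{P_i}}(\mathcal{O}_{\LL,P_i})$. Therefore, for $\alpha\in\Lambda_{P_i}$ we have $\bar\varepsilon_i(\alpha)=0$ iff $\alpha\in\pi\Lambda_{P_i}$.

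Third, compute the gauge on $\pi\Lambda_{P_i}$. Since $\pi$ is central and lies in $\KK$, multiplying $y=\sum_j y_jT^j$ by $\pi$ multiplies each coefficient $y_j$ by $\pi$. Because $w_Q(\pi)=1$ for $Q$ unramified over $P_i$, this gives $\omega_Q(\pi y)=\omega_Q(y)+1$ for every $Q\mid P_i$. Hence $\pi\Lambda_{P_i}=\{\alpha\in\DD_{P_i} : \omega_Q(\alpha)\ge1\ \forall Q\mid P_i\}$, which is exactly the local condition defining $\mathcal{A}_\DD(A-\operatorname{CoNr}(D))$ at $P_i$. Combining the three steps over all $i\in[s]$ gives the equality of the two adele spaces.

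The only delicate point is the first step: identifying the kernel of reduction modulo $P_i$ as $\pi$ times the endomorphism ring. This relies on freeness of $\mathcal{O}_{\LL,P_i}$ over the DVR $\mathcal{O}_{P_i}$, and it is the place where the unramified hypothesis (through $P_i\mathcal{O}_{\LL,P_i}=\pi\mathcal{O}_{\LL,P_i}$ and $w_Q(\pi)=1$) is genuinely used.
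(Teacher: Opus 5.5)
Your proof is correct and takes the same route as the paper. The paper just asserts that the kernel of $\Lambda_{P_i}\to\End_\FF(V_i)$ is $\varepsilon_i^{-1}\big(\operatorname{Hom}_{\mathcal{O}_{P_i}}(\mathcal{O}_{\LL,P_i},P_i\mathcal{O}_{\LL,P_i})\big)$ and that this is the $P_i$-component of $\mathcal{A}_{\DD}(A-\operatorname{CoNr}(D))$; you supply the omitted details, namely that this preimage is $\pi\Lambda_{P_i}$ and that $\omega_Q(\pi y)=\omega_Q(y)+1$. (Minor remark: $P_i\mathcal{O}_{\LL,P_i}=\pi\mathcal{O}_{\LL,P_i}$ needs no unramifiedness and $\pi$ being a non-zero-divisor suffices instead of freeness, so $e_{P_i}=1$ enters only through $w_Q(\pi)=1$ matching the coefficients of $\operatorname{CoNr}(P_i)$.)
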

\begin{proof}
	By definition, the kernel of  
	$\varepsilon_i: \Lambda_{P_i} \overset{\varepsilon_{i}}{\tilde{\longrightarrow}} \operatorname{End}_{\mathcal{O}_{{P_i}}}(\mathcal{O}_{\LL,{P_i}}) \twoheadrightarrow \operatorname{End}_{\FF}(V_{i})$ is 
	 the inverse image of $\operatorname{Hom}_{\mathcal{O}_{{P_i}}}(\mathcal{O}_{\LL,{P_i}},P_i\mathcal{O}_{\LL,{P_i}})$ by $\varepsilon_{i}$, which is $\mathcal{A}_{\DD}(A-\operatorname{CoNr}(D))_{P_i}$.
\end{proof}

\begin{Definition}
	The \emph{index of speciality} of an extended divisor $(A,W)$ is
	$$i(A,W):=\lambda_{\DD}(A,W)-r\operatorname{deg}(A,W)-r(1-g).$$
\end{Definition}

\begin{Proposition}\label{ROCH}
	For any extended divisor $(A,W)$, we have
        $$i(A,W)=\operatorname{dim}_{\FF} \left(\frac{\mathcal{A}_{\DD}}{\mathcal{A}_{\DD}(A,W)+\DD}\right).$$
\end{Proposition}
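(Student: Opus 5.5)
We prove the statement in two stages: first for $W=0$, then in general by comparing $(A,W)$ with $(A,0)$ through a dimension count.

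\emph{Case $W=0$.} Here $\mathcal{A}_{\DD}(A,0)=\mathcal{A}_{\DD}(A)$. By Lemma~\ref{ADELEVAL},
$$\frac{\mathcal{A}_{\DD}}{\mathcal{A}_{\DD}(A)+\DD}\simeq\bigoplus_{i=0}^{r-1}\frac{\mathcal{A}_{\LL}}{\mathcal{A}_{\LL}(\pi_i(A))+\LL}.$$
The classical theorem \cite[Theorem 1.5.4]{2} gives $\dim_\FF$ of each summand as $i_\LL(\pi_i(A))=\dim L(\pi_i(A))-\deg\pi_i(A)-1+g_\LL$. Summing over $i$, we use three earlier facts:
\begin{itemize}
\item Equation~\eqref{ISOLBDA}, which gives $\sum_i\dim L(\pi_i(A))=\lambda_\DD(A)$;
\item \cite[Lemma 2.5]{1}, which gives $\sum_i\deg\pi_i(A)=r\deg A-r\deg(\Delta)/2$;
\item the relation $g_\LL=g-\deg(\Delta)/2$.
\end{itemize}
Exactly as in the proof of Theorem~\ref{ELEM}, the $\Delta$ terms cancel and the sum equals $\lambda_\DD(A)-r\deg A-r(1-g)=i(A,0)$.

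\emph{General case.} Since $\mathcal{A}_{\DD}(A,W)\subset\mathcal{A}_{\DD}(A)$, we have a short exact sequence
$$0\to\frac{\mathcal{A}_{\DD}(A)+\DD}{\mathcal{A}_{\DD}(A,W)+\DD}\to\frac{\mathcal{A}_{\DD}}{\mathcal{A}_{\DD}(A,W)+\DD}\to\frac{\mathcal{A}_{\DD}}{\mathcal{A}_{\DD}(A)+\DD}\to0.$$
So it suffices to show that the left term has dimension $i(A,W)-i(A,0)=\lambda_\DD(A,W)-\lambda_\DD(A)+r\dim_\FF W$.

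By the second isomorphism theorem, this left term is isomorphic to $\mathcal{A}_\DD(A)/(\mathcal{A}_\DD(A,W)+\Lambda_\DD(A))$, because $(\mathcal{A}_\DD(A,W)+\DD)\cap\mathcal{A}_\DD(A)=\mathcal{A}_\DD(A,W)+\Lambda_\DD(A)$. We then use
$$\dim\frac{\mathcal{A}_\DD(A)}{\mathcal{A}_\DD(A,W)+\Lambda_\DD(A)}=\dim\frac{\mathcal{A}_\DD(A)}{\mathcal{A}_\DD(A,W)}-\dim\frac{\Lambda_\DD(A)}{\Lambda_\DD(A,W)},$$
which holds since $\Lambda_\DD(A)\cap\mathcal{A}_\DD(A,W)=\Lambda_\DD(A,W)$. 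The second quotient has dimension $\lambda_\DD(A)-\lambda_\DD(A,W)$, and all dimensions involved are finite.

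The key step is therefore to show $\dim_\FF\mathcal{A}_\DD(A)/\mathcal{A}_\DD(A,W)=r\dim_\FF W$. Consider the map
$$\mathcal{A}_\DD(A)\to\prod_i\operatorname{End}_\FF(V_i)\to\prod_i\operatorname{Hom}_\FF(W_i,V_i),\qquad f\mapsto\bigl(\bar\varepsilon_i(f_{P_i})_{|W_i}\bigr)_i .$$
Its kernel is $\mathcal{A}_\DD(A,W)$ by Definition~\ref{RREXT}, and its target has dimension $\sum_i r\dim W_i=r\dim W$.

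Surjectivity is the main point to verify. Since $A$ is away from $D$, the $P_i$-components of elements of $\mathcal{A}_\DD(A)$ range over all of $\Lambda_{P_i}$, independently for each $i$, because adeles impose no coupling between places. The map $\varepsilon_i$ identifies $\Lambda_{P_i}$ with $\operatorname{End}_{\mathcal{O}_{P_i}}(\mathcal{O}_{\LL,P_i})$. Since $\mathcal{O}_{\LL,P_i}$ is free over $\mathcal{O}_{P_i}$, this surjects onto $\operatorname{End}_\FF(V_i)$, and restriction to $W_i$ is surjective. Combining these dimension counts gives the claim.
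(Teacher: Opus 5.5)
Your proof is correct and follows essentially the same route as the paper: the case $W=0$ via Lemma~\ref{ADELEVAL} and the classical adelic description of the index of speciality summed over the $\pi_i(A)$, then the correction term obtained from $\dim_\FF \mathcal{A}_{\DD}(A)/\mathcal{A}_{\DD}(A,W) = r\dim_\FF W$ together with $\Lambda_{\DD}(A)/\Lambda_{\DD}(A,W)$. Your second-isomorphism-theorem computation is an unpacked version of the paper's snake-lemma exact sequence, and your global evaluation map plays the role of the paper's local commutative diagram at each $P_i$.
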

\begin{proof}
	We have an isomorphism
        $$\begin{array}{rcl}
	\oplus_{i=0}^{r-1}\operatorname{Hom}\left(\frac{\mathcal{A}_{\LL} }{\mathcal{A}_{\LL}(\pi_i(A)) + \LL},\FF\right) & \xrightarrow{\sim} & \operatorname{Hom}\left(\frac{\mathcal{A}_{\DD} }{\mathcal{A}_{\DD}(A) + \DD},\FF\right) \\
	(\nu_0, \ldots, \nu_{r-1}) & \mapsto & (\alpha \mapsto \sum_{i=0}^{r-1}\nu_i(\alpha_i))
	\end{array}.$$
        Indeed, the map is obviously injective and 
	it is also surjective, a preimage of $\nu \in  \operatorname{Hom}\left(\frac{\mathcal{A}_{\DD} }{\mathcal{A}_{\DD}(A) + \DD},\FF\right)$ being given by of the linear forms $\nu_i : \alpha_i \mapsto \nu((0,\ldots,\alpha_i,\ldots,0))$.
	Now, using this isomorphism and the Riemann--Roch Theorem \ref{ELEM} applied to $\Lambda_{\DD}(A)$, we get $i(A)=\operatorname{dim}\left(\frac{\mathcal{A}_{\DD} }{\mathcal{A}_{\DD}(A)+\DD}\right)$.
	Thus, we have
        \begin{equation}
        \label{eq:speciality1}
	\operatorname{dim}\left(\frac{\mathcal{A}_{\DD} }{\mathcal{A}_{\DD}(A,W)+\DD}\right)= i(A)+ \operatorname{dim}\left(\frac{\mathcal{A}_{\DD}\left(A\right)+\DD}{\mathcal{A}_{\DD}(A,W)+\DD}\right).
        \end{equation}
        In order to relate the dimension of the latter quotient to $i(A, W)$, we consider the following commutative diagram with exact rows:
	$$\begin{aligned}&\begin{array}{cccccccccc}
        0 & \longrightarrow & \Lambda_{\DD}(A,W) & \longrightarrow & \mathcal{A}_{\DD}(A,W) & \longrightarrow & \frac{\mathcal{A}_{\DD}(A,W) + \DD}{\DD} & \longrightarrow & 0 \\
        & & \hookdownarrow & & \hookdownarrow & & \hookdownarrow & & \\
        0 & \longrightarrow & \Lambda_{\DD}(A) & \longrightarrow & \mathcal{A}_{\DD}(A) & \longrightarrow & \frac{\mathcal{A}_{\DD}(A) + \DD}{\DD} & \longrightarrow & 0\end{array}\end{aligned}.$$
	Applying the snake lemma to it, we get the following exact sequence of cokernels
	$$\begin{aligned}\label{EQ3_}
			0 \longrightarrow \frac{\Lambda_{\DD}\left(A\right) }{ \Lambda_{\DD}\left(A,W\right)} \longrightarrow \frac{\mathcal{A}_{\DD}\left(A\right) }{ \mathcal{A}_{\DD}\left(A,W\right) }\longrightarrow \frac{\mathcal{A}_{\DD}\left(A\right)+\DD}{\mathcal{A}_{\DD}\left(A,W\right)+\DD} \longrightarrow 0.
	\end{aligned}$$
        Hence
        \begin{equation}
        \label{eq:speciality2}
	\operatorname{dim}\left(\frac{\mathcal{A}_{\DD}(A)+\DD}{\mathcal{A}_{\DD}(A,W)+\DD}\right)=
          \operatorname{dim}\left(\frac{\mathcal{A}_{\DD}(A)}{\mathcal{A}_{\DD}(A,W)}\right) - \lambda\left(A\right)+\lambda\left(A,W\right).
        \end{equation}
        Finally, we observe that, for any $i\in[s]$, we have a second commutative diagram with exact rows:
	$$\begin{aligned}&\begin{array}{cccccccccc} 
        0 & \longrightarrow &\mathcal{A}_{\DD}\left(A,V_{i}\right)_{P_i}  &\longrightarrow &\mathcal{A}_{\DD}\left(A\right)_{P_i}  & \stackrel{\bar\varepsilon_i}{\longrightarrow} & \operatorname{Hom}_{\FF}(V_{i},V_{i}) & \longrightarrow & 0 \\
        & & \shortparallel & & \hookuparrow & & \hookuparrow & & \\
        0 & \longrightarrow & \mathcal{A}_{\DD}\left(A,V_{i}\right)_{P_i}  & \longrightarrow  & \mathcal{A}_{\DD}\left(A,W_{i}\right)_{P_i} & \stackrel{\bar\varepsilon_i}{\longrightarrow}  & \operatorname{Hom}_{\FF}(V_{i}/W_{i},V_{i})& \longrightarrow & 0
        \end{array}\end{aligned}$$
	from which we derive that $\mathcal{A}_{\DD}\left(A\right)_{P_i}  / \mathcal{A}_{\DD}\left(A,W_i\right)_{P_i}  \simeq \operatorname{Hom}_{\FF}(W_{i},V_{i})$.
	Therefore $\mathcal{A}_{\DD}(A) / \mathcal{A}_{\DD}(A,W)$ is isomorphic to $\bigoplus_{i \in [s]} \operatorname{Hom}_{\FF}(W_{i},V_{i})$ and thus has dimension $r \dim(W)$.
        Combining with Equations~\eqref{eq:speciality1} and~\eqref{eq:speciality2}, we get the desired result.
\end{proof}

\begin{Theorem}\label{RTHEO}
	For any extended divisor $(A,W)$, we have
	$$
		\lambda_{\DD}(A,W) = r\operatorname{deg}(A,W)+r(1-g)+\operatorname{dim}\left(\mathcal{A}_{\DD} \slash \left(\mathcal{A}_{\DD}(A,W)+\DD\right)\right).$$
	Moreover, if $\operatorname{deg}(A,W)<0$, then $\lambda_{\DD}(A,W)=0$.
\end{Theorem}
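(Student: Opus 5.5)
The first identity is a rewriting of Proposition~\ref{ROCH}. By definition, $i(A,W)=\lambda_{\DD}(A,W)-r\deg(A,W)-r(1-g)$, and Proposition~\ref{ROCH} identifies $i(A,W)$ with $\dim_\FF\big(\mathcal{A}_{\DD}/(\mathcal{A}_{\DD}(A,W)+\DD)\big)$. Substituting and rearranging gives the displayed formula, so nothing beyond this is needed for the first part.

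For the vanishing statement, I plan to reduce to the numerical case and then use degrees in $\LL$. Fix $(A,W)$ with $\deg(A,W)<0$ and suppose $f\in\Lambda_{\DD}(A,W)$ is nonzero. Because $\DD$ is a division algebra, left multiplication by $f$ is injective. I do not see a direct route from this alone, so I will use the following dimension argument instead. Applying Proposition~\ref{ROCH}'s computation, $\Lambda_{\DD}(A,W)$ is the kernel of the map
$$\Lambda_{\DD}(A)\to\bigoplus_i\operatorname{Hom}_\FF(W_i,V_i).$$
Hence
$$\lambda_{\DD}(A,W)\ \ge\ \lambda_{\DD}(A)-r\dim W,$$
but this is a lower bound, and for vanishing I need an upper bound. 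The cleaner route is to use the degree of the divisor of $f$.

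The key step is the following claim: for $f\in\DD^\times$, one has $\sum_Q\omega_Q(f)\deg Q=0$ up to the normalisation, i.e.\ $\deg(f)=0$. I intend to prove it from the reduced norm: $\omega_P$ should be $\frac{1}{r}v_P(\operatorname{Nrd} f)$ (suitably normalised), and the product formula in $\KK$ then applies. The point is that the gauge $\omega_P$ is a valuation on the division algebra $\DD_P$ whenever $\DD_P$ is a division algebra, while at split places it is the gauge attached to the maximal order $\operatorname{End}(\mathcal{O}_{\LL,P})$. Next, at each $P_i$ the condition $\bar\varepsilon_i(f)|_{W_i}=0$ should force a gain in valuation. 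Concretely, via $\varepsilon_i$, $f$ becomes an $\mathcal{O}_{P_i}$-endomorphism of $\mathcal{O}_{\LL,P_i}\simeq\mathcal{O}_{P_i}^r$ whose reduction kills $W_i$. Its determinant therefore has $v_{P_i}\ge\dim W_i$, so $v_{P_i}(\operatorname{Nrd}f)\ge\dim W_i$.

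At the other places, $\omega_Q(f)\ge -A_Q$ gives, after summing, $v_P(\operatorname{Nrd}f)\ge -\frac{r}{?}\ldots$. Summing everything with the product formula yields
$$0=\deg(\operatorname{Nrd} f)\ \ge\ \dim W-\deg A>0,$$
which is a contradiction, so $\lambda_{\DD}(A,W)=0$.

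The main obstacle is pinning down the precise relation between $\omega_P$ and $v_P\circ\operatorname{Nrd}$ (including the factor $r$, the sum over $Q\mid P$, and the inequality $\sum_{Q\mid P}\omega_Q(f)\deg Q\le \frac{\deg P}{?}\,v_P(\operatorname{Nrd} f)$). If that proves awkward, a fallback is to use Theorem~\ref{ELEM}-style reasoning: find a nonzero $f\in\Lambda_{\DD}(A,W)$ and show that $f^{-1}$ lies in a Riemann--Roch space of negative degree, deriving the contradiction from the numerical case $\lambda_{\DD}(B)=0$ for $\deg B<0$, which follows from \eqref{ISOLBDA} since each $\deg\pi_i(B)<0$ on average.
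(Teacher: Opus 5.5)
Your first part is exactly the paper's argument (Proposition~\ref{ROCH} rearranged). For the vanishing, the paper simply invokes \cite[Theorem~3.5]{1}, i.e.\ the minimum-distance bound $\sum_i\dim_\FF\ker\bar\varepsilon_i(f)\le\deg A$ for nonzero $f\in\Lambda_\DD(A)$. This gives the claim at once, since $W_i\subset\ker\bar\varepsilon_i(f)$.

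Reproving this through reduced norms is a sound plan, and your step at the evaluation places is correct: $\operatorname{Nrd}(f)=\det\varepsilon_i(f)$ with $\varepsilon_i(f)$ integral, and $\deg P_i=1$. But there is a genuine gap at the core. The comparison between the gauges and $v_P\circ\operatorname{Nrd}$ at the other places is left with question marks, and the form in which you announce it is false: $\sum_Q\omega_Q(f)\deg Q$ is \emph{not} $0$ for $f\in\DD^\times$, and $\omega_P\neq\frac1r v_P\circ\operatorname{Nrd}$ in general. For instance, $f=1+T$ has $\omega_Q(f)=\min(0,w_Q(x)/r)$. So $\sum_Q\omega_Q(f)\deg Q$ is $-\frac1r$ times the degree of the pole divisor of $x$ in $\LL$, which is negative as soon as $x\notin\FF$, whereas $\operatorname{Nrd}(f)\in\KK^\times$ has degree $0$. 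Since the $\omega_Q$ are only gauges, only an inequality can hold, and establishing it is the whole content of the vanishing statement. Your fallback does not get around this. Nothing controls $\omega_Q(f^{-1})$ in terms of $\omega_Q(f)$, and the conditions $\bar\varepsilon_i(f)_{|W_i}=0$ are not divisor conditions, so no negative-degree Riemann--Roch space containing $f^{-1}$ is available.

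The missing ingredient is the local estimate $v_P(\operatorname{Nrd}f)\deg P\ge\sum_{Q\mid P}\omega_Q(f)\deg Q$ for every $P\in\Pl_\KK$ and $f\in\DD_P$, which you can prove by expanding a determinant. Write $f=\sum_jf_jT^j$. Right multiplication by $f$ on the left $\LL_P$-module $\DD_P=\bigoplus_{i<r}\LL_PT^i$ has matrix entries $M_{i,k}=\theta^i(f_{k-i})$ for $k\ge i$ and $M_{i,k}=x\,\theta^i(f_{k-i+r})$ for $k<i$. The reduced norm is the determinant of right multiplication (just as $\Trd$ is its trace), so $\operatorname{Nrd}(f)=\det M$. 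Fix $Q\mid P$, let $Q^{(i)}$ be the place with $w_Q\circ\theta^i=w_{Q^{(i)}}$, and put $w:=e_Pv_P(x)$, which equals $w_{Q'}(x)$ for every $Q'\mid P$. From $w_{Q'}(f_j)\ge\omega_{Q'}(f)-jw/r$ you get, in both cases, $w_Q(M_{i,k})\ge\omega_{Q^{(i)}}(f)-(k-i)w/r$. Along any permutation the quantities $k-i$ sum to $0$, so every term of the Leibniz expansion, hence $\det M$, satisfies
\[
e_Pv_P(\operatorname{Nrd}f)=w_Q(\det M)\ \ge\ \sum_{i=0}^{r-1}\omega_{Q^{(i)}}(f)=\frac{r}{m_P}\sum_{Q'\mid P}\omega_{Q'}(f),
\]
because $Q^{(0)},\ldots,Q^{(r-1)}$ runs $r/m_P$ times through the $m_P$ places above $P$. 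Using $r=e_Pf_Pm_P$ and $\deg Q'=f_P\deg P$ (with $f_P$ the residue degree), this is exactly the required estimate. Now combine three facts:
\begin{enumerate}[(i)]
\item $\omega_Q(f)\ge-A_Q$ away from $D$;
\item your bound $v_{P_i}(\operatorname{Nrd}f)\ge\dim W_i$ at the evaluation places;
\item the product formula for $\operatorname{Nrd}(f)\in\KK^\times$, which is nonzero because $\DD$ is a division algebra.
\end{enumerate}
Together they yield $0\ge\dim W-\deg A$, i.e.\ $\deg(A,W)\ge0$, which completes your argument.
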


\begin{proof}
	The first equality results from Proposition \ref{ROCH}.
	The second assertion is a consequence of \cite[Theorem~3.5]{1}.
\end{proof}

\subsection{Differentials and residues}

In order to prove the Riemann--Roch Theorem for extended divisors, we now need to relate the quotient $\mathcal{A}_{\DD} \slash (\mathcal{A}_{\DD}(A,W)+\DD)$ to an actual Riemann--Roch space.
As in the classical case, we will use as an intermediate a certain space of differential forms over $\DD$.

\subsubsection{Definitions}

In the sequel, we denote by $\Omega_{\KK}$ (resp.~$\Omega_{\LL}$) the module of differentials of $\KK$ (resp.~$\LL$) as defined in \cite[Chapter 5]{8}.
For any differential $\eta \in \Omega_{\KK}$, we denote by $(\eta)_\KK$, (resp.~$(\eta)_\LL$) the divisor of $\eta$ viewed as a differential of $\Omega_{\KK}$ (resp.~$\Omega_{\LL}$).
They satisfy the Riemann--Hurwitz relation $(\eta)_\LL=\operatorname{CoNr}((\eta)_\KK)+\operatorname{Diff}(\LL/\KK)$ as stated in \cite[Theorem 3.4.6]{2}, where $\operatorname{Diff}(\LL/\KK)$ is the different divisor of $\LL/\KK$ as defined in \cite[Definition 3.4.3]{2}.

\begin{Definition}
	The \emph{left module of differentials} of $\DD$ is defined by $\Omega_{\DD}:=\DD \otimes_{\KK} \Omega_{\KK}.$
\end{Definition}

We recall that $\Omega_{\KK}$ is a one-dimensional vector space over $\KK$ \cite[Proposition 1.5.9]{2}. Henceforth, any element $\nu \in \Omega_{\DD}$ is of the form $\nu = f \otimes \eta$ for  $f\in \DD$ and $\eta \in \Omega_{\KK}$.

\begin{Definition}
Let $\nu = f \otimes \eta \in \Omega_\DD$ with $f \in \DD$ and $\eta \in \Omega_\KK$.
The \emph{principal divisor} of $\nu$ is
$$(\nu)_\DD := (f) + (\eta)_{\LL} + \Delta.$$
\end{Definition}

\begin{Proposition}
	The divisor $(\nu)_\DD$ does not depend on the choice of the decomposition.
\end{Proposition}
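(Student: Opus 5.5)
Any two decompositions of the same $\nu$ are related by a nonzero scalar of $\KK$, so the plan is to reduce to that single move and then compare the two divisors term by term. Since $\Omega_\KK$ is one-dimensional over $\KK$, fix a nonzero $\eta_0 \in \Omega_\KK$. Every $\eta\in\Omega_\KK$ is $\eta = z\eta_0$ with $z\in\KK$, and $f\otimes z\eta_0 = fz\otimes \eta_0$. The tensor product is over $\KK$, which is central in $\DD$, so left and right multiplication by $z$ agree. We may assume $\nu\neq 0$; if $\nu = 0$ the divisor should be read as infinite, or the case excluded.

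Suppose $\nu = f\otimes\eta = f'\otimes\eta'$ with $\eta,\eta'\neq 0$. Write $\eta' = z\eta$ for some $z\in\KK^\times$. Then $f\otimes \eta = f'z\otimes\eta$. Because $\DD\otimes_\KK\Omega_\KK \simeq \DD$ via $g\mapsto g\otimes\eta$ (as $\Omega_\KK$ is free of rank one with basis $\eta$), this forces $f = f'z$. It therefore suffices to prove
$$(f'z) + (\eta)_\LL = (f') + (z\eta)_\LL .$$

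Two facts give this identity. The first is $(z\eta)_\LL = (z)_\LL + (\eta)_\LL$, where $(z)_\LL=\sum_Q w_Q(z)Q$; this is the classical property of divisors of differentials, valid because $z\in\KK\subset\LL$. The second is $\omega_Q(f'z) = \omega_Q(f') + w_Q(z)$ for every $Q$. To see the second, write $f' = \sum_j \alpha_j T^j$; then $f'z = \sum_j \alpha_j \theta^j(z) T^j = \sum_j (\alpha_j z) T^j$, since $\theta$ fixes $\KK$. So each component gets multiplied by $z$. Moreover $w_Q(z)$ is the same for all components of $\LL_P$ over the place $Q$, so the minimum defining $\omega_Q$ shifts exactly by $w_Q(z)$. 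Summing over $Q$ gives $(f'z) = (f') + (z)_\LL$. Adding $\Delta$, which is fixed, to both sides yields the claim.

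The only point needing care is this equivariance $\omega_Q(f'z)=\omega_Q(f')+w_Q(z)$. It is an exact equality, not merely the inequality supplied by Lemma \ref{SURMULT}. It holds because multiplication by the central scalar $z$ rescales every coefficient of $f'$ uniformly and does not interact with the twist $\theta$. Everything else is bookkeeping.
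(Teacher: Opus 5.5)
Your proof is correct and follows the paper's argument: both write $\eta' = z\eta$ and $f = f'z$ with $z \in \KK^\times$, then combine $\omega_Q(f'z) = \omega_Q(f') + w_Q(z)$ with $(z\eta)_\LL = (z)_\LL + (\eta)_\LL$. You additionally justify the coefficientwise rescaling (using that $\theta$ fixes $\KK$), the identification $\DD \otimes_\KK \Omega_\KK \simeq \DD$, and the degenerate case $\nu = 0$, all of which the paper leaves implicit.
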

\begin{proof}
	Let  $\nu=f'\otimes {\eta'}$ be another decomposition. Then $\eta'=u \eta$ and $f = u f'$ for some nonzero $u \in \KK$.
	As for any $v\in \KK$, $g\in \DD$, and $Q\in \Pl_{\LL}$, we have $\omega_Q(vg)=\omega_Q(gv)=w_Q(v)+\omega_Q(g)$, we obtain
	$(f)+(\eta)_{\LL}= (f')+(u)+(\eta)_{\LL} = (f')+(\eta')_{\LL}$.
\end{proof}
\begin{Definition}\label{DEFOMEGA}
	For any $A \in \operatorname{Div}(\DD)$, we set $\Omega_{\DD}(A):=\big\{\nu \in \Omega_{\DD} \mid (\nu)_\DD   \geq A \big\}$.
\end{Definition}
\begin{Definition}\label{DEFRES}
Let $i \in [s]$ and let $t_i \in \KK$ be a uniformizer at $P_i$.
Let $\nu \in \Omega_\DD$ and write $\nu = f \frac{{d}t_i}{t_i}$ with $f \in \DD$.
If $\omega_{P_i}(f) \geq 0$, we define the \emph{residue endomorphism} of $\nu$ at $P_i$ by
$$\Res_{P_i}(\nu ) := \bar\varepsilon_i(f)$$
where $\bar\varepsilon_i$ is defined in Equation~\eqref{EPS}.
\end{Definition}
\begin{Proposition}\label{WELLDEF}
	The residue endomorphism $\Res_P(\nu)$ does not depend on the choice of the decomposition of $\nu$.
\end{Proposition}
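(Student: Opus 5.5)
The plan is to reduce everything to a comparison between two uniformizers. First I note that for a fixed uniformizer $t_i$ the element $f$ in $\nu = f\frac{dt_i}{t_i}$ is uniquely determined. Indeed, $\frac{dt_i}{t_i}$ is a nonzero element of the one-dimensional $\KK$-vector space $\Omega_\KK$ (as $t_i$ is a separating element), and $\Omega_\DD = \DD\otimes_\KK \Omega_\KK$ is then free of rank one over $\DD$ with basis $1\otimes \frac{dt_i}{t_i}$. So the only freedom is the choice of the uniformizer. Let $t_i'$ be another uniformizer at $P_i$ and write $\nu = f'\frac{dt_i'}{t_i'}$. Then $\frac{dt_i'}{t_i'} = u\,\frac{dt_i}{t_i}$ for a unique $u\in\KK^\times$, and hence $f = u f'$, where $u$ is central in $\DD$.

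The key step is to show that $u \in 1 + t_i\mathcal{O}_{P_i}$. I would write $t_i' = a t_i$ with $a\in\KK$ and $v_{P_i}(a)=0$. Then $dt_i' = a\,dt_i + t_i\,da$, so
$$u = \frac{t_i}{t_i'}\cdot\frac{dt_i'}{dt_i} = 1 + t_i\,\frac{1}{a}\frac{da}{dt_i}.$$
It remains to check that $v_{P_i}\!\left(\frac{da}{dt_i}\right)\geq 0$. Since $P_i$ is rational, $\KK_{P_i}\simeq \FF((t_i))$. The derivation $d/dt_i$ on $\KK$ extends continuously to $\KK_{P_i}$ as termwise differentiation of Laurent series in $t_i$, and this operation preserves $\FF[[t_i]]$. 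As $a\in\FF[[t_i]]$, we get $\frac{da}{dt_i}\in\FF[[t_i]]$; this argument is valid in any characteristic. Combined with $v_{P_i}(a)=0$, this gives $u\in 1+t_i\mathcal{O}_{P_i}$, and in particular $v_{P_i}(u)=0$.

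With this in hand, the rest is formal. Using $\omega_Q(uf') = w_Q(u)+\omega_Q(f')$ (already used in the preceding proposition), we get $\omega_{P_i}(f)=\omega_{P_i}(f')$. So the hypothesis $\omega_{P_i}(f)\ge 0$ does not depend on the uniformizer, and both $f$ and $f'$ lie in $\Lambda_{P_i}$. Since $\varepsilon_i$ is a morphism of $\KK_{P_i}$-algebras, $\varepsilon_i(f) = u\,\varepsilon_i(f')$, which is the composition of $\varepsilon_i(f')$ with multiplication by $u$ on $\mathcal{O}_{\LL,P_i}$. As $u\equiv 1 \pmod{P_i}$, multiplication by $u$ induces the identity on $V_i = \mathcal{O}_{\LL,P_i}/P_i\mathcal{O}_{\LL,P_i}$. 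Therefore $\bar\varepsilon_i(f)=\bar\varepsilon_i(f')$, which proves the statement.

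I expect the only real obstacle to be the valuation estimate for $da/dt_i$, and the completion argument above is the cleanest way to handle it.
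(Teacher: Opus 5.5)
Your proof is correct and follows essentially the same route as the paper. Both arguments show that the change-of-uniformizer factor $u=\frac{dt_i'}{dt_i}\cdot\frac{t_i}{t_i'}\in\KK$ is central and satisfies $u\equiv 1 \pmod{P_i}$ (the paper via the expansion $t_i'=\tau_0t_i(1+t_i(\tau_1+t_i\rho))$, you via $t_i'=at_i$ and the integrality of $\frac{da}{dt_i}$), so that $\omega_{P_i}(f)=\omega_{P_i}(f')$ and $\bar\varepsilon_i(f)=\bar\varepsilon_i(f')$, with your write-up being slightly more explicit on the uniqueness of $f$ and on the integrality of the derivative, which the paper's congruence $\frac{dt_P'}{dt_P}\equiv\tau_0$ uses implicitly.
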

\begin{proof}
Let $t'_P$ be another uniformizer of $\KK$ at $P$ and write $\nu = f'\frac{{d}t'_P}{{t'}_P}$ with $f' \in \DD$.
We then have $f = f' \frac{{d}t'_P}{{d}t_P}\frac{t_P}{{t'}_P}$ and so $\omega_P(f')=\omega_P(f)$.
Moreover $\varepsilon_P(f) = \varepsilon_P\left(f' \frac{{d}t'_P}{{d}t_P}\frac{t_P}{{t'}_P}\right) =\varepsilon_P(f') $.
Indeed, writing $t'_P=\tau_0 t_P (1+t_P(\tau_1+t_P \rho))$ with $\rho \in \mathcal{O}_P$, $\tau_1 \in \FF$ and $\tau_0 \in \FF^\times$, we compute ${\frac{{d}t'_P}{{d}t_P}}\equiv {\frac{t'_P}{t_P}}\equiv\tau_0 \bmod P.$
Finally, by definition, $v_P(t'_P)=v_P(t_P)=1$. This ensures that $\tau_0 \neq 0 \bmod P$.
\end{proof}

\subsubsection{The residue formula}

We observe that the residue endomorphisms at different places do not live in the same ambient space.
Therefore, adding them does not make sense and we then cannot hope that their sum vanishes.
However, this vanishing property does hold after taking traces, as we show now.

We first recall the notion of reduced trace on $\DD$.

\begin{Definition}
The \emph{reduced trace} of an element $f$ of $\DD$ (resp.~of $\DD_P$), denoted by $\Trd(f)$,
is the trace of left-$\LL$ (resp.~left-$\LL_P$) linear endomorphism of $\DD$ (resp.~$\DD_P$) given by right multiplication by $f$.
\end{Definition}

\begin{Lemma}\label{PRELLEMMA}
	Let $P \in \Pl_\LL$ and let $f =f_0 +f_1 T +\cdots + f_{r-1}T^{r-1}\in \DD_P $ with $f_j \in \LL_P$.
	Then $\Trd(f)=\operatorname{Tr}_{\LL_P/\KK_P}(f_0)$.

	If moreover $P = P_i$ for some $i \in [s]$, then $\Trd(f) = \operatorname{Tr}_{\KK_{P_i}}\left(\varepsilon_i(f)\right)$
	(where we recall that $\varepsilon_i$ is the morphism defined at the beginning of Subsection~\ref{RAPPEL}).
\end{Lemma}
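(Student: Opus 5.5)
The first claim is a direct matrix computation. I will compute the matrix of right multiplication by $f$ on $\DD_P$ in the left $\LL_P$-basis $1, T, \ldots, T^{r-1}$. For $0 \le j < r$, expand $T^j f = \sum_k T^j f_k T^k = \sum_k \theta^j(f_k) T^{j+k}$, reducing $T^{j+k}$ to $x T^{j+k-r}$ when $j+k \ge r$, which is legitimate since $x \in \KK$ is central. The coefficient of $T^j$ in $T^j f$ comes only from $k=0$ (the case $k=r$ does not occur), so the $(j,j)$ diagonal entry is $\theta^j(f_0)$. The trace is therefore $\sum_{j=0}^{r-1}\theta^j(f_0)$.

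It remains to identify $\sum_{j}\theta^j(f_0)$ with $\operatorname{Tr}_{\LL_P/\KK_P}(f_0)$. Since $\LL_P = \LL\otimes_\KK \KK_P$ is a Galois $\KK_P$-algebra with group generated by $\theta$ (acting on the first factor), the trace is the sum over the Galois orbit. Concretely, under $\LL_P \simeq \prod_{Q|P}\LL_Q$, the automorphism $\theta$ permutes the factors cyclically, and one checks that the sum equals $\operatorname{Tr}_{\LL_P/\KK_P}$. The cleanest route is to pick a $\KK$-basis of $\LL$ and note that trace commutes with base change $\KK \to \KK_P$. On $\LL$ one has $\operatorname{Tr}_{\LL/\KK} = \sum_j \theta^j$, so this identity extends $\KK_P$-linearly. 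I should also remark that the reduced trace on $\DD_P$ is the $\KK_P$-linear extension of that on $\DD$, although the computation above works directly over $\LL_P$.

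For the second claim, fix $P = P_i$ and use $\varepsilon_i(f) = \sum_k f_k (u_{P_i}\theta)^k \in \operatorname{End}_{\KK_{P_i}}(\LL_{P_i})$. I will compute the $\KK_{P_i}$-trace of each term $f_k(u\theta)^k = f_k N_k \theta^k$, where $N_k = u\,\theta(u)\cdots\theta^{k-1}(u) \in \LL_{P_i}$. The key fact is that for $c \in \LL_{P_i}$ and $1 \le k < r$, the endomorphism $y \mapsto c\,\theta^k(y)$ has trace zero. This follows from Dedekind/Artin independence, or more concretely from base change to a splitting. After tensoring with $\LL_{P_i}$ over $\KK_{P_i}$, the algebra $\LL_{P_i}\otimes\LL_{P_i} \simeq \LL_{P_i}^r$ diagonalizes multiplication by $c$ as $(\theta^m(c))_m$ and turns $\theta^k$ into a fixed-point-free cyclic shift of the $r$ factors, so all diagonal entries vanish. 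Alternatively, in a normal basis, $\theta^k$ is a permutation matrix without fixed points, but multiplication by $c$ is not diagonal there, so the tensor-product argument is the safer choice.

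The $k=0$ term is multiplication by $f_0$, whose $\KK_{P_i}$-trace is $\operatorname{Tr}_{\LL_{P_i}/\KK_{P_i}}(f_0)$. Summing the terms and applying the first part gives $\operatorname{Tr}(\varepsilon_i(f)) = \operatorname{Tr}_{\LL_{P_i}/\KK_{P_i}}(f_0) = \Trd(f)$. The main obstacle is the vanishing of the trace of $c\,\theta^k$ for $0<k<r$ over the non-field algebra $\LL_{P_i}$. I will handle it with the tensor-product splitting, which works uniformly whether or not $P_i$ splits.
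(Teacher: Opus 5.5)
Your proof is correct. The paper gives no argument of its own for this lemma; it only says the proof is similar to that of \cite[Lemma 2]{1}. Your write-up is therefore a self-contained verification rather than a variant of a displayed proof.

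Both of your steps hold up. For the first claim, the diagonal of right multiplication by $f$ in the left $\LL_P$-basis $1,T,\ldots,T^{r-1}$ is $(\theta^j(f_0))_j$, since $x$ is central. Then $\sum_j\theta^j=\operatorname{Tr}_{\LL_P/\KK_P}$ by $\KK_P$-linear extension from $\LL/\KK$. For the second claim, use the splitting $\LL_{P_i}\otimes_{\KK_{P_i}}\LL_{P_i}\simeq\LL_{P_i}^r$, $a\otimes b\mapsto(\theta^m(a)b)_m$. Under it, $c\,\theta^k$ becomes $(y_m)_m\mapsto(\theta^m(c)\,y_{m+k})_m$, which has zero diagonal for $0<k<r$. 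So only the $k=0$ term contributes.

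A more conceptual route to the second identity is available. Since $\varepsilon_i$ is an isomorphism of $\KK_{P_i}$-algebras $\DD_{P_i}\simeq\operatorname{End}_{\KK_{P_i}}(\LL_{P_i})$, one can invoke that the reduced trace of a central simple algebra does not depend on the chosen splitting. That is shorter, but it presupposes that the paper's concrete definition of $\Trd$ agrees with the intrinsic reduced trace of $\DD_{P_i}$. Recall that the paper defines $\Trd$ as the trace of right multiplication on $\DD_P$ viewed as a left $\LL_P$-module, where $\LL_P$ need not be a field. Establishing that agreement requires its own base-change argument. Your computation avoids this identification entirely and handles all decomposition types of $P_i$ uniformly.

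You also correctly read $P$ as a place of $\KK$; the $\Pl_\LL$ in the statement is a typo.
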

\begin{proof}
	It is similar to the proof of \cite[Lemma 2]{1}.
\end{proof}

\begin{Lemma}\label{TREDSUP}
	Let $P \in \Pl_{\KK}$, and $f \in \DD_P$.
	We have $v_P(\Trd(f))\geq \omega_P(f).$
\end{Lemma}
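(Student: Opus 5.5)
By Lemma~\ref{PRELLEMMA}, we have $\Trd(f)=\operatorname{Tr}_{\LL_P/\KK_P}(f_0)$. So the statement reduces to bounding the valuation of the trace of the constant coefficient $f_0 \in \LL_P = \prod_{Q|P}\LL_Q$ in terms of $\omega_P(f)$.

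We first bound $f_0$ itself. By definition of the gauge, $\omega_P(f)=\min_{Q|P} \omega_Q(f)/e_P$. Taking only the term $j=0$ in the definition of $\omega_Q$ gives $\omega_Q(f)\le w_Q((f_0)_Q)$. Hence, for every $Q|P$,
\[
  \frac{w_Q((f_0)_Q)}{e_P}\ \ge\ \omega_P(f).
\]

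Next we pass to the trace. Under the identification $\LL\otimes_\KK\KK_P\simeq\prod_{Q|P}\LL_Q$, we have
\[
  \operatorname{Tr}_{\LL_P/\KK_P}(f_0)=\sum_{Q|P}\operatorname{Tr}_{\LL_Q/\KK_P}\big((f_0)_Q\big).
\]
For each local extension $\LL_Q/\KK_P$, the trace maps $\{y : w_Q(y)\ge m\}$ into $\{z : v_P(z)\ge \lceil m/e_P\rceil\}$. The cleanest way to see this is that $\operatorname{Tr}(y)$ is a sum of Galois conjugates of $y$ (or, in general, of conjugates in an algebraic closure), all of which have the same valuation $w_Q(y)/e_P$ when normalised to $v_P$. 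Thus $v_P(\operatorname{Tr}(y))\ge w_Q(y)/e_P$ by the ultrametric inequality. Since $\LL/\KK$ is Galois, the completion $\LL_Q/\KK_P$ is Galois as well, so $\operatorname{Tr}_{\LL_Q/\KK_P}(y)=\sum_{\sigma}\sigma(y)$ and $w_Q(\sigma y)=w_Q(y)$; this gives the local bound directly.

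Applying the ultrametric inequality once more to the sum over $Q|P$, we obtain
\[
  v_P(\Trd(f))\ \ge\ \min_{Q|P}\frac{w_Q((f_0)_Q)}{e_P}\ \ge\ \omega_P(f).
\]

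The main point needing care is the local trace bound: we need $\operatorname{Tr}$ to be compatible with the valuation normalisation $w_Q=e_Pv_P$. The Galois-invariance argument above handles this without any subtlety. Note that $v_P(\Trd(f))$ is an integer while $\omega_P(f)$ may be fractional, but this only strengthens the inequality and causes no issue.
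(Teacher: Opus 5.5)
Your proof is correct and follows essentially the same route as the paper. Both arguments reduce to $f_0$ via Lemma~\ref{PRELLEMMA}, write the trace as a sum of conjugates that all share the valuation of $f_0$, apply the ultrametric inequality, and then use the $j=0$ term to get $\omega_P(f_0)\geq\omega_P(f)$. The only cosmetic difference is that the paper writes $\operatorname{Tr}_{\LL_P/\KK_P}(f_0)=\sum_{i=0}^{r-1}\theta^i(f_0)$ directly in $\LL_P$ and uses the $\theta$-invariance of $\omega_P$ (since $\theta$ permutes the places above $P$), whereas you split into local traces over each $\LL_Q/\KK_P$.
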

\begin{proof}
	Let $f=f_0+f_1 T+\cdots+f_{r-1} T^{r-1} \in \DD_P$ with all $f_i \in \LL_P$.
	By Lemma \ref{PRELLEMMA} we have
	$\Trd(f)=\operatorname{Tr}_{\LL/\KK}(f_0).$
	We conclude by the ultrametric inequality of $\omega_P$ that
	$v_{P}\left(\Trd(f)\right)=\omega_P\left(\operatorname{Tr}_{\LL/\KK}(f_0)\right)\geq \operatorname{min}_{0 \leq i < r}(\omega_P(\theta^{i}(f_0))) = \omega_P(f_0) \geq \omega_P(f)$.
\end{proof}

\begin{Definition}\label{REDTrOMEGA}
	We define the \emph{reduced trace} on $\Omega_{\DD}$ by
	$$\begin{array}{rcl}
	\Trd: \quad \Omega_{\DD} & \longrightarrow & \Omega_{\KK} \\
	f \otimes \eta & \mapsto & \Trd(f) \cdot \eta
	\end{array}$$
	For $\nu \in \Omega_{\DD}$,
	we define the \emph{reduced residue} of $\nu$ at a place $P$ by
	$$\Resrd_{P}(\nu ) :=\res_{P}(\Trd(\nu)),$$
	where the right-hand side is the classical residue of $\Trd(\nu) \in \Omega_{\KK}$ at the place $P$.
\end{Definition}

\begin{Proposition}\label{CRUCLEMMA}
	Let $i \in [s]$. Then,
	for any $\nu \in \Omega_\DD$ such that $\Res_{P_i}(\nu)$ is defined, we have
	$\Resrd_{P_i}(\nu) =\operatorname{Tr}_{\FF}(\Res_{P_i}(\nu ) )$.
\end{Proposition}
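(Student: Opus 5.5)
Fix a uniformizer $t_i \in \KK$ at $P_i$ and write $\nu = f\,\frac{dt_i}{t_i}$ with $f \in \DD$. Since $\Res_{P_i}(\nu)$ is defined, $\omega_{P_i}(f)\geq 0$, so $f \in \Lambda_{P_i}$. Both sides will be reduced to expressions in $f$. For the left-hand side, Definition~\ref{REDTrOMEGA} gives $\Trd(\nu) = \Trd(f)\,\frac{dt_i}{t_i}$. By Lemma~\ref{TREDSUP} we have $v_{P_i}(\Trd(f)) \geq \omega_{P_i}(f) \geq 0$, so $\Trd(f) \in \mathcal{O}_{P_i}$. The classical residue of a differential $a\,\frac{dt_i}{t_i}$ with $a \in \mathcal{O}_{P_i}$ at the rational place $P_i$ is the reduction $a \bmod P_i \in \FF$. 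Hence $\Resrd_{P_i}(\nu) = \Trd(f) \bmod P_i$.

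For the right-hand side, $\Res_{P_i}(\nu) = \bar\varepsilon_i(f)$, which is the reduction modulo $P_i$ of the $\mathcal{O}_{P_i}$-linear endomorphism $\varepsilon_i(f)$ of $\mathcal{O}_{\LL,P_i}$ (see Equation~\eqref{EPS}). We need $\mathcal{O}_{\LL,P_i}$ to be a free $\mathcal{O}_{P_i}$-module of rank $r$. This holds because $P_i$ is unramified and $\mathcal{O}_{\LL,P_i}$ is a finitely generated torsion-free module over the DVR $\mathcal{O}_{P_i}$. Choosing an $\mathcal{O}_{P_i}$-basis, $\varepsilon_i(f)$ is given by a matrix in $\mathcal{O}_{P_i}^{r\times r}$. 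The reduction of this basis is an $\FF$-basis of $V_i = \mathcal{O}_{\LL,P_i}/P_i\mathcal{O}_{\LL,P_i}$, and $\bar\varepsilon_i(f)$ is represented by the reduced matrix. Traces commute with reduction of matrices, so $\operatorname{Tr}_\FF(\bar\varepsilon_i(f)) = \operatorname{Tr}_{\KK_{P_i}}(\varepsilon_i(f)) \bmod P_i$.

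It remains to apply the second part of Lemma~\ref{PRELLEMMA}, which gives $\Trd(f) = \operatorname{Tr}_{\KK_{P_i}}(\varepsilon_i(f))$. Note that the trace of $\varepsilon_i(f)$ as a $\KK_{P_i}$-endomorphism of $\LL_{P_i}$ equals the trace of its restriction to the lattice $\mathcal{O}_{\LL,P_i}$, since a basis of the lattice is a $\KK_{P_i}$-basis of $\LL_{P_i}$. Combining the three steps yields $\Resrd_{P_i}(\nu) = \operatorname{Tr}_\FF(\Res_{P_i}(\nu))$. By Proposition~\ref{WELLDEF} the choice of $t_i$ does not matter, and the classical residue is intrinsic, so the argument is independent of choices.

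The main obstacle is the compatibility in the second paragraph, which relies on freeness of $\mathcal{O}_{\LL,P_i}$ and on the lattice/reduction identification. The rest is direct.
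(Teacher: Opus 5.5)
Your proposal is correct and follows the paper's proof essentially verbatim: write $\nu = f\frac{dt_i}{t_i}$, use Lemma~\ref{TREDSUP} to get $\Resrd_{P_i}(\nu) = \Trd(f) \bmod P_i$, then apply Lemma~\ref{PRELLEMMA} and reduce modulo $P_i$. Your one addition is the explicit argument that trace commutes with reduction, via an $\mathcal{O}_{P_i}$-basis of $\mathcal{O}_{\LL,P_i}$, which the paper leaves implicit; note that this freeness holds over the DVR $\mathcal{O}_{P_i}$ whether or not $P_i$ is ramified.
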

\begin{proof}
	We write $\nu = f \frac{d t_i}{t_i}$ where $t_i$ is a uniformizer at $P_i$, so that $\omega_{P}(f)\geq 0$ and $\Res_{P_i}(\nu ) = \bar\varepsilon_i(f)$.
	From Lemma~\ref{TREDSUP}, we deduce $v_P(\Trd(f))\geq 0$ and then
	$$\textstyle \Resrd_{P_i}(\nu) = \res_{P_i}\big(\Trd(f)\frac{{d}t_i}{t_i}\big)= \Trd(f) \bmod P_i.$$
	On the other hand, Lemma~\ref{PRELLEMMA} ensures that $\Trd(f) = \operatorname{Tr}_{\KK_{P_i}}(\varepsilon_i(f))$.
	Reducing modulo $P_i$, we obtain $\Trd(f) \bmod P_i = \operatorname{Tr}_{\FF}(\bar\varepsilon_i(f)) = \operatorname{Tr}_{\FF}(\Res_{P_i}(\nu ) )$.
\end{proof}

\begin{Theorem}[Residue Theorem] \label{ResTHEO}
	For $\nu \in \Omega_{\DD}$, we have $$\sum_{P \in \Pl_{\KK}} \Resrd_{P}(\nu ) = 0.$$
\end{Theorem}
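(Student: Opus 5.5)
The plan is to reduce the statement to the classical residue theorem on $\KK$, using that the reduced trace turns a $\DD$-valued differential into an honest differential of $\KK$. Write $\nu = f \otimes \eta$ with $f \in \DD$ and $\eta \in \Omega_\KK$ nonzero; this is always possible since $\Omega_\KK$ is one-dimensional over $\KK$. By Definition~\ref{REDTrOMEGA}, $\Trd(\nu) = \Trd(f)\,\eta$. The first thing to check is that $\Trd(f)$ lies in $\KK$, so that $\Trd(\nu)\in\Omega_\KK$ is a genuine global differential of $\KK$. By Lemma~\ref{PRELLEMMA}, $\Trd(f) = \operatorname{Tr}_{\LL/\KK}(f_0)$ where $f_0 \in \LL$ is the constant coefficient of $f$, so indeed $\Trd(f)\in \KK$.

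The second step is compatibility of the global reduced trace with the local ones, since $\Resrd_P(\nu)$ is defined as $\res_P$ of $\Trd(\nu)$ viewed at $P$. Under the identification $\DD_P = \KK_P \otimes_\KK \DD$, the image of $f$ in $\DD_P$ has constant coefficient the image of $f_0$ in $\LL_P = \LL\otimes_\KK \KK_P$. Moreover $\operatorname{Tr}_{\LL_P/\KK_P}$ restricted to $\LL$ equals $\operatorname{Tr}_{\LL/\KK}$, because traces commute with base change of the $\KK$-algebra $\LL$ to $\KK_P$. Hence the local reduced trace at $P$ of $f$ is the image of the global element $\operatorname{Tr}_{\LL/\KK}(f_0)\in\KK$ in $\KK_P$, for every place $P$ simultaneously.

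Consequently $\Resrd_P(\nu) = \res_P(\operatorname{Tr}_{\LL/\KK}(f_0)\,\eta)$ for all $P \in \Pl_\KK$. The classical residue theorem for the function field $\KK/\FF$ \cite[Theorem 4.3.2]{2} applies to the single global differential $\operatorname{Tr}_{\LL/\KK}(f_0)\eta$: it has nonzero residue at only finitely many places, and the sum of its residues vanishes. This gives the claim.

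I expect the only real obstacle to be bookkeeping rather than mathematics. One must make sure that the reduced residue at places where $\Res_P$ is undefined is still the classical residue of the global form, which holds by definition here. One must also confirm that the local reduced trace (right multiplication over $\LL_P$) coincides with the base change of the global one. For the latter, note that right multiplication by $f$ on $\DD_P$ is the $\KK_P$-linear extension of right multiplication on $\DD$, and the left-$\LL$-basis $1,T,\dots,T^{r-1}$ of $\DD$ is also a left-$\LL_P$-basis of $\DD_P$.
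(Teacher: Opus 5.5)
Your proposal is correct and follows the paper's route: the paper also just applies the classical residue theorem on $\KK$ to the global differential $\Trd(\nu)\in\Omega_\KK$. Your extra checks are valid details the paper leaves implicit: that $\Trd(f)=\operatorname{Tr}_{\LL/\KK}(f_0)\in\KK$, and that local reduced traces are base changes of the global one. The second check is not even needed here, since $\Resrd_P$ is defined directly from the global $\Trd(\nu)$.
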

\begin{proof}
	 We apply the residue theorem as stated in \cite[Theorem 3.27]{8} to $\Trd(\nu) \in \Omega_{\KK}$ and conclude.
\end{proof}

\subsubsection{Riemann--Roch spaces}\label{RROmega}

For all $i \in [s]$, we endow the space $V_i$ with the bilinear form 
$$\begin{array}{rcl}
\langle-,-\rangle: \quad V_i \times V_i & \rightarrow & \FF \\
(v_i, w_i) &\mapsto & \operatorname{Tr}_{{V_{i}}/{\FF}}({v_i} {w_i}).
\end{array}$$
It is nondegenerate given that the place $P_i$ is supposed to be unramified in the extension $\LL/\KK$.
If $W_i$ is a vector subspace of $V_i$, we write $W_i^\perp$ for its orthogonal and,
similarly, if $W = \bigoplus_{i=1}^s W_i$ is a subspace of $V_D$, we define $W^\perp = \bigoplus_{i=1}^s W_i^\perp$.

We now introduce Riemann--Roch spaces of differential forms: if $(A,W)$ is an extended divisor, we set
$$\Omega_{\DD}(A,W):=\big\{\nu \in \Omega_{\DD}(A-\operatorname{CoNr}(D)) \mid  \Res_{P_i}(\nu)_{|W_{i}^\perp}=0, \,\forall  i \in[s]\big\}.$$
It follows from Lemma \ref{VANISHING} that $\Omega_{\DD}(A,0)=\Omega_{\DD}(A)$ for any  $A\in \operatorname{Div}(\DD,D)$.

It turns out that Riemann--Roch spaces of differential forms are related to classical Riemann--Roch spaces.
In order to make this relationship explicit, we choose $t_D \in \KK$ such that $v_{P_i}(t_D) = 1$ for all $i \in [s]$ and set
\begin{equation}\label{eq:defcan}
\textstyle \can := \big(\frac{d t_D}{t_D}\big)_\DD = \big(\frac{d t_D}{t_D}\big)_\LL + \Delta.
\end{equation}
It is a canonical divisor of $\DD$ in the sense of Definition~\ref{DEFK}.

\begin{Definition}\label{EXTDUAL}
For an extended divisor $(A,W)$, we set $(A,W)^\star:=({\can+\operatorname{CoNr}(D)-A},W^\perp)$.
\end{Definition}

\begin{Theorem}\label{ISOGA}
	For any extended divisor $(A,W)$, we have the isomorphism:
	$$\begin{array}{rcl}
          {\Lambda_{\DD}((A,W)^\star)} &\stackrel{\sim}{\longrightarrow}& {\Omega_{\DD}(A,W)} \\
		f &\mapsto & f\frac{dt_D}{t_D}
	\end{array}$$
\end{Theorem}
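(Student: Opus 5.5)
The plan is to check directly that multiplication by the fixed differential $\frac{dt_D}{t_D}$ carries the defining conditions of $\Lambda_{\DD}((A,W)^\star)$ exactly onto those of $\Omega_{\DD}(A,W)$. There are three kinds of conditions: global bijectivity, the divisor inequalities, and the residue conditions at the $P_i$.

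\emph{Bijectivity.} Since $\Omega_\KK$ is a one-dimensional $\KK$-vector space generated by $\frac{dt_D}{t_D}$, every $\nu\in\Omega_\DD=\DD\otimes_\KK\Omega_\KK$ can be written uniquely as $f\otimes \frac{dt_D}{t_D}$ with $f\in\DD$. Hence $f\mapsto f\frac{dt_D}{t_D}$ is an $\FF$-linear (indeed $\KK$-linear) bijection $\DD\to\Omega_\DD$. It therefore suffices to show that $f$ lies in $\Lambda_{\DD}((A,W)^\star)$ if and only if $\nu:=f\frac{dt_D}{t_D}$ lies in $\Omega_\DD(A,W)$.

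\emph{Divisor conditions.} Using the decomposition $\nu=f\otimes\frac{dt_D}{t_D}$ and Equation~\eqref{eq:defcan}, we get
\[
(\nu)_\DD=(f)+\big(\tfrac{dt_D}{t_D}\big)_\LL+\Delta=(f)+\can .
\]
So $(\nu)_\DD\geq A-\operatorname{CoNr}(D)$ holds if and only if $\omega_Q(f)+(\can+\operatorname{CoNr}(D)-A)_Q\geq 0$ for every $Q$. That is, $\nu\in\Omega_\DD(A-\operatorname{CoNr}(D))$ if and only if $f\in\Lambda_\DD(\can+\operatorname{CoNr}(D)-A)$.

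\emph{Local computation at each $P_i$.} Here I must verify that $\can+\operatorname{CoNr}(D)-A$ lies in $\operatorname{Div}(\DD,D)$, so that $(A,W)^\star$ is a genuine extended divisor. Fix $Q\mid P_i$; since $P_i$ is unramified, $e_{P_i}=1$.
\begin{itemize}
\item Hypothesis \Htpl{P_i} gives $w_Q(u_Q)=w_Q(x)/r\in\mathbb Z$, so $r\mid v_{P_i}(x)$. Hence $b_{P_i}=1$ and $\Delta_Q=0$.
\item Because $Q\mid P_i$ is unramified and $v_{P_i}(t_D)=1$, $t_D$ is a uniformizer at $Q$ and $(dt_D)_\LL$ has coefficient $0$ at $Q$. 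So $\big(\frac{dt_D}{t_D}\big)_\LL$ has coefficient $-1$ at $Q$.
\end{itemize}
Therefore $\can_Q=-1$, while $\operatorname{CoNr}(D)_Q=1$ and $A_Q=0$. The coefficient of $\can+\operatorname{CoNr}(D)-A$ at $Q$ is thus $0$, as required.

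The same computation shows that any $f$ in the relevant space satisfies $\omega_{P_i}(f)\geq 0$. Since $t_D$ is a uniformizer of $\KK$ at $P_i$, Definition~\ref{DEFRES} applies with $t_i=t_D$, and Proposition~\ref{WELLDEF} ensures the choice of uniformizer is harmless. This gives $\Res_{P_i}(\nu)=\bar\varepsilon_i(f)$.

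\emph{Residue conditions.} The condition $\Res_{P_i}(\nu)_{|W_i^\perp}=0$ defining $\Omega_\DD(A,W)$ becomes $\bar\varepsilon_i(f)_{|W_i^\perp}=0$. This is precisely the condition defining $\Lambda_\DD(\can+\operatorname{CoNr}(D)-A,W^\perp)$, since $(W^\perp)_i=W_i^\perp$. Combining the three parts yields the claimed isomorphism.

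I expect the main obstacle to be the local bookkeeping at the $P_i$: showing that $b_{P_i}=1$ (so $\Delta$ vanishes there) and that $\can$ has coefficient exactly $-1$ there. This is what makes $(A,W)^\star$ lie in $\operatorname{Div}(\DD,D)$ and makes the residue defined via $t_D$ coincide with the evaluation $\bar\varepsilon_i(f)$.
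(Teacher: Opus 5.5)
Your proof is correct and takes the paper's approach: it is the unpacking of definitions that the paper compresses into ``It follows directly from the definitions.'' Your explicit local check at each $Q\mid P_i$ is a useful addition. There, $b_{P_i}=1$ by \Htpl{P_i}, hence $\Delta_Q=0$, $\can_Q=-1$ and $\omega_{P_i}(f)\geq 0$. The paper uses this tacitly here, and again when it asserts $\can+\operatorname{CoNr}(D)-A\in\operatorname{Div}(\DD^\star,D)$ in the proof of the duality theorem.
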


\begin{proof}
It follows directly from the definitions.
\end{proof}

\subsection{Serre duality and Riemann--Roch Theorem} \label{SERREDUALITY}

The last step of the proof of the Riemann--Roch Theorem consists in relating the adelic quotient $\mathcal{A}_{\DD} \slash (\mathcal{A}_{\DD}(A,W)+\DD)$
to the space of differential forms $\Omega_{\DD^\star}(A, W)$, where $\DD^\star$ is the adjoint algebra as introduced in Definition~\ref{DEFADJOINTALGEBRA}.

We recall from Definition~\ref{DEFADJUNCTION} that we have already introduced an adjunction $\DD \to \DD^\star$, $f \mapsto f^\star$ at the level of Ore polynomials.
It naturally extends to differential forms as follows:
$$\begin{array}{rcll}
	\Omega_{\DD} & \longrightarrow & \Omega_{{\DD}^\star} \\
	\nu = f \otimes \eta       & \mapsto & {\nu}^{\star}:={f}^{\star}\otimes \eta
\end{array}$$
with, as usual, $f \in \DD$ and $\eta \in \Omega_\KK$.
Besides, for all $i \in [s]$, the algebra $\operatorname{End}_{\FF}(V_{i})$ is also endowed with an adjunction $\varphi_i \mapsto \varphi_i^\star$ defined by the standard property
$$\langle \varphi_i(v_i),w_i\rangle = \langle v_i, \varphi_i^{\star}(w_i)\rangle \qquad \text{(}v_i, w_i \in V_i\text{)}.$$
We recall the classical formulas $\ker(\varphi_i^\star) = \operatorname{im}(\varphi_i)^\perp$ and $\operatorname{im}(\varphi_i^\star) = \ker(\varphi_i)^\perp$.

If $\varphi$ is a tuple $(\varphi_1, \ldots, \varphi_s)$, we will often write $\varphi^\star = (\varphi_1^\star, \ldots, \varphi_s^\star)$ for simplicity.

\begin{Lemma} \label{COMMLEMMA}
	Let $A \in \operatorname{Div}(\DD,D)$ and let $f \in \DD$.
	If $i \in [s]$ is such that $\omega_{P_i}(f) \geq 0$, then
	$$\textstyle
	\bar\varepsilon_i(f)^\star ={{\bar\varepsilon}_i({f}^{\star})}
	\quad \text{and} \quad
	\Resrd_{P_i}\big(f \frac{dt_D}{t_D}\big)^\star =
	\Resrd_{P_i}\big(f^\star \frac{dt_D}{t_D}\big)$$
	where, in a slight abuse of notation, we continue to write $\bar\varepsilon_i$ for the evaluation morphism on $\DD^\star$.
\end{Lemma}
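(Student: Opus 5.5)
The plan is to prove the first identity at the level of the completed algebras, before reduction modulo $P_i$, and then to deduce the second one formally. Throughout, $\bar\varepsilon_i$ on $\DD^\star$ is the evaluation built from $u^\star_{P_i} := u_{P_i}^{-1}$. This choice is legitimate: $w_Q(u_Q^{-1}) = w_Q(x^{-1})/r$ and $\operatorname{Nr}(u_Q^{-1}) = x^{-1}$, so $u_{P_i}^{-1}$ satisfies \Htpl{P_i} for $\DD^\star$. I take this to be the (implicit) convention of the statement, and I will state it explicitly in the proof.

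\textbf{Step 1: adjoints in $\operatorname{End}_{\KK_{P_i}}(\LL_{P_i})$.} Endow $\LL_{P_i}$ with the nondegenerate $\KK_{P_i}$-bilinear form $(a,b)\mapsto \operatorname{Tr}_{\LL_{P_i}/\KK_{P_i}}(ab)$, and write $\dagger$ for the corresponding adjunction. I compute the adjoints of the two generators of the image of $\varepsilon_i$.
\begin{itemize}
\item For $\alpha\in\LL_{P_i}$, multiplication by $\alpha$ is self-adjoint.
\item Since the trace is $\theta$-invariant, $\operatorname{Tr}(u\theta(a)\,b)=\operatorname{Tr}(a\,\theta^{-1}(ub))$. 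Hence $(u_{P_i}\theta)^\dagger=\theta^{-1}\circ u_{P_i}=(u_{P_i}^{-1}\theta)^{-1}$.
\end{itemize}
In $\DD^\star$ the element $T$ evaluates to $u_{P_i}^{-1}\theta$, so $T^{-1}$ evaluates to $(u_{P_i}^{-1}\theta)^{-1}=(u_{P_i}\theta)^\dagger$.

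Now write $f=\sum_j\alpha_jT^j$, so that $f^\star=\sum_j T^{-j}\alpha_j$. Since $\dagger$ reverses products,
$$\varepsilon_i(f)^\dagger=\sum_j \big((u_{P_i}\theta)^\dagger\big)^j\alpha_j=\varepsilon_i(f^\star).$$

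\textbf{Step 2: reduction modulo $P_i$.} Since $\omega_{P_i}(f^\star)=\omega_{P_i}(f)\ge 0$, both $f$ and $f^\star$ lie in the respective orders $\Lambda_{P_i}$. Therefore $\varepsilon_i(f)$ and $\varepsilon_i(f^\star)$ stabilise $\mathcal{O}_{\LL,P_i}$.

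Because $P_i$ is unramified, the trace form restricted to $\mathcal{O}_{\LL,P_i}$ is $\mathcal{O}_{P_i}$-valued and reduces modulo $P_i$ to the form $\langle-,-\rangle$ on $V_i$. Reducing the identity
$$\operatorname{Tr}(\varepsilon_i(f)(a)\,b)=\operatorname{Tr}(a\,\varepsilon_i(f^\star)(b)), \qquad a,b\in\mathcal{O}_{\LL,P_i},$$
gives $\langle\bar\varepsilon_i(f)(\bar a),\bar b\rangle=\langle\bar a,\bar\varepsilon_i(f^\star)(\bar b)\rangle$. By nondegeneracy of $\langle-,-\rangle$, this says $\bar\varepsilon_i(f)^\star=\bar\varepsilon_i(f^\star)$.

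\textbf{Step 3: the residue identity.} The symbol in the second formula must be read as the residue endomorphism $\Res_{P_i}$ of Definition~\ref{DEFRES}, since the adjunction is applied to it. Because $v_{P_i}(t_D)=1$, we may take $t_i=t_D$ in Definition~\ref{DEFRES}, and Proposition~\ref{WELLDEF} makes this choice harmless. This gives
$$\Res_{P_i}\big(f\tfrac{dt_D}{t_D}\big)=\bar\varepsilon_i(f)\quad\text{and}\quad \Res_{P_i}\big(f^\star\tfrac{dt_D}{t_D}\big)=\bar\varepsilon_i(f^\star),$$
both defined because $\omega_{P_i}(f^\star)=\omega_{P_i}(f)\ge0$. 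The second identity then follows from the first.

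The main obstacle is Step 1. One has to pin down the evaluation convention on $\DD^\star$ and check that the inverse of $u^\star_{P_i}\theta$ is exactly $\theta^{-1}\circ u_{P_i}$, the adjoint of $u_{P_i}\theta$. The order-reversal of $\dagger$ has to match the rewriting $T^{-j}\alpha_j$ in $f^\star$, with no stray $\theta$-twists of $u_{P_i}$ appearing in the powers. Everything after that is reduction and bookkeeping.
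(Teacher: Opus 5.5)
Your proof is correct, but it takes a different route from the paper's. The paper reduces by linearity to a monomial $f=f_jT^j$ and rewrites $f^\star=\frac{T^{r-j}}{x^{-1}}f_j$ in $\DD^\star$. It computes $\bar\varepsilon_i(f^\star)=\frac{\theta^{r-j}(f_j)}{\operatorname{Nr}_{r-j,r}(u_{P_i}^{-1})}\theta^{r-j}$ explicitly through partial norms of $u_{P_i}$. It then checks the trace identity directly on $V_i$, using $\theta$-invariance of the trace and the relation $\theta^j(\operatorname{Nr}_{r-j,r}(u_{P_i}))=\operatorname{Nr}_{0,j}(u_{P_i})$.

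You instead compute adjoints only of the two generators: multiplications are self-adjoint, and $(u_{P_i}\theta)^\dagger=\theta^{-1}\circ u_{P_i}$ is exactly the image of $T^{-1}$ under the evaluation of $\DD^\star$. The anti-multiplicativity of both $\dagger$ and $f\mapsto f^\star$ then does the rest. This removes all norm bookkeeping.

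Because you prove the identity in $\operatorname{End}_{\KK_{P_i}}(\LL_{P_i})$ for every $f$ and only reduce afterwards, you never need $u_{P_i}$ or the individual $f_j$ to be integral. The paper's computation on $V_i$ relies implicitly on the combined coefficient $\theta^{r-j}(f_j)\operatorname{Nr}_{r-j,r}(u_{P_i})$ being integral, which is true when $\omega_{P_i}(f_jT^j)\geq 0$ but could fail for $u_{P_i}$ and $f_j$ separately when $v_{P_i}(x)\neq 0$. In exchange, the paper's route gives an explicit formula for $\bar\varepsilon_i(f^\star)$ on monomials.

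You also spell out the second identity, by taking $t_i=t_D$ in Definition~\ref{DEFRES}, and the reading of $\Resrd$ as $\Res$. The paper leaves both implicit; your reading is the one used in Lemma~\ref{WELLDEF2}.
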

\begin{proof}
	We reuse the notation $u_{P_i}$ introduced in Equation \eqref{H2PREC}.
	By linearity, it is enough to check that
	$\langle{\bar\varepsilon}_i({f}^{\star})(v_i),w_i\rangle=\langle v_i,\bar\varepsilon_i(f)(w_i)\rangle$ for $f=f_j T^j$ and $v_i, w_i \in V_i$.
	Setting $\operatorname{Nr}_{a,b}(u_P) := \prod_{k=a}^{b-1} \theta^k(u_P)$, we compute
	$$\bar\varepsilon_i({f}^{\star})=  \bar\varepsilon_i\left(\frac{T^{r-j}}{x^{-1}} f_j\right)=\frac{\operatorname{Nr}_{0,r-j}(u_P^{-1})\theta^{r-j} (f_j)}{\operatorname{Nr}_{0,r}(u_P^{-1})}\theta^{r-j} =\frac{\theta^{r-j} (f_j)}{\operatorname{Nr}_{r-j,r}(u_P^{-1})}\theta^{r-j}$$
	as changing $x \mapsto x^{-1}$ in ${\DD}^\star$ induces the change $u_{P_i} \mapsto u_{P_i}^{-1}$.
	Using  $\theta^j(\operatorname{Nr}_{r-j,r}(u_{P_i}))=\operatorname{Nr}_{0,j}(u_{P_i})$, we get
        \begin{align*}
	\operatorname{Tr}_{{V_{i}}/{\FF}}\left(\frac{\theta^{r-j} (f_j)}{\operatorname{Nr}_{r-j,r}(u_{P_i}^{-1})}\theta^{r-j}({v_i}) {w_i}\right)
        & = \sum_{k=0}^{r-1} \theta^k\left( \frac{\theta^{r-j} (f_j)}{\operatorname{Nr}_{r-j,r}(u_{P_i}^{-1})}\theta^{r-j}({v_i}) {w_i}\right) \\ 
        & = \operatorname{Tr}_{{V_{i}}/{\FF}}({v_i}  f_j \operatorname{Nr}_{0,j}(u_{P_i})\theta^{j}({w_i})),
        \end{align*}
	which concludes the proof.
\end{proof}

\begin{Theorem} \label{DUALITY}
	For any extended divisor $(A,W)$, the $\FF$-bilinear form
	$$\begin{array}{rcl} \Phi : \,
        \mathcal{A}_{\DD} / (\mathcal{A}_{\DD}(A,W)+\DD) \times  \Omega_{{\DD}^\star}(A,W) & \longrightarrow & \FF \\
	(\alpha, \nu) & \mapsto & \sum\limits_{P \in \Pl_{\KK}} \Resrd\left(\alpha_{P} {\nu}^{\star} \right)
        \end{array}$$
	is a perfect pairing.
\end{Theorem}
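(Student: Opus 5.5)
The plan is the classical three-step route to Serre duality: show that $\Phi$ is well defined, show that it is non-degenerate on the side of differentials, and then compare dimensions.

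\textbf{Well-definedness.} I would first check that $\Phi(\alpha,\nu)$ is a finite sum and that it vanishes when $\alpha \in \DD$ or $\alpha \in \mathcal{A}_{\DD}(A,W)$.

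For $\alpha = f\in\DD$, note that $f\nu^\star$ is a global differential on $\DD$: the adjunction is an anti-isomorphism, so $f\nu^\star = (\nu f^\star)^\star$-type manipulations keep us inside $\Omega$. Hence the Residue Theorem~\ref{ResTHEO} gives $\sum_P \Resrd_P(f\nu^\star)=0$.

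For $\alpha\in\mathcal{A}_{\DD}(A,W)$, the terms split by type of place.
\begin{itemize}
\item At places $P\neq P_i$, we have $\omega_P(\alpha_P\nu^\star)\ge \omega_P(\alpha_P)+\omega_P(\nu^\star)$ by surmultiplicativity (Lemma~\ref{SURMULT}). This should be bounded below by $-\min(A)+(A+\text{canonical part})$. Lemma~\ref{TREDSUP} then gives $v_P(\Trd)\ge\cdot$, so the trace differential is regular at $P$ and its residue vanishes. The bookkeeping of $\Delta$, which carries the fractional parts $\frac{b_P-1}{b_P}$, is exactly what should make the floor come out right. I would do this by reducing to the components via $\pi_i$ and $\pi^\star_i$, as in the proof of Theorem~\ref{ELEM}.
\item At $P_i$, write $\nu^\star = g^\star\frac{dt_D}{t_D}$ with $\omega_{P_i}(g)\ge0$. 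Proposition~\ref{CRUCLEMMA} turns the residue into $\operatorname{Tr}_\FF(\bar\varepsilon_i(\alpha)\bar\varepsilon_i(g^\star))=\operatorname{Tr}_\FF(\bar\varepsilon_i(\alpha)\bar\varepsilon_i(g)^\star)$, using Lemma~\ref{COMMLEMMA}. Since $\bar\varepsilon_i(\alpha)$ kills $W_i$ and $\bar\varepsilon_i(g)$ kills $W_i^\perp$, we get $\operatorname{im}\bar\varepsilon_i(g)^\star\subset W_i$. The composite therefore vanishes and has trace $0$.
\end{itemize}

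\textbf{Injectivity in $\nu$.} Let $\nu\neq0$ with $\Phi(\cdot,\nu)=0$. Take the local component at a single place and exploit local non-degeneracy.
\begin{itemize}
\item At a place $Q\mid P$ with $P\notin D$, choose $\alpha$ supported at $P$ with $\omega$-value just below the bound $-A_Q$, so that $\alpha\nu^\star$ has reduced-trace valuation exactly $-1$ relative to $dt$. This forces $(\nu)_{\DD^\star}\ge A$ at all such places.
\item At $P_i$, choose $\alpha$ with prescribed $\bar\varepsilon_i(\alpha)$ vanishing on $W_i$. The trace pairing on $\operatorname{End}(V_i)$ is perfect, so $\bar\varepsilon_i(g)^\star$ lands in $W_i$, i.e.\ $\bar\varepsilon_i(g)$ vanishes on $W_i^\perp$. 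Adjusting $\alpha$ deeper shows that $\nu$ has the required pole order.
\end{itemize}
This shows $\nu$ satisfies the defining conditions of $\Omega_{\DD^\star}(A-Q,\ldots)$, i.e.\ a strictly smaller divisor. Iterating the argument, or more cleanly running it with $A$ replaced by an arbitrarily large divisor $B\ge A$, gives $\nu\in\Omega_{\DD^\star}(B,\ldots)$ for $B$ of huge degree. That space is $0$ by Theorem~\ref{ISOGA} and the vanishing in Theorem~\ref{RTHEO} for negative degree. Hence $\nu=0$, and the map $\nu\mapsto\Phi(\cdot,\nu)$ is injective.

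\textbf{Dimension count.} It remains to show $\dim\Omega_{\DD^\star}(A,W)\ge i(A,W)=\dim \mathcal{A}_\DD/(\mathcal{A}_\DD(A,W)+\DD)$, the equality being Proposition~\ref{ROCH}; I expect this to be the main obstacle. I would mimic the classical argument, in which Weil differentials form a one-dimensional $\DD^\star$-vector space. Consider the space $\mathcal{W}$ of linear forms on $\mathcal{A}_\DD$ vanishing on $\DD+\mathcal{A}_\DD(B,W')$ for some extended divisor. The map $\Omega_{\DD^\star}\to\mathcal{W}$ is injective by the previous step, and it is $\DD^\star$-linear, with $\DD^\star$ acting on the right via adjunction. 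I would then show that $\mathcal{W}$ has dimension at most $1$ over $\DD^\star$, hence that the map is surjective. This is done by comparing growth: for $B$ large, $\dim$ of forms vanishing on $\mathcal{A}(B)$ is $i(B)$, while $\Lambda_{\DD^\star}$-multiples of two independent forms would give roughly $2r\deg$ dimensions. This contradicts $i(-B)\approx r\deg B$ from Theorem~\ref{RTHEO}. Alternatively, I would reduce componentwise through Lemma~\ref{ADELEVAL} and the $\pi_i$-decomposition to the classical Serre duality on $\LL$ for $W=0$. The $W$-contribution is then handled by the exact-sequence count $r\dim W$ from the proof of Proposition~\ref{ROCH}, matched against the same count for $\Omega$ via residues and perfectness of the trace pairing on $\operatorname{End}_\FF(V_i)$. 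Either way, the result is that every form vanishing on $\mathcal{A}_\DD(A,W)+\DD$ comes from some $\nu$, and the locality argument of the injectivity step places that $\nu$ in $\Omega_{\DD^\star}(A,W)$.
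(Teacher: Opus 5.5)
Your plan is sound and would prove the theorem once completed. The well-definedness step is the paper's Lemma~\ref{WELLDEF2}, and at the $P_i$ your argument is the paper's. The other two steps take a different and heavier route.

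\textbf{Well-definedness away from the $P_i$.} The bound via Lemma~\ref{SURMULT} and Lemma~\ref{TREDSUP} is too coarse for two reasons. First, $\omega_P$ takes a minimum over $Q\mid P$, which loses the needed information when $A$ is not Galois. Second, it never sees $\operatorname{Diff}(\LL/\KK)$, so under wild ramification it cannot rule out a pole of $\Trd(\alpha_P\nu^\star)$. You must use your componentwise fallback, as the paper does:
\begin{itemize}
\item write $\nu=f\otimes\eta$ with $\eta$ a unit at $P$;
\item note that $\Trd(\alpha_P f^\star)=\sum_j\operatorname{Tr}_{\LL_P/\KK_P}(\alpha_j f_j)$, with no cross terms;
\item obtain $w_Q(\alpha_j f_j)\ge -w_Q(\operatorname{Diff}(\LL/\KK))$ after rounding away the $\frac{b_P-1}{b_P}$;
\item conclude by the defining property of the different.
\end{itemize}

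\textbf{Injectivity.} No locality or large-$B$ argument is needed. The form $\psi(\nu)$ is defined on all of $\mathcal A_\DD$, so a single adele with nonzero pairing suffices. The paper writes $\nu=f\otimes\frac{dt_P}{t_P}$ and takes $\alpha=\beta\cdot(f^{-1})^\star$. Here $\beta$ is supported at one evaluation place, with a component $a$ satisfying $\operatorname{Tr}(a)\equiv 1 \bmod P$. Invertibility of $f$ in the division algebra $\DD^\star$ then gives $\Phi(\alpha,\nu)=1$.

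\textbf{Dimension count: the paper's route.} This is where the real divergence lies; the paper never introduces Weil differentials.
\begin{itemize}
\item Injectivity, Theorem~\ref{ISOGA} and Proposition~\ref{ROCH} give $\lambda_{\DD^\star}((A,W)^\star)\le i(A,W)$ for every extended divisor, over $\DD$ and over $\DD^\star$ alike.
\item Applying this to $(A,W)^\star$ over $\DD^\star$ is legitimate because $\DD^{\star\star}=\DD$ and $(A,W)^{\star\star}=(A,W)$.
\item Combined with $\deg(A,W)+\deg((A,W)^\star)=2g-2$, this yields the reverse inequality, so all inequalities are equalities.
\end{itemize}

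\textbf{Your route (a).} It can be completed, but two ingredients are missing from your sketch.
\begin{itemize}
\item You need the action $(g\cdot\mu)(\alpha)=\mu(g^\star\alpha)$; the $\DD^\star$-linearity of $\psi$ then comes from $\Trd(ab)=\Trd(ba)$.
\item The growth estimate must be run with Galois divisors, because the product bound of Lemma~\ref{LAGBELG1} requires the right factor's divisor to be Galois.
\end{itemize}

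\textbf{Your route (b).} Matching residue counts only reproduces the upper bound you already have from injectivity. What actually closes the argument is the following.
\begin{itemize}
\item The case $W=0$, for arbitrary $B\in\operatorname{Div}(\DD)$, follows at once from injectivity, Theorem~\ref{ELEM} and the $W=0$ step of the proof of Proposition~\ref{ROCH}, since $\Omega_{\DD^\star}(B)\simeq\Lambda_{\DD^\star}(\can-B)$.
\item Apply it to $B=A-\operatorname{CoNr}(D)$, which is allowed since $\mathcal A_\DD(A-\operatorname{CoNr}(D))\subset\mathcal A_\DD(A,W)$.
\item Then run your locality argument at the $P_i$. Surjectivity of $\Lambda_{P_i}\to\End_\FF(V_i)$ and perfectness of the trace pairing force $\Res_{P_i}(\nu)_{|W_i^\perp}=0$.
\end{itemize}

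\textbf{Comparison.} Route (b) in this form is a valid alternative, and it additionally characterizes exactly which forms come from $\Omega_{\DD^\star}(A,W)$. The paper's symmetric trick is shorter: it needs only the numerical Riemann--Roch theorem and the self-duality $(A,W)\mapsto(A,W)^\star$.
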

	Before diving into the proof, we show that this pairing is well-defined.
	\begin{Lemma}\label{WELLDEF2}
		For ${\nu} \in \Omega_{{\DD}^\star}(A,W)$,
		$\alpha\in \mathcal{A}_{\DD}(A,W)$ and $a\in \DD$,
		we have $\Phi(\alpha+a,\nu) =0$.
	\end{Lemma}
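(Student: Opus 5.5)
By bilinearity, $\Phi(\alpha+a,\nu)=\Phi(\alpha,\nu)+\Phi(a,\nu)$, so it suffices to show separately that $\Phi(a,\nu)=0$ for $a\in\DD$ and that $\Phi(\alpha,\nu)=0$ for $\alpha\in\mathcal{A}_{\DD}(A,W)$. The first part is global, the second is a place-by-place local computation.

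\emph{The principal part.} For $a\in\DD$ embedded diagonally, $a\nu^\star$ is a global element of $\Omega_{\DD}$ and $\Phi(a,\nu)=\sum_{P}\Resrd_P(a\nu^\star)$. This vanishes by the Residue Theorem~\ref{ResTHEO}. There is a small subtlety here: $\nu^\star\in\Omega_{\DD^{\star\star}}=\Omega_\DD$, and $a\nu^\star$ is a product inside $\DD\otimes\Omega_\KK$, so the theorem applies directly.

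\emph{The adelic part: places outside $D$.} Let $\alpha\in\mathcal{A}_\DD(A,W)$ and fix a place $P\notin\{P_1,\dots,P_s\}$. Write $\nu=g\otimes\eta$, so that $(\nu)_{\DD^\star}=(g)+(\eta)_\LL+\Delta\ge A$ away from $D$. Then
\[
\alpha_P\nu^\star=\alpha_Pg^\star\otimes\eta .
\]
By Lemma~\ref{SURMULT} and the invariance $\omega_P(g^\star)=\omega_P(g)$, we have $\omega_Q(\alpha_Pg^\star)\ge -A_Q+\omega_Q(g)$ for each $Q\mid P$, at least after taking the minimum over $Q$. Combining this with Lemma~\ref{TREDSUP} should give $v_P(\Trd(\alpha_Pg^\star\eta))\ge 0$, and hence zero residue.

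The delicate point is the contribution of $\Delta$ and of the different of $\LL/\KK$. The trace lands in $\Omega_\KK$, so we need a bound of the form
\[
v_P\big(\Trd(h)\big)\ \ge\ \Big\lceil \min_{Q}\frac{\omega_Q(h)+(\eta)_{\LL,Q}+\Delta_Q}{e_P}\Big\rceil - v_P(\eta).
\]
I would establish this with a refined version of Lemma~\ref{TREDSUP}. It only involves $\operatorname{Tr}_{\LL_P/\KK_P}(h_0)$, so the classical fact $\operatorname{Tr}(\mathfrak{D}^{-1}\mathcal{O}_{\LL,P})\subset\mathcal{O}_P$, with the different appearing via Riemann--Hurwitz $(\eta)_\LL=\operatorname{CoNr}((\eta)_\KK)+\operatorname{Diff}(\LL/\KK)$, handles the $\operatorname{Diff}$ part. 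The fractional part $\Delta_Q=(b_P-1)/b_P$ is absorbed by integrality, since $w_Q(h_0)\in\mathbb{Z}$ while the bound lies in $\frac{1}{b_P}\mathbb{Z}$. I expect this step to be the main obstacle.

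\emph{The adelic part: places $P_i$ in $D$.} Here $A$ vanishes near $P_i$, so $\omega_{P_i}(\alpha_{P_i})\ge 0$, and $\nu\in\Omega_{\DD^\star}(A-\operatorname{CoNr}(D))$ means $\nu=g\frac{dt_D}{t_D}$ with $\omega_{P_i}(g)\ge0$. These places are unramified, so no different appears. Then $\alpha_{P_i}\nu^\star=\alpha_{P_i}g^\star\frac{dt_D}{t_D}$, and Proposition~\ref{CRUCLEMMA} together with Lemma~\ref{COMMLEMMA} gives
\[
\Resrd_{P_i}(\alpha_{P_i}\nu^\star)=\operatorname{Tr}_\FF\big(\bar\varepsilon_i(\alpha_{P_i})\,\Res_{P_i}(\nu)^\star\big).
\]
Since $\bar\varepsilon_i(\alpha_{P_i})$ kills $W_i$, its image lies in a space determined by the quotient $V_i/W_i$. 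Since $\Res_{P_i}(\nu)$ kills $W_i^\perp$, the adjoint $\Res_{P_i}(\nu)^\star$ has image contained in $(W_i^\perp)^\perp=W_i$. Therefore $\bar\varepsilon_i(\alpha_{P_i})\circ\Res_{P_i}(\nu)^\star=0$, its trace is zero, and this completes the argument.
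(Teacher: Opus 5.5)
Your treatment of the principal part (Theorem~\ref{ResTHEO} applied to $a\nu^\star$) and of the evaluation places (Proposition~\ref{CRUCLEMMA}, Lemma~\ref{COMMLEMMA}, and $\operatorname{im}\Res_{P_i}(\nu)^\star\subset W_i\subset\ker\bar\varepsilon_i(\alpha_{P_i})$) is exactly the paper's. The gap is at the places $P\notin\{P_1,\dots,P_s\}$, and it is not where you locate it. The bound $\omega_Q(\alpha_Pg^\star)\ge -A_Q+\omega_Q(g)$ does not follow from Lemma~\ref{SURMULT}, which only controls $\omega_P$. It is also false in general. In $\alpha_iT^i\cdot T^{-j}g_j=\alpha_i\theta^{i-j}(g_j)T^{i-j}$, the factor $\theta^{i-j}(g_j)$ is controlled at $Q$ only by the condition on $g$ at a \emph{conjugate} place. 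Likewise $\omega_Q(g^\star)\neq\omega_Q(g)$ in general; only $\omega_P$ is invariant under adjunction. For instance, take $r=2$, $P$ split into $Q_1,Q_2$, $v_P(x)=0$, $A_{Q_1}=5$ and $A_{Q_2}=0$. Let $\alpha_P=\alpha_1T$ with $w_{Q_1}(\alpha_1)=-5$, $w_{Q_2}(\alpha_1)=0$, and $g=g_0$ with $w_{Q_1}(g_0)\ge 5$, $w_{Q_2}(g_0)=0$. Then $\omega_{Q_1}(\alpha_Pg^\star)=-5$. Passing to the minimum over $Q$ does not rescue it: you only obtain $e_P\,\omega_P(\alpha_Pg^\star)\ge-\max_{Q\mid P}A_Q+\min_{Q\mid P}\big(A_Q-(\eta)_{\LL,Q}-\Delta_Q\big)$, where $A$ cancels only if $A_Q$ is constant on the fibre over $P$. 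So your argument covers only divisors that are Galois (Definition~\ref{GALDIV}) outside $D$, whereas the lemma is needed for arbitrary $A\in\operatorname{Div}(\DD,D)$. This twisting is exactly what forces the Galois hypothesis in Lemma~\ref{LAGBELG1}.

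The missing idea is that only the $T^0$-coefficient of $\alpha_Pg^\star$ enters $\Trd$, and that coefficient involves no twist. Write $\alpha_P=\sum_j\alpha_jT^j$ and $g=\sum_jg_jT^j$, so that $g^\star=\sum_jT^{-j}g_j$. The constant term is then $\sum_j\alpha_jT^jT^{-j}g_j=\sum_j\alpha_jg_j$, hence $\Trd(\alpha_Pg^\star)=\sum_j\operatorname{Tr}_{\LL_P/\KK_P}(\alpha_jg_j)$ by Lemma~\ref{PRELLEMMA}. Choose $\eta$ with neither zero nor pole at $P$, so that $(\eta)_{\LL,Q}=w_Q(\operatorname{Diff}(\LL/\KK))$. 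The two hypotheses then give, at the \emph{same} place $Q$, $w_Q(\alpha_j)\ge -A_Q-j\frac{w_Q(x)}{r}$ and $w_Q(g_j)+w_Q(\operatorname{Diff}(\LL/\KK))\ge A_Q+j\frac{w_Q(x)}{r}-\frac{b_P-1}{b_P}$ (the sign of $x$ flips in $\DD^\star$). Adding them, $A_Q$ and $x$ cancel, and integrality yields $w_Q(\alpha_jg_j)\ge -w_Q(\operatorname{Diff}(\LL/\KK))$. So each trace lies in $\mathcal{O}_P$ and the residue vanishes. Your two ingredients (integrality of the trace on the inverse different, and absorption of $\Delta_Q$ by integrality) are the right ones, but they must be applied to these coefficientwise products. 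Separately, your displayed refined inequality is not correct as a general bound, since adding $\Delta_Q$ before rounding overshoots. Only its consequence is valid: if $\omega_Q(h)+(\eta)_{\LL,Q}+\Delta_Q\ge 0$ for all $Q\mid P$, then $w_Q(h_0)\ge -(\eta)_{\LL,Q}$.
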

	\begin{proof}
		As we have $a{\nu}^{\star} \in \Omega_{{\DD}}$,  we get
		$\Phi(a,\nu) =0$ by applying Theorem \ref{ResTHEO} to $a{\nu}^{\star}$.

		It remains to prove $\Phi(\alpha,\nu) =0$.
		Let $P \in \Pl_\KK$.
		We first assume that $P$ is one of the $P_i$, $i \in [s]$. Then $\Res_{P_i}(\alpha_P \nu^\star)$ is defined and
		$$\Res_{P_i}(\alpha_P \nu^\star) = \bar\varepsilon_i(\alpha_P) \circ \Res_{P_i}(\nu^\star) = \bar\varepsilon_i(\alpha_P) \circ \Res_{P_i}(\nu)^\star.$$
		Moreover, we know that $\bar\varepsilon_i(\alpha_P)$ and $\Res_{P_i}(\nu)$ vanish on $W_i$ and $W_i^\perp$ respectively. In other words
		$\operatorname{im} \Res_{P_i}(\nu)^\star = \big(\!\ker \Res_{P_i}(\nu)\big)^\perp \subset W_i \subset \ker \bar\varepsilon_i(\alpha_P)$.
		We deduce that $\Res_{P_i}(\alpha_P \nu^\star)$ vanishes and so does its trace, \ie $\Resrd_{P_i}(\alpha_P \nu^\star) = 0$.

		We now consider the case where $P$ is not an evaluation place.
		We write $\nu= f \otimes \eta$ where $f \in \DD$ and $\eta \in \Omega_\KK$ is chosen such that it has neither zero nor pole at $P$.
		We write $\alpha_P=\sum_{j=0}^{r-1}{\alpha_j} T^j$ (with $\alpha_j \in \LL_P$)
		and $f = \sum_{j=0}^{r-1}f_j T^j$ (with $f_j \in \LL$).
		It follows from Lemma~\ref{PRELLEMMA} that
		$$\Resrd_P\left(\alpha_{P} {\nu}^{\star} \right) 
		= \res_P\big(\Trd(\alpha_P f^\star) \cdot \eta\big)
                = \sum_{j=0}^{r-1} \res_P\big(\operatorname{Tr}_{\LL_P/\KK_P}(\alpha_j f_j) \cdot \eta\big).$$
		On the other hand, we deduce from the definition of the Riemann--Roch spaces that, for any place $Q\mid P$, we have the estimations
		\begin{align*}
		w_Q(a_j) & \textstyle \geq -A_Q - \frac j r v_P(x) \\
		w_Q(f_j) + w_Q(\operatorname{Diff}(\LL/\KK)) & \textstyle \geq A_Q + \frac j r v_P(x) - \frac{b_P - 1}{b_P}
		\end{align*}
		where $\operatorname{Diff}(\LL/\KK)$ denotes the different of $\LL/\KK$. Adding the two inequalities and replacing the right hand sides by their ceilings, we find that
		$w_Q(a_j f_j) \geq -w_Q(\operatorname{Diff}(\LL/\KK))$, as $\lceil -\frac{b_P - 1}{b_P} \rceil \geq 0$. 
		Hence, by definition of the different, $\operatorname{Tr}_{\LL_P/\KK_P}(a_j f_j)$ has no pole at $P$.
		Therefore the residue at $P$ of $\operatorname{Tr}_{\LL_P/\KK_P}(a_j f_j) \cdot \eta$ vanishes, and we conclude that
		$\Resrd_P\left(\alpha_{P} {\nu}^{\star} \right) = 0$.
	\end{proof}
	\begin{proof}[Proof of Theorem \ref{DUALITY}]
	By Lemma \ref{WELLDEF2} the pairing $\Phi$ is well-defined. The proof of its non-degeneracy breaks down into two parts, proving the injectivity and then the surjectivity of the morphism $$\begin{array}{rcl} \psi: \quad
			\Omega_{{\DD}^\star}(A,W) & \rightarrow & \operatorname{Hom}\left(\mathcal{A}_{\DD} /(\mathcal{A}_{\DD}(A,W)+\DD),\FF\right) \\
			\nu                           & \mapsto & \big(\alpha \mapsto \Phi(\alpha, \nu)\big)
		\end{array}.$$

		\noindent \emph{Injectivity.}
		      Let $P$ be a fixed evaluation place.
		      Since $P$ is unramified, we can find an element $a \in \mathcal O_{\LL,P}$ such that $\operatorname{Tr}_{\mathcal{O}_{\LL,P}/\mathcal{O}_P}(a) \equiv 1 \pmod P$.
		      We consider the adele $\beta \in \mathcal A_\DD$ with coordinate $a$ at $P$ and $0$ elsewhere.
                      Let $\nu = f \otimes \frac{{d}t_P}{t_P} \in \Omega_{\DD^\star}(A,W)$, $\nu \neq 0$.
                      Since $\DD$ is a division algebra, $f$ is invertible and we can form the adele $\alpha = \beta \cdot (f^{-1})^{\star} \in \mathcal A_\DD$.
		      Then $$\textstyle \Phi(\alpha,\nu) = \Resrd_P\big(a \frac{{d}t_P}{t_P}\big) = \operatorname{Tr}_{\LL_P/\KK_P}(a) \bmod P = 1$$
		      and injectivity follows.

		\medskip

		\noindent \emph{Surjectivity.}
		      By injectivity of $\psi$ and Theorem \ref{ISOGA}, we have $$\lambda_{\DD^\star}((A,W)^\star)\leq i(A,W)=\lambda_{\DD}(A,W)-(r\operatorname{deg}(A,W)+r(1-g)).$$
		      Besides, since ${\can+\operatorname{CoNr}(D)-A}  \in \operatorname{Div}(\DD^\star,D)$,
		      we can apply the previous inequality to the extended divisor $(A,W)^\star$.
		      It yields $$\begin{aligned}\lambda_{\DD}(A,W) & \leq \lambda_{\DD^\star}((A,W)^\star)-(r\operatorname{deg}((A,W)^\star)+r(1-g))\\
				& \leq \lambda_{\DD}(A,W)-r\operatorname{deg}(A,W)-r\operatorname{deg}((A,W)^\star)+r(2g-2)
				=\lambda_{\DD}(A,W)\end{aligned}$$
				as $\operatorname{deg}(A,W)+\operatorname{deg}((A,W)^\star)=2g -2 .$
		      Thus, all inequalities have to be equalities and we deduce $\lambda_{\DD^\star}((A,W)^\star)=i(A,W)$. We conclude by invoking Theorem \ref{ISOGA}.
\end{proof}
Finally, we can state and prove our Riemann--Roch Theorem for extended divisors.
\begin{Theorem}[Riemann--Roch Theorem]\label{RROCHTHEO}
	For any extended divisor $(A,W)$, we have
	$$\lambda_{\DD}(A,W)   = r\operatorname{deg}(A,W)+r(1-g)+\lambda_{\DD^\star}((A,W)^\star).$$
\end{Theorem}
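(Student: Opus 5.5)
The statement is essentially a repackaging of the results already established, so the plan is to chain three of them. By Theorem~\ref{RTHEO} we have
$$\lambda_{\DD}(A,W) = r\operatorname{deg}(A,W)+r(1-g)+\operatorname{dim}_\FF\left(\mathcal{A}_{\DD} \slash \left(\mathcal{A}_{\DD}(A,W)+\DD\right)\right).$$
It therefore suffices to identify the dimension of this adelic quotient with $\lambda_{\DD^\star}((A,W)^\star)$.

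For this identification, Serre duality (Theorem~\ref{DUALITY}) gives a perfect pairing between $\mathcal{A}_{\DD}/(\mathcal{A}_{\DD}(A,W)+\DD)$ and $\Omega_{\DD^\star}(A,W)$. By Proposition~\ref{ROCH}, the quotient has dimension $i(A,W)$, which is finite. The perfect pairing then yields
$$\dim_\FF \mathcal{A}_{\DD}/(\mathcal{A}_{\DD}(A,W)+\DD) = \dim_\FF \Omega_{\DD^\star}(A,W).$$
Next, apply Theorem~\ref{ISOGA} to the algebra $\DD^\star$ in place of $\DD$. Its hypotheses are symmetric in $x \leftrightarrow x^{-1}$: the places $P_i$ still satisfy \Htpl{P_i} with $u_{P_i}^{-1}$, and $b_P$, $\Delta$, hence $\can$ and $g$, are unchanged. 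It gives $\Omega_{\DD^\star}(A,W) \simeq \Lambda_{\DD^\star}((A,W)^\star)$ via $f\mapsto f\,\frac{dt_D}{t_D}$. Combining these equalities gives the formula.

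In fact the surjectivity part of the proof of Theorem~\ref{DUALITY} already recorded the equality $\lambda_{\DD^\star}((A,W)^\star)=i(A,W)$, so one could also conclude directly from the definition of $i(A,W)$. The only point needing care is the finite-dimensionality of the adelic quotient, so that a perfect pairing really gives equality of dimensions. This is supplied by Proposition~\ref{ROCH} together with the finiteness of $\lambda_\DD(A,W)$, which follows from~\eqref{ISOLBDA}. The vanishing statement quoted in the introduction, namely $\lambda_\DD(A,W)=0$ when $\deg(A,W)<0$, is already part of Theorem~\ref{RTHEO}, so nothing further is needed.
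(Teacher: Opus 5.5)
Your proposal is correct and follows the paper's proof, which consists exactly of combining Theorem~\ref{RTHEO}, Theorem~\ref{DUALITY} and Theorem~\ref{ISOGA} (the latter applied to $\DD^\star$). Your extra remarks make explicit what the paper leaves implicit: the adelic quotient is finite-dimensional, so the perfect pairing really gives an equality of dimensions, and Theorem~\ref{ISOGA} transfers to $\DD^\star$ because $b_P$, $\Delta$, $\can$, $g$ and \Htpl{P_i} (with $u_{P_i}^{-1}$) are invariant under $x \leftrightarrow x^{-1}$.
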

\begin{proof}
	It results by combining Theorem \ref{RTHEO},  Theorem \ref{ISOGA} and Theorem \ref{DUALITY}.
\end{proof}

\section{Application to linearized Algebraic Geometry codes}\label{LAGCODEDUALITYANDDECODING}

In all what follows, we consider the same setting as in Section~\ref{RRTHEOSEC} and refer to the introduction of that section for the notation.
In particular, the field $\FF$ can still be arbitrary while, for applications to coding theory, the cases of most interest are finite fields $\FF=\mathbb{F}_q$.

To start with, we recall the construction of linearized AG codes as settled in~\cite{1}.
As in Subsection~\ref{RAPPEL}, we fix a finite set of unramified places $\{P_1, \ldots, P_s\} \subset \Pl_\KK$ satisfying the hypothesis \Htpl{P_i}
and we let $V_i$ denote the residue algebra of $\LL$ at $P_i$.
The ambient space in which our codes live is
$$\mathcal H := \End_\FF(V_1) \times \cdots \times \End_\FF(V_s).$$
By choosing $\FF$-bases of the $V_i$, $\mathcal H$ becomes isomorphic to the $s$-th power of a matrix algebra, namely $(\FF^{r \times r})^s$.
However, in the algebraic context of this article, it will be convenient to work with endomorphisms, given that the evaluation morphisms $\bar\varepsilon_i$ (see Equation~\eqref{EPS}) naturally produce endomorphisms.
In any case, $\mathcal H$ is a vector space of dimension $sr$ over $\FF$ and we endow it with the sum-rank metric defined by
$$\dsrk(\varphi, \psi) = \wsrk(\varphi - \psi)
\quad \text{with} \quad
\wsrk(\varphi_1, \ldots, \varphi_s) = \sum_{i=1}^s \operatorname{rank}(\varphi_i).$$
We set $D := P_1 + \cdots + P_s \in \operatorname{Div}(\KK)$.

\begin{Definition}
Let $A  \in \operatorname{Div}(\DD,D)$. The \emph{linearized Algebraic Geometry} code associated to $A$ and $D$ is the image of
$$\bar\varepsilon_D : \, \Lambda(A) \longrightarrow \mathcal H, \quad f \mapsto \big(\bar\varepsilon_1(f), \ldots, \bar\varepsilon_s(f)\big).$$
It is denoted by $\LAG_\DD(A, D)$.
\end{Definition}

The aim of this section is to apply the theory we developed previously to characterize the duals of linearized AG codes and to decode them.
In the special case where $\KK = \FF(t)$, $\LL = \FF'(t)$ for a finite cyclic extension $\FF'/\FF$ and $A$ is a multiple of the point at infinity,
the code $\LAG_\DD(A, D)$ is a linearized Reed--Solomon code~\cite{36}, for which the duals have already been described in~\cite{34} and a decoding algorithm is known~\cite{7}.
Another corner case corresponds to the situation where the extension $\LL/\KK$ is trivial: in this case, $\LAG_\DD(A, D)$ is nothing but a classical AG code, and our results extend the classical ones.

We also mention that, as it was pointed out in \cite[Section~IV]{31}, linearized Reed--Muller codes~\cite{31} can be embedded in larger linearized AG codes and hence can be decoded using our decoding algorithm.

\subsection{Duality}\label{LAGCODEDUALITY} 

We recall that each $V_i$ is endowed with a nondegenerate bilinear pairing (see Subsection~\ref{RROmega}), which allows to define an adjunction $\varphi_i \mapsto \varphi_i^\star$ on $\End_\FF(V_i)$.
We use it to endow $\mathcal H$ with another nondegenerate bilinear pairing defined by:
$$\begin{array}{rcl}
\mathcal H \times \mathcal H & \longrightarrow & \FF \\
(\varphi, \psi) & \mapsto & \langle \varphi, \psi \rangle :=\sum_{i=1}^{s}\operatorname{Tr}_\FF\left(\varphi_i \circ \psi^{\star}_i\right)
\end{array},$$
where $\varphi_i$ and $\psi_i$ are the coordinates of $\varphi$ and $\psi$ respectively.
We define the dual of a linearized AG code with respect to this pairing, as follows:
$$\operatorname{LAG}_{\DD}(A,D)^\perp =\big\{\psi \in \mathcal H \,\big|\, \langle \varphi, \psi \rangle = 0, \, \forall \varphi \in \operatorname{LAG}_{\DD}(A,D)\big\}.$$
We will now apply the duality theory developed in Subsection~\ref{SERREDUALITY} to define linearized Differential (LD) codes and subsequently show that they are the duals of
linearized AG codes.

\begin{Definition}\label{CODEDIFF}
	The \emph{linearized Differential code} associated to the divisors $A$ and $D$, denoted by $\operatorname{LD}_{\DD}(A,D)$, is the image of the following map:
	$$\begin{array}{rcl}
		\Omega_{\DD^\star}(A,V_D) & \longrightarrow & \mathcal H \\
		\nu & \mapsto & \big(\Res_{P_1}(\nu), \ldots, \Res_{P_s}(\nu)\big)
	\end{array}$$
        where $\Res$ is the endomorphism residue introduced in Definition~\ref{DEFRES}.

\end{Definition}
\begin{Theorem}\label{NONCAN}
	Let $\can$ be the canonical divisor defined in Equation~\eqref{eq:defcan}. Then, we have the following equalities of codes:
	$$\LAG_{\DD}(A,D)^\perp = \LD_{\DD^\star}(A,D) = \LAG_{\DD^\star}({\can+\operatorname{CoNr}(D)-A},D).$$
\end{Theorem}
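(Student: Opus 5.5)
The second equality is the most direct, so I will prove it first. By Theorem~\ref{ISOGA}, applied in $\DD^\star$ to the extended divisor $(A,V_D)$, the map $f \mapsto f\frac{dt_D}{t_D}$ is an isomorphism
$$\Lambda_{\DD^\star}(\can+\operatorname{CoNr}(D)-A, 0) \xrightarrow{\sim} \Omega_{\DD^\star}(A,V_D),$$
since $V_D^\perp = 0$. The adjoint of $\DD^\star$ is $\DD$, and its different divisor and canonical divisor coincide with those of $\DD$ because $v_P(x^{-1}) = -v_P(x)$ gives the same $b_P$. By construction $t_D$ is a uniformizer at every $P_i$. Hence, by Definition~\ref{DEFRES}, $\Res_{P_i}\big(f\frac{dt_D}{t_D}\big) = \bar\varepsilon_i(f)$. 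It follows that the residue vector of $\nu = f\frac{dt_D}{t_D}$ is exactly $\bar\varepsilon_D(f)$. Together with the isomorphism above and $\Lambda(B,0)=\Lambda(B)$, this yields $\LD_{\DD^\star}(A,D) = \LAG_{\DD^\star}(\can+\operatorname{CoNr}(D)-A, D)$.

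Next I show the inclusion $\LD_{\DD^\star}(A,D) \subset \LAG_\DD(A,D)^\perp$. Take $f \in \Lambda_\DD(A)$ and $\nu \in \Omega_{\DD^\star}(A,V_D)$, and apply the Residue Theorem~\ref{ResTHEO} to $f\nu^\star \in \Omega_\DD$.

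At a place $P$ that is not an evaluation place, the local computation in the proof of Lemma~\ref{WELLDEF2} applies verbatim to $\alpha = f$ and shows $\Resrd_P(f\nu^\star) = 0$. At $P_i$, write $\nu = g\frac{dt_D}{t_D}$. Then Proposition~\ref{CRUCLEMMA}, together with multiplicativity of $\bar\varepsilon_i$ on $\Lambda_{P_i}$ and Lemma~\ref{COMMLEMMA}, gives
$$\Resrd_{P_i}(f\nu^\star) = \operatorname{Tr}_\FF\big(\bar\varepsilon_i(f)\circ\bar\varepsilon_i(g^\star)\big) = \operatorname{Tr}_\FF\big(\bar\varepsilon_i(f)\circ \Res_{P_i}(\nu)^\star\big).$$
Summing over all places, the Residue Theorem gives $\langle \bar\varepsilon_D(f), \Res(\nu)\rangle = 0$, which is the desired inclusion.

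For equality I compare dimensions. Their sum should be $sr^2 = \dim \mathcal H$.
\begin{itemize}
\item $\dim \LAG_\DD(A,D) = \lambda_\DD(A) - \lambda_\DD(A,V_D)$, since the kernel of $\bar\varepsilon_D$ is $\Lambda_\DD(A,V_D)$.
\item $\dim \LD_{\DD^\star}(A,D) = \lambda_{\DD^\star}(B) - \lambda_{\DD^\star}(B,V_D)$ with $B = \can+\operatorname{CoNr}(D)-A$, by the same reasoning applied to the second description of the code.
\end{itemize}
Now apply Theorem~\ref{RROCHTHEO} twice:
\begin{itemize}
\item to $(A,0)$, whose dual is $(B,V_D)$: $\lambda_\DD(A) - \lambda_{\DD^\star}(B,V_D) = r\deg A + r(1-g)$;
\item to $(A,V_D)$, whose dual is $(B,0)$: $\lambda_\DD(A,V_D) - \lambda_{\DD^\star}(B) = r(\deg A - sr) + r(1-g)$.
\end{itemize}
Subtracting the second identity from the first, the sum of the two code dimensions equals $r\cdot sr = sr^2$, as required.

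The main obstacle is the bookkeeping of the adjoint. One must check that $(\DD^\star)^\star = \DD$ with the same $\can$, that the duality $(A,W)\mapsto(A,W)^\star$ applied in $\DD^\star$ lands back in $\DD$, and that Lemma~\ref{COMMLEMMA} is used in the right direction. Positivity at the $P_i$ also needs checking: $\omega_{P_i}(g)\ge 0$ holds because $\nu$ has divisor at least $A - \operatorname{CoNr}(D)$, $A$ vanishes at $P_i$, and $\Delta$ vanishes there since $P_i$ is unramified and satisfies \Htpl{P_i}.
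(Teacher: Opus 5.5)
Your proposal is correct and follows the paper's proof essentially step by step. The second equality comes from Theorem~\ref{ISOGA} applied in $\DD^\star$ to $(A,V_D)$, together with $\Res_{P_i}(f\frac{dt_D}{t_D})=\bar\varepsilon_i(f)$; orthogonality comes from the Residue Theorem applied to $f\nu^\star$, using the local estimate of Lemma~\ref{WELLDEF2} away from $D$ and Lemma~\ref{COMMLEMMA} at the $P_i$; and equality comes from a dimension count with Theorem~\ref{RROCHTHEO} applied to $(A,0)$ and $(A,V_D)$. Your subtraction of the two Riemann--Roch identities gives the correct value $\dim \LD_{\DD^\star}(A,D) = sr^2-\lambda_\DD(A)+\lambda_\DD(A,V_D)$, whereas the paper's displayed formula has a sign slip on the last term.
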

\begin{proof}
        We have the following commutative diagram
	$$\begin{tikzcd}[column sep=5em, row sep=2em]
		{\Lambda_{\DD^\star}({\can+\operatorname{CoNr}(D)-A})} \arrow[rightarrow]{d}{(\bar\varepsilon_i)_{i}} \arrow[r, "f \mapsto f \frac{{d}t_D}{t_D}" ] 
		& {\Omega_{\DD^\star}(A,V_D)} \arrow[rightarrow]{d}{(\Res_{P_i})_{i}}  \\
		\mathcal H \arrow[r, "\operatorname{Id}"] & \mathcal H
	\end{tikzcd}$$
        whose top arrow is an isomorphism thanks to Theorem~\ref{ISOGA}.
	The equality between $\LD_{\DD^\star}(A,D)$ and $\LAG_{\DD^\star}(\can+\operatorname{CoNr}(D)-A,\,D)$ follows.

	Let $f \in \Lambda_{\DD}(A)$ and $\nu \in \Omega_{\DD^\star}(A,V_D)$.
        By the Residues Theorem (Theorem~\ref{ResTHEO}), we know that $\sum_{P \in \Pl_\KK} \Resrd_P(f \nu^\star) = 0$.
	Moreover, it follows from the proof of Lemma~\ref{WELLDEF2} that
	$$\Resrd(f \nu^\star) = \begin{cases}
	\operatorname{Tr}_\FF\big(\bar\varepsilon_i(f) \circ \Res_{P_i}(\nu)^\star\big) & \text{if } P = P_i, \\
	0 & \text{if } P \not\in \{P_1, \ldots, P_s\}.
	\end{cases}$$
	We deduce that the codewords attached to $f$, on the one hand, and $\nu$, on the other hand, are orthogonal in $\mathcal H$, \ie
	$\LAG_{\DD}(A,D)\subset \LD_{\DD^\star}(A,D)^\perp$.
	To show the reverse inclusion, we compare the dimensions. First of all, noticing that the kernel of $\bar\varepsilon_D$ is $\Lambda(A, V_D)$, we find
	$$\dim_\FF \LAG_\DD(A, D) = \lambda_\DD(A) - \lambda_\DD(A, V_D).$$
	Then, using the first part of the theorem, we obtain similarly
	\begin{align*}
	\dim_\FF \LD_{\DD^\star}(A, D) & = \lambda_{\DD^\star}(\can + \operatorname{CoNr}(D)-A) - \lambda_{\DD^\star}(\can + \operatorname{CoNr}(D)-A, V_D) \\
	& = sr^2 - \lambda_\DD(A) - \lambda_\DD(A, V_D)
	\end{align*}
	the last equality coming from the Riemann--Roch Theorem (Theorem~\ref{RROCHTHEO}) and the fact that $\dim_\FF V_D = sr$.
	Observing that the two dimensions sum to $sr^2 = \dim_\FF \mathcal H$ concludes the proof.
\end{proof}
\begin{Corollary}
	We assume $2g-2<\operatorname{deg}A <sr$.
	Then the code $\LD_{\DD^\star}(A, D)$ has length $sr^2$, dimension $sr^2 - r\deg A + r(g-1)$ over $\FF$ and minimum distance at least $\deg A - (2g - 2)$.
\end{Corollary}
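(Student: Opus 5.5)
The plan is to read the three parameters off Theorem~\ref{NONCAN}, which identifies $\LD_{\DD^\star}(A,D)$ with $\LAG_{\DD}(A,D)^\perp$ and with $\LAG_{\DD^\star}(\can+\operatorname{CoNr}(D)-A,D)$. The length is immediate: the code sits in $\mathcal H\simeq(\FF^{r\times r})^s$, which has dimension $sr^2$ over $\FF$.

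\emph{Dimension.} I will use $\dim \LD_{\DD^\star}(A,D) = sr^2-\dim\LAG_\DD(A,D)$. From the proof of Theorem~\ref{NONCAN}, $\dim\LAG_\DD(A,D)=\lambda_\DD(A)-\lambda_\DD(A,V_D)$. Since $\deg(A,V_D)=\deg A-sr<0$, Theorem~\ref{RTHEO} gives $\lambda_\DD(A,V_D)=0$.

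To compute $\lambda_\DD(A)$, I apply Theorem~\ref{ELEM}: $\lambda_\DD(A)=r\deg A+r(1-g)+\lambda_{\DD^\star}(\can-A)$. Here $\deg(\can-A)=2g-2-\deg A<0$. Viewing $\can-A$ as an extended divisor with $W=0$ (it is away from $D$ once $\can$ is the specific divisor of \eqref{eq:defcan}, since $dt_D/t_D$ has simple poles at the $P_i$; otherwise use Theorem~\ref{ELEM} with any canonical divisor), the vanishing statement of Theorem~\ref{RTHEO} for $\DD^\star$ gives $\lambda_{\DD^\star}(\can-A)=0$. The one point to check is that the vanishing result of \cite{1} applies to $\DD^\star$. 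It does, because $\DD^\star$ is again a cyclic algebra of the same form with $x^{-1}$ in place of $x$, and it is a division algebra since it is anti-isomorphic to $\DD$ via the adjunction. A cleaner alternative is to apply the classical vanishing to each $\pi^\star_i(\can-A)$, since $\sum_i\deg\pi^\star_i$ is controlled as in \cite[Lemma 2.5]{1}. Hence the dimension is $sr^2-r\deg A+r(g-1)$.

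\emph{Minimum distance.} I will use the third description: the code equals $\LAG_{\DD^\star}(B,D)$ with $B=\can+\operatorname{CoNr}(D)-A$. Because the $P_i$ are rational and unramified, $\deg\operatorname{CoNr}(D)=sr$, so $\deg B=2g-2+sr-\deg A$.

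Take a nonzero codeword $\bar\varepsilon_D(f)$ with $f\in\Lambda_{\DD^\star}(B)$. Let $W_i=\ker\bar\varepsilon_i(f)$ and $W=\oplus_i W_i$. Then $f\in\Lambda_{\DD^\star}(B,W)$ and $f\neq0$, so the vanishing part of Theorem~\ref{RTHEO} (applied to $\DD^\star$) forces $\deg(B,W)\ge0$, that is, $\dim W\le\deg B$. The sum-rank weight is therefore
\[
\sum_i(r-\dim W_i)=sr-\dim W\ \ge\ sr-\deg B=\deg A-(2g-2).
\]
The hypothesis $\deg A>2g-2$ makes this bound positive, which is consistent with the bound $\ddesign(B)=sr-\deg B$ from \cite[Theorem~2]{1}.

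The main obstacle is justifying the use of the results of \cite{1} for $\DD^\star$ rather than $\DD$. This requires $\DD^\star$ to be a division algebra, and the hypothesis \Htpl{P_i} to hold for $x^{-1}$, which it does by taking $u_Q^{-1}$. Once both are settled, everything above is bookkeeping.
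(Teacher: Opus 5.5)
Your route is the paper's. The paper simply applies \cite[Theorem~2]{1} to the code $\LAG_{\DD^\star}(\can+\operatorname{CoNr}(D)-A,D)$ supplied by Theorem~\ref{NONCAN}. Your dimension count (Theorem~\ref{ELEM} plus vanishing) and your distance bound (a nonzero $f\in\Lambda_{\DD^\star}(B,W)$ with $W_i=\ker\bar\varepsilon_i(f)$ forces $\dim W\le\deg B$) are exactly what that citation unpacks to. Your checks that $\DD^\star$ is a division algebra and satisfies \Htpl{P_i} with the $u_Q^{-1}$ are the right ones, and the distance part is correct as written.

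There is, however, a false claim in the dimension step: the divisor $\can$ of \eqref{eq:defcan} is \emph{not} away from $D$. At each $P_i$ we have $e_{P_i}=1$, $\operatorname{Diff}(\LL/\KK)$ vanishes, and $b_{P_i}=1$ (since \Htpl{P_i} forces $r\mid v_{P_i}(x)$). The simple pole of $\frac{dt_D}{t_D}$ therefore gives $\can_Q=-1$ for every $Q\mid P_i$. This is precisely what puts $B=\can+\operatorname{CoNr}(D)-A$ in $\operatorname{Div}(\DD^\star,D)$, which your own distance argument needs. If $\can$ were away from $D$, then $B$ would not be, and neither $(B,W)$ nor $\LAG_{\DD^\star}(B,D)$ would make sense. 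So $\can-A$ is not an extended divisor, and the vanishing part of Theorem~\ref{RTHEO} cannot be quoted for it as stated.

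Your fallback does not repair this as written either. Knowing $\sum_i\deg\pi^\star_i(\can-A)<0$ from \cite[Lemma 2.5]{1} does not make each summand negative. What you need is the per-index bound $\deg\pi^\star_i(C)\le\deg C$, valid for every divisor $C$ and every $i$ because $\lfloor a\rfloor\le a$ and $\sum_Q w_Q(x)\deg Q=0$. With $C=\can-A$ of negative degree, this kills every $L(\pi^\star_i(\can-A))$.

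There are two other easy fixes. You can run Theorem~\ref{ELEM} with the canonical divisor $(dt_D)_\LL+\Delta$, which is away from $D$. Or you can sidestep the vanishing entirely:
\begin{itemize}
\item $\lambda_\DD(A)\ge r\deg A+r(1-g)$ gives $\dim\LD_{\DD^\star}(A,D)\le sr^2-r\deg A+r(g-1)$;
\item $\dim\LD_{\DD^\star}(A,D)=\lambda_{\DD^\star}(B)\ge r\deg B+r(1-g)$, using $\lambda_{\DD^\star}(B,V_D)=0$, gives the same number from below.
\end{itemize}
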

\begin{proof}
	This results from the estimates on the parameters of a linearized AG code given in \cite[Theorem~2]{1}.
\end{proof}

\begin{Remark}
As recalled in \cite{13}, a maximum sum-rank distance (MSRD) code in the sum-rank metric is a code achieving the sum-rank metric Singleton-type bound $rd + k = n + r$ where $[n,k,d]$ are the code parameters: length, dimension, and minimum distance.
In our situation, both $\LAG_\DD(A, D)$ and its dual $\LD_{\DD^\star}(A, D)$ meet the Singleton bound up to a defect of $rg$, \ie they both satisfy $rd + k\geq n + r(1-g)$.
\end{Remark}

\subsection{Decoding}\label{DECODINGLAG}

In what follows we describe and prove a decoding algorithm for linearized AG codes with unramified evaluation places. 
Our algorithm can be seen as the sum-rank metric version of the one given in \cite[Section 8.5]{2} for AG codes in the Hamming metric. 
We make the usual injectivity hypothesis on the evaluation maps giving the code and its dual, respectively, as follows.
\begin{Hypothesis}\label{HYPO}
	From now on, we assume the following condition on $A$:
	$$2g -2   < \operatorname{deg}A<sr .$$
\end{Hypothesis}
\begin{Remark}\label{GALINTEGER}
	The map $\bar\varepsilon_D$ defining the code $\LAG_\DD(A, D)$ is injective if and only if $\lambda_{\DD}(A,V_D)=0$.
	This latter condition is fulfilled if $\operatorname{deg}A < sr$.
	Similarly, the defining morphism of the dual code $\operatorname{LAG}_{\DD^\star}({\can+\operatorname{CoNr}(D)-A},D)$ is injective when $\operatorname{deg}A > 2g -2$.
	The two inequalities of Hypothesis \ref{HYPO} thus imply the injectivity of both $\operatorname{LAG}_{\DD}(A,D)$ and its dual.
\end{Remark}

We shall need the following consequence of the Riemann--Roch Theorem.

\begin{Theorem}\label{FUNDTHEO}
	Let $(A,W)$ be an extended divisor satisfying $2g -2<\operatorname{deg}(A,W)$.
	Then, the following short sequence is exact
	$$0 \rightarrow {\Lambda}(A,W)\rightarrow {\Lambda}(A)	\xrightarrow{\bar\varepsilon_{D|W}} \prod_{i=1}^{s}\operatorname{Hom}_{\FF}(W_{i},V_{i})   \rightarrow 0$$
	where, by definition, $\bar\varepsilon_{D|W}(f) = \big(\bar\varepsilon_1(f)_{|W_1}, \ldots, \bar\varepsilon_s(f)_{|W_s}\big)$.
\end{Theorem}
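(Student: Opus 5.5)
Exactness at $\Lambda(A,W)$ and at $\Lambda(A)$ is immediate from the definitions. By Definition~\ref{RREXT}, $\Lambda(A,W)$ consists exactly of those $f \in \Lambda(A)$ with $\bar\varepsilon_i(f)_{|W_i} = 0$ for all $i$. It is therefore the kernel of $\bar\varepsilon_{D|W}$, and the first map, which is the inclusion, is injective. The only substantive point is surjectivity of $\bar\varepsilon_{D|W}$, and we prove it by a dimension count. The target $\prod_i \operatorname{Hom}_\FF(W_i,V_i)$ has dimension $\sum_i r\dim W_i = r\dim_\FF W$. It thus suffices to show
$$\lambda_\DD(A) - \lambda_\DD(A,W) = r\dim_\FF W.$$

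We compute both terms with the Riemann--Roch Theorem~\ref{RROCHTHEO}. Applied to $(A,W)$ it gives
$$\lambda_\DD(A,W) = r\deg(A,W) + r(1-g) + \lambda_{\DD^\star}\big(\can+\operatorname{CoNr}(D)-A,\,W^\perp\big).$$
The extended divisor $(A,W)^\star$ has degree $2g-2-\deg(A,W)$, which is $<0$ by hypothesis. By the vanishing part of Theorem~\ref{RTHEO}, applied to $\DD^\star$, the correction term is therefore $0$.

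For $\lambda_\DD(A)$ we apply the same argument to the extended divisor $(A,0)$. Since $\deg(A,0) = \deg A \ge \deg(A,W) > 2g-2$, its dual $(\can+\operatorname{CoNr}(D)-A, V_D)$ also has negative degree. Hence $\lambda_\DD(A) = r\deg A + r(1-g)$. Subtracting the two formulas gives $r(\deg A - \deg(A,W)) = r\dim_\FF W$, as required. Alternatively, $\lambda_\DD(A)$ can be obtained from Theorem~\ref{ELEM}, since $\lambda_{\DD^\star}(\can-A)=0$ when $\deg A>2g-2$ (via \cite[Theorem 3.5]{1} or Theorem~\ref{RTHEO} with $W=0$). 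We prefer the uniform route through extended divisors so that both computations are identical.

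The only delicate step is checking that the vanishing criterion of Theorem~\ref{RTHEO} applies on the $\DD^\star$ side. First, $\DD^\star$ is again a division algebra of the same shape, with $x$ replaced by $x^{-1}$. Second, $\can+\operatorname{CoNr}(D)-A \in \operatorname{Div}(\DD^\star,D)$; the paper already uses this in the proof of Theorem~\ref{DUALITY}. Third, the places $P_i$ still satisfy \Htpl{P_i} for $x^{-1}$, using $u_Q^{-1}$. With these checks the dimensions match, and the injective map $\Lambda(A)/\Lambda(A,W)\to\prod_i\operatorname{Hom}_\FF(W_i,V_i)$ induced by $\bar\varepsilon_{D|W}$ is an isomorphism, which gives exactness on the right.
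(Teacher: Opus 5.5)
Your proposal is correct and follows the same approach as the paper: left exactness holds by definition, and surjectivity comes from applying Theorem~\ref{RROCHTHEO} to $\Lambda(A)$ and $\Lambda(A,W)$ to get a dimension difference of $r\dim W$. The paper leaves implicit why the dual correction terms vanish, and you supply it: $(A,W)^\star$ and $(A,0)^\star$ have degrees $2g-2-\deg(A,W)<0$ and $2g-2-\deg A<0$, and the vanishing criterion transfers to $\DD^\star$.
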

\begin{proof}
	The short sequence is left exact by definition. 
	By applying Theorem~\ref{RROCHTHEO} to ${\Lambda}(A)$ and ${\Lambda}(A,W)$, we check that the difference of their dimensions is $r\operatorname{dim}(W)$. We then conclude to the surjectivity of $\bar\varepsilon_{D|W}$.
\end{proof}

We will also need a variant of the pairing $\langle -, -\rangle$ relative to a subspace $W = \bigoplus_{i=1}^s W_i$ with $W_i \subset V_i$.
Precisely, given such a subspace $W$, we consider the $\FF$-bilinear form
$$\begin{array}{rcl} \Phi_W : \,
\prod_{i=1}^{s}\operatorname{Hom}_{\FF}(V_{i},W_{i}) \times \prod_{i=1}^{s}\operatorname{Hom}_{\FF}(W_{i},V_{i}) & \rightarrow & \FF \\
(\varphi, \psi) & \mapsto & \sum_{i=1}^{s}\operatorname{Tr}\left( \varphi_i \circ \psi_i \right)
\end{array}$$
where $\varphi_i$ and $\psi_i$ stand for the coordinates of $\varphi$ and $\psi$ respectively.

\begin{Lemma}\label{PAIRING}
The bilinear form $\Phi_W$ is nondegenerate.
\end{Lemma}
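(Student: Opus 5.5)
The plan is to reduce the statement to a single index $i$ and then to the classical fact that the trace form $(X,Y)\mapsto \operatorname{Tr}(XY)$ on matrices is nondegenerate. Since $\Phi_W$ is the orthogonal sum over $i\in[s]$ of the forms
$$\Phi_{W_i} : \operatorname{Hom}_\FF(V_i,W_i)\times \operatorname{Hom}_\FF(W_i,V_i) \to \FF,\qquad (\varphi_i,\psi_i)\mapsto \operatorname{Tr}(\varphi_i\circ\psi_i),$$
nondegeneracy of $\Phi_W$ follows from nondegeneracy of each $\Phi_{W_i}$. Indeed, if $\varphi$ is orthogonal to all $\psi$, we may test against $\psi$ supported on a single coordinate $i$. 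Here $\varphi_i\circ\psi_i\in\End_\FF(W_i)$, so the trace makes sense.

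For fixed $i$, both spaces have dimension $r\dim W_i$, since $\dim V_i=r$. It therefore suffices to show that the left kernel is trivial; the right kernel is handled by the same argument, or by the dimension equality. So let $\varphi_i\in\operatorname{Hom}_\FF(V_i,W_i)$ be nonzero. We need some $\psi_i$ with $\operatorname{Tr}(\varphi_i\circ\psi_i)\neq 0$.

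Pick $v\in V_i$ with $w:=\varphi_i(v)\neq 0$, so $w\in W_i$. Choose a linear form $\ell$ on $W_i$ with $\ell(w)=1$, and set $\psi_i(x)=\ell(x)\,v$, a rank-one map $W_i\to V_i$. Then $\varphi_i\circ\psi_i(x)=\ell(x)\,w$ is a rank-one endomorphism of $W_i$, and its trace is $\ell(w)=1$. This proves injectivity of $\varphi\mapsto\Phi_W(\varphi,-)$.

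Symmetrically, given a nonzero $\psi_i$, pick $u\in W_i$ with $\psi_i(u)\neq0$. Take a form $m$ on $V_i$ with $m(\psi_i(u))=1$, and set $\varphi_i(y)=m(y)\,u$. Then $\operatorname{Tr}(\varphi_i\circ\psi_i)=m(\psi_i(u))=1$.

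There is no real obstacle. The only point needing care is that the composition lands in $\End_\FF(W_i)$ rather than in $\End_\FF(V_i)$, and the rank-one trace computation $\operatorname{Tr}(x\mapsto \ell(x)w)=\ell(w)$ handles this. Alternatively, choosing bases identifies $\Phi_{W_i}$ with the standard pairing $\operatorname{Tr}(XY)$ between $(\dim W_i)\times r$ and $r\times(\dim W_i)$ matrices, which is $\sum_{a,b}X_{ab}Y_{ba}$ and is visibly perfect.
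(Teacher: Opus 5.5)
Your proof is correct and follows the paper's route. Like the paper, you reduce to a single index by testing against tuples supported on one coordinate, and then use nondegeneracy of the trace pairing $\operatorname{Hom}_\FF(V_i,W_i)\times\operatorname{Hom}_\FF(W_i,V_i)\to\FF$. The only difference is that the paper cites \cite[Lemma~4.2]{34} for that last fact and treats only the left kernel, whereas you prove it directly with rank-one maps and also check the right kernel.
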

\begin{proof}
If there existed a nonzero $s$-tuple of homomorphisms $(\varphi_1, \ldots, \varphi_s)$ with $\varphi_i \in \operatorname{Hom}_{\FF}(V_{i},W_{i})$, such that, say, $a_{i_0}\neq 0$ and
$\sum_{i=1}^{s}\operatorname{Tr}\left(a_i\operatorname{Hom}_{\FF}(W_{i},V_{i})\right) = 0$,
then taking $\psi_i =0 \in \operatorname{Hom}_{\FF}(W_{i},V_{i})$ for $i \neq i_0$, we would have
$\operatorname{Tr}\left(\varphi_{i_0}\operatorname{Hom}_{\FF}(W_{i_0},V_{i_0})\right) = 0$.
This contradicts \cite[Lemma 4.2]{34}.
\end{proof}

In all what follows, we set 
$$\rhoalgo := \left \lfloor \frac{\ddesign(A)-g-1}{2}\right \rfloor =
\left \lfloor \frac{sr-\operatorname{deg}A-g-1}{2}\right \rfloor.$$
It is the maximal number of errors that we intend to decode with our algorithm.

\subsubsection{An auxiliary divisor}

\begin{Definition}\label{GALDIV}
A divisor $A \in \operatorname{Div}(\DD)$ is called \emph{Galois} if $A_Q=A_{Q'}$ when $Q$ and $Q'$ are above the same place.
\end{Definition}

\begin{Lemma} \label{LAGBELG1}
	Let $A_1, A_2 \in \operatorname{Div}(\DD)$. If $A_2$ is Galois then
	$\Lambda_{\DD^\star}(A_1)\Lambda_{\DD^\star}(A_2) \subset \Lambda_{\DD^\star}(A_1 + A_2)$.
\end{Lemma}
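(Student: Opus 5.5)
The plan is to reduce the inclusion to a pointwise statement at each place $Q \in \Pl_\LL$: we show that $\omega_Q(fg) \geq \omega_Q(f) + \omega_Q(g)$ whenever $g$ is "Galois-balanced", meaning its valuations at the places above a given $P$ may be compared uniformly. Take $f \in \Lambda_{\DD^\star}(A_1)$ and $g \in \Lambda_{\DD^\star}(A_2)$, so that $\omega_Q(f) \geq -(A_1)_Q$ and $\omega_Q(g) \geq -(A_2)_Q$ for all $Q$. Here $\omega_Q$ on $\DD^\star$ is defined with $x^{-1}$ in place of $x$, so the weights are $-j\,w_Q(x)/r$. It then suffices to prove
\[
\omega_Q(fg) \geq \omega_Q(f) - (A_2)_Q \qquad \text{for every } Q \mid P,
\]
because the right-hand side is at least $-(A_1)_Q - (A_2)_Q$.

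To prove this, we expand the product as in the proof of Lemma~\ref{SURMULT}. Write $f = \sum_k f_k T^k$ and $g = \sum_l g_l T^l$. The coefficient of $T^{m}$ in $fg$ (with $m$ reduced mod $r$, picking up a factor $x^{-1}$ when $k+l \geq r$) is a sum of terms $f_k\,\theta^k(g_l)$, possibly multiplied by $x^{-1}$. For the term indexed by $(k,l)$, the $Q$-valuation plus the weight of $T^m$ equals
\[
w_Q(f_k) - k\tfrac{w_Q(x)}{r} \;+\; w_Q(\theta^k(g_l)) - l\tfrac{w_Q(x)}{r}.
\]
The wrap-around exactly accounts for the factor $x^{-1}$, just as in Lemma~\ref{SURMULT}. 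The key identity is $w_Q(\theta^k(g_l)) = w_{\theta^{-k}(Q)}(g_l)$, the valuation at the conjugate place $Q' = \theta^{-k}Q$. Since $w_{Q'}(x) = w_Q(x)$ (because $x \in \KK$), the second part of the sum is at least $\omega_{Q'}(g) \geq -(A_2)_{Q'}$. The first part is at least $\omega_Q(f)$. Taking the minimum over the terms and using the ultrametric inequality then gives the claimed bound, once we know that $(A_2)_{Q'} = (A_2)_Q$.

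The only point where a hypothesis is needed is the identification $(A_2)_{Q'} = (A_2)_Q$. In Lemma~\ref{SURMULT} this was circumvented by taking a minimum over all places above $P$. Here it is exactly what the Galois hypothesis on $A_2$ provides, since $\theta$ permutes the places above $P$ transitively. This is also why only $A_2$, the right factor, needs to be Galois: the twist $\theta^k$ acts on the coefficients of $g$, not on those of $f$. I expect the main care to lie in this bookkeeping of the twisted valuation and the sign conventions for $\DD^\star$ (the weights $-j\,w_Q(x)/r$ and the wrap-around factor $x^{-1}$). Membership $fg \in \DD^\star$ is automatic, so the inclusion follows place by place.
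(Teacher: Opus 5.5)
Your proposal is correct and is essentially the paper's own proof: you redo the expansion from Lemma~\ref{SURMULT} at a single place $Q$ rather than taking the minimum over all $Q\mid P$. You then observe that the coefficients of $g$ are measured at a conjugate place $\theta^{-k}Q$ above the same $P$, and use the Galois hypothesis on $A_2$ to replace $(A_2)_{\theta^{-k}Q}$ by $(A_2)_Q$. Your handling of the weights $-j\,w_Q(x)/r$ and of the wrap-around factor $x^{-1}$ in $\DD^\star$ is in fact slightly more careful than the paper's, which writes the weights with the sign convention for $\DD$.
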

\begin{proof}
	Let us denote by $Q_j$ the $m_P$ places above the place $P \in \Pl_{\KK}$.
	If we do not apply $\min_{Q_j |P}$ and thus replace $\omega_P$ by $\omega_{Q_j}$
	in the proof of the surmultiplicativity of the gauge of Lemma \ref{SURMULT}, we get for any $f$ in $\Lambda_{\DD^\star}(A_1)$ and for any $g$ in $\Lambda_{\DD^\star}(A_2)$ that
	$$\omega_{Q_j}(fg) \geq \min_{0\leq k \leq i_0}\left(w_{Q_j}(f_{k})+k\frac{ w_{Q}(x)}{r}+w_{Q_{(j+k)\%m_P}}(g_{i_0-k})+(i_0-k)\frac{  w_{Q}(x)}{r}\right).$$
	Hence, for some $ k_0 \in \{0,\ldots,i_0\}$, we have
	$$\omega_{Q_j}(fg) \geq -(A_1)_{Q_j}-(A_2)_{Q_{(j+k_0)\%m_P}} =  -(A_1)_{Q_j} -(A_2)_{Q_j}=-(A_1 + A_2)_{Q_j},$$
	where the first equality holds since $A_2$ is a Galois divisor.
\end{proof}

We make the following assumption on $\LL/\KK$, which will be helpful to construct Galois divisors.

\begin{Hypothesis}\label{HYPO2}
	There exists a place
	$P \in \Pl_{\KK}\backslash\{P_1,\ldots,P_s\}$  such that $e_P f_P=r$ and $\operatorname{deg}P=1$.
\end{Hypothesis}

\begin{Lemma}\label{PARAM}
	There exists a divisor $A_1 \in  \operatorname{Div}(\DD^\star,D)$ such that
	\begin{enumerate}[(i)]
	\item $A + A_1$ is Galois,
	\item $\deg(A + A_1) < sr - \rhoalgo$,
	\item $\lambda_\DD(A_1) > r \cdot \rhoalgo$.
	\end{enumerate}
\end{Lemma}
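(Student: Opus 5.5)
The plan is to build $A_1$ in two pieces. The first piece makes $A$ Galois by lowering it. The second is a multiple of the single place $Q_0$ of $\LL$ lying over the place $P_0$ given by Hypothesis~\ref{HYPO2}. Since $e_{P_0}f_{P_0}=r$ and $\LL/\KK$ is Galois, there is exactly one place $Q_0$ above $P_0$, with $\deg Q_0=f_{P_0}$ because $P_0$ is rational.

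\emph{Step 1: the Galois part.} For each $P\in\Pl_\KK$ set $m_P:=\min_{Q\mid P}A_Q$, and let $C:=\sum_P\sum_{Q\mid P}m_P\,Q$. The coefficient $m_P$ lies in $\frac{\mathbb Z}{b_P}$, so $C\in\operatorname{Div}(\DD)$. By construction $C$ is Galois, $C\le A$, and $C_Q=0$ above the $P_i$ since $A$ is away from $D$.

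\emph{Step 2: the ansatz.} Put
\[
A_1:=C-A+n\,Q_0,\qquad n\in\tfrac{\mathbb Z}{b_{P_0}}.
\]
Then $A+A_1=C+nQ_0$. This is Galois because $Q_0$ is alone above $P_0$, which gives (i). Since $P_0\notin\{P_1,\dots,P_s\}$, we also get $A_1\in\operatorname{Div}(\DD^\star,D)$. It remains to choose $n$ so that (ii) and (iii) hold. These become
\[
\deg C+n f_{P_0}<sr-\rhoalgo
\qquad\text{and}\qquad
\lambda(A_1)>r\,\rhoalgo .
\]

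\emph{Step 3: the degree window.} By Theorem~\ref{RTHEO}, $\lambda(A_1)\ge r\deg A_1+r(1-g)$. Hence (iii) follows once $\deg A_1>\rhoalgo+g-1$. Condition (ii) reads $\deg A_1<sr-\deg A-\rhoalgo$. So it suffices to place $\deg A_1$ in the open interval
\[
\bigl(\rhoalgo+g-1,\ sr-\deg A-\rhoalgo\bigr).
\]
Using $2\rhoalgo\le sr-\deg A-g-1$, this interval has length $sr-\deg A-2\rhoalgo-g+1\ge 2$. Varying $n$ moves $\deg A_1$ along an arithmetic progression of step $f_{P_0}/b_{P_0}$. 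When this step is smaller than the length of the window, a suitable $n$ exists and we are done.

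\emph{Main obstacle.} The step $f_{P_0}/b_{P_0}$ can exceed $2$, for instance when $P_0$ is inert and $b_{P_0}=1$. Then the crude bound from Theorem~\ref{RTHEO} is too weak. In that case I would take $n$ maximal subject to (ii) and compute $\lambda(A_1)=\sum_{i=0}^{r-1}\dim L(\pi^\star_i(A_1))$ via Equation~\eqref{ISOLBDA}, applying the classical Riemann bound to each summand separately. The coefficient of $\pi_i^\star(A_1)$ at $Q_0$ is $\lfloor n+(\text{const})-i\,w_{Q_0}(x)/r\rfloor$. As $n$ increases by $1/b_{P_0}$, only the indices $i$ in one residue class modulo $b_{P_0}$ jump, each raising $\deg\pi_i^\star(A_1)$ by exactly $f_{P_0}$. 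So $\lambda$ grows in increments of at most $r f_{P_0}/b_{P_0}\cdot\frac1{b_{P_0}}\cdot b_{P_0}$, i.e.\ at most $f_{P_0}\cdot r/b_{P_0}$, rather than the coarse $r f_{P_0}/b_{P_0}$ increments of $r\deg A_1$. I would check that this finer accounting, combined with the strict slack of at least $2$ in the window, gives $\lambda(A_1)>r\rhoalgo$ at the maximal admissible $n$. I expect this bookkeeping of floors at $Q_0$ to be the delicate point. If it fails, the fallback is to spread the correction over $Q_0$ together with Galois multiples $\operatorname{CoNr}(P)$ of other places, so as to refine the achievable degrees.
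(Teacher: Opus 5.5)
\textbf{Assessment: your proof is incomplete, and the missing case cannot be completed, because the statement fails there.}

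\textbf{The favourable case.} Here your argument is the paper's. The paper takes $C=0$ and picks a Galois divisor $G=nQ_0$ of degree exactly $sr-\rhoalgo-1$. It sets $A_1=G-A$ and concludes from $\lambda(A_1)\ge r\deg A_1+r(1-g)$ together with $sr-\deg A-g\ge 2\rhoalgo+1$. Your Step~1 is harmless but unnecessary, since only $A+A_1$ has to be Galois.

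\textbf{The gap.} It is the case you isolate yourself, and the remedy you sketch cannot close it. Once $\deg A_1>2g-2$, the complementary term in Theorem~\ref{ELEM} vanishes by Theorem~\ref{RTHEO}, because $\deg(K-A_1)<0$. This gives the \emph{equality} $\lambda(A_1)=r\deg A_1+r(1-g)$. The floors are already counted exactly by $\sum_i\deg\pi_i^\star(A_1)=r\deg A_1-\frac r2\deg\Delta$, so there is no slack to recover. Indeed, your own increment computation returns $rf_{P_0}/b_{P_0}$, which is the coarse increment. Consequently (ii) and (iii) just ask for a Galois divisor away from $D$ whose degree lies in the open window $\bigl(\deg A+\rhoalgo+g-1,\ sr-\rhoalgo\bigr)$, whose length lies in $[2,4)$. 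This is a purely arithmetic question about the degrees $\frac{r\deg P}{e_Pb_P}$ of the elementary Galois divisors $\frac1{b_P}\sum_{Q\mid P}Q$. At $P_0$ this degree equals $f_{P_0}/b_{P_0}=\gcd(v_{P_0}(x),f_{P_0})$.

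\textbf{The paper's proof has the same hidden assumption.} When it asserts that a Galois $G$ of degree $sr-\rhoalgo-1$ supported at $P$ exists, it needs $\gcd(v_P(x),f_P)$ to divide $sr-\rhoalgo-1$. The obstacle is genuine, so neither bookkeeping nor redistributing over other places can prove the statement under Hypotheses~\ref{HYPO} and~\ref{HYPO2} alone. Here is a counterexample.
\begin{enumerate}
\item Take $q$ odd and large, $\KK=\FF_q(t)$, $r=2$, $\LL=\KK(\sqrt h)$ and $x=k$. Here $h,k$ are monic irreducible quadratics with $k$ a non-square modulo $h$, hence $h$ a non-square modulo $k$ by reciprocity.
\item Then $\DD$ is the quaternion algebra $(h,k)$, ramified at $(h)$ and $(k)$, so it is a division algebra.
\item We have $g_\LL=0$; only $(h)$ ramifies, $(k)$ is inert and $\infty$ splits. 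Also $b_P=1$ except $b_{(k)}=2$, so $\deg\Delta=2$ and $g=1$.
\item The elementary Galois divisors have degree $2$ at $(h)$, $(k)$ and $\infty$, and $2\deg P$ elsewhere. So every Galois divisor has even degree.
\item For $B$ of degree $d$ one checks $\{\deg\pi_0(B),\deg\pi_1(B)\}=\{d,d-2\}$. Hence $\lambda_\DD(B)=\max(d+1,0)+\max(d-1,0)$, and likewise for $\DD^\star$.
\item Take an even number $s\ge4$ of rational evaluation places, where $v(x)=0$, and a rational inert $P_0$ for Hypothesis~\ref{HYPO2}. Take $\deg A=2$, e.g.\ $A$ the place above $(h)$.
\item Then $\rhoalgo=s-2$. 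Condition (ii) forces the even number $\deg(A+A_1)$ to be at most $s$, so $\deg A_1\le s-2$. Hence $\lambda(A_1)\le 2s-4=r\rhoalgo$, and (iii) fails for every admissible $A_1$.
\end{enumerate}

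\textbf{What is needed.} An extra hypothesis such as $\gcd(v_{P_0}(x),f_{P_0})=1$ (for example $P_0$ totally ramified). Under it, your Step~3 already finishes the proof.
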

\begin{proof}
	Using Hypothesis~\ref{HYPO2}, we can pick a Galois divisor $G$ of degree $sr - \rhoalgo - 1$ with support concentrated at the place $P$.
	We set $A_1 := G - A$. The first and second conditions are then clearly fulfilled.
	For the last one, we use the Riemann--Roch Theorem (Theorem \ref{ELEM}), which implies
	$$\frac{\lambda_\DD(A_1)} r \geq \deg(A_1) - g + 1 = \deg(G) - \deg(A) - g + 1 = sr - \deg(A) - g - \rhoalgo.$$
	Besides, it readily follows from the definition of $\rhoalgo$ that $sr - \deg(A) - g \geq 2\rhoalgo + 1$ and the proposition follows.
\end{proof}

From now on, we fix a divisor $A_1$ satisfying the requirements of Lemma~\ref{PARAM} (so, in particular, we assume Hypothesis~\ref{HYPO2}).

\subsubsection{Error localization}

We denote by $m=c+e$ the received message as the sum of a codeword $c \in \operatorname{LAG}_{\DD}(A,D)$ and an error $e=(e_1, \ldots, e_s)$ with $\wsrk(e) \leq \rhoalgo$.
The first step for decoding consists in finding a localizing function $f\in \DD^\star$ in the following sense.

\begin{Definition}
	An \emph{error localizing function} relative to the code $\operatorname{LAG}_{\DD}(A,D)$ and an error $e$ is a function $f \in \DD^\star$ such that
	$e_i^\star \circ \bar\varepsilon_i(f) = 0$ for all $i \in [s]$.
\end{Definition}

We note that the vanishing condition $e_i^\star \circ \bar\varepsilon_i(f) = 0$ is equivalent to the inclusion $\operatorname{im}{e_i} \subset \ker(\bar\varepsilon_i(f^\star))$, which justifies the terminology of ``localizing function''.

\begin{Proposition}\label{ERRLOCPROP}
	There exists a nonzero localizing function in $\Lambda_{\DD}(A_1)$.
\end{Proposition}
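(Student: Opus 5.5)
The plan is a dimension count. The condition defining a localizing function is linear in $f$, and it imposes at most $r\cdot\wsrk(e)$ scalar conditions. Lemma~\ref{PARAM}(iii) says the space $\Lambda(A_1)$ in which we look for $f$ has dimension $>r\cdot\rhoalgo\geq r\cdot \wsrk(e)$. So the kernel of the constraint map is nonzero.

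\emph{Step 1: the map is well defined.} Since $A_1\in\operatorname{Div}(\DD^\star,D)$ is away from $D$, every $f\in\Lambda(A_1)$ satisfies $\omega_{P_i}(f)\geq 0$ for all $i\in[s]$. Hence $\bar\varepsilon_i(f)\in\End_\FF(V_i)$ is defined via Equation~\eqref{EPS}, on $\DD^\star$ as in Lemma~\ref{COMMLEMMA}. Consider the $\FF$-linear map
$$\Psi:\ \Lambda(A_1)\longrightarrow \prod_{i=1}^s \End_\FF(V_i),\qquad f\longmapsto \big(e_1^\star\circ\bar\varepsilon_1(f),\ldots,e_s^\star\circ\bar\varepsilon_s(f)\big).$$
By definition, the nonzero localizing functions in $\Lambda(A_1)$ are exactly the nonzero elements of $\ker\Psi$.

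\emph{Step 2: bound the image.} For each $i$, the endomorphism $e_i^\star\circ\bar\varepsilon_i(f)$ has image contained in $\operatorname{im}(e_i^\star)$. Using $\operatorname{im}(e_i^\star)=\ker(e_i)^\perp$ and the nondegeneracy of the trace form on $V_i$, we get $\dim\operatorname{im}(e_i^\star)=r-\dim\ker e_i=\operatorname{rank}(e_i)$. Therefore
$$\operatorname{im}\Psi\subset\prod_{i=1}^s\operatorname{Hom}_\FF\big(V_i,\operatorname{im}(e_i^\star)\big),$$
and this space has dimension $\sum_i r\operatorname{rank}(e_i)=r\,\wsrk(e)\leq r\,\rhoalgo$.

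\emph{Step 3: conclude.} By rank--nullity and Lemma~\ref{PARAM}(iii),
$$\dim\ker\Psi\geq \lambda(A_1)-r\,\rhoalgo>0.$$
Hence a nonzero localizing function exists in $\Lambda(A_1)$.

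The only delicate point is notational. Lemma~\ref{PARAM}(iii) is stated for $\lambda_\DD(A_1)$, while localizing functions live in $\DD^\star$. I would check that the dimension bound applies to the relevant algebra. The Riemann--Roch lower bound $\lambda(A_1)\geq r(\deg A_1-g+1)$ from Theorem~\ref{ELEM} holds verbatim for $\DD^\star$, since $\DD^\star$ has the same $b_P$, $\Delta$ and genus. So the inequality in (iii) is valid in either algebra. Beyond this point, the argument is pure linear algebra.
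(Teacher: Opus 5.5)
Your argument is correct and is essentially the paper's. Both are the dimension count showing that being a localizing function imposes at most $r\,\wsrk(e)\le r\,\rhoalgo<\lambda(A_1)$ linear conditions: the paper phrases it as $\lambda(A_1)-\lambda(A_1,W)\le r\dim_\FF W$ for an extended divisor with $\dim_\FF W=\wsrk(e)$ (a vanishing-on-$W$ condition, tied to the localizing condition by adjunction), while you bound the image of $f\mapsto(e_i^\star\circ\bar\varepsilon_i(f))_i$ directly. Working directly in $\DD^\star$ lets you avoid the adjunction step, and your check that Lemma~\ref{PARAM}(iii) transfers to $\DD^\star$ (same $b_P$, $\Delta$ and genus) correctly resolves the paper's $\lambda_\DD$ versus $\lambda_{\DD^\star}$ notational slip.
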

\begin{proof}
	It follows from the Riemann--Roch Theorem that
	$$\lambda_{\DD}(A_1) - \lambda_{\DD}(A_1,\ker e^\star) \leq r \cdot \dim_\FF(\ker e^\star) = r\cdot\wsrk(e).$$
	As $\lambda_{\DD}(A_1) > r \cdot \wsrk(e) $, we deduce that $\lambda_{\DD}(A_1, \ker e^\star) > 0$.
	We conclude by noticing that any function in $\Lambda_{\DD^\star}(A_1,\operatorname{im}(e))$ is an error localizing function.
\end{proof}

\begin{Proposition}\label{ERRLOC}
	A function $f \in\Lambda_{\DD^\star}(A_1)$ is an error localizing function if and only if
	\begin{equation}\label{LOC}
	\forall h \in \Lambda_{\DD^\star}({\can+\operatorname{CoNr}(D)-A}- A_1), \quad
	\langle\bar\varepsilon_{D}(fh),m\rangle=0.
	\end{equation}
\end{Proposition}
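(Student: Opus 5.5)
The plan is to split $m = c + e$ and reduce condition~\eqref{LOC} to a statement about $e$ alone, then use the surjectivity of Theorem~\ref{FUNDTHEO} to upgrade a family of trace conditions to the vanishing $e_i^\star\circ\bar\varepsilon_i(f)=0$.

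\emph{Step 1: the codeword part vanishes.} Let $f \in \Lambda_{\DD^\star}(A_1)$ and $h \in \Lambda_{\DD^\star}(\can+\operatorname{CoNr}(D)-A-A_1)$. Since $A+A_1$ is Galois, the divisor $\can+\operatorname{CoNr}(D)-A-A_1$ differs from $\can+\operatorname{CoNr}(D)$ by a Galois divisor. I would use Lemma~\ref{LAGBELG1}, applied in the appropriate order (possibly via the adjunction $f\mapsto f^\star$, which preserves $\omega_P$ and reverses products), to get $fh \in \Lambda_{\DD^\star}(\can+\operatorname{CoNr}(D)-A)$. Then Theorem~\ref{NONCAN} gives $\bar\varepsilon_D(fh) \in \LAG_{\DD^\star}(\can+\operatorname{CoNr}(D)-A,D) = \LAG_\DD(A,D)^\perp$, so $\langle \bar\varepsilon_D(fh), c\rangle = 0$. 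Hence \eqref{LOC} is equivalent to $\langle \bar\varepsilon_D(fh), e\rangle = 0$ for all such $h$. Checking that the Galois hypothesis lands on the correct factor is the delicate bookkeeping point here.

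\emph{Step 2: the forward direction.} Using multiplicativity of $\bar\varepsilon_i$ (both factors lie in $\Lambda_{P_i}$, since all divisors are away from $D$), write
\[
\langle \bar\varepsilon_D(fh), e\rangle=\sum_i \operatorname{Tr}\bigl(\bar\varepsilon_i(f)\bar\varepsilon_i(h)e_i^\star\bigr)=\sum_i\operatorname{Tr}\bigl(e_i^\star\bar\varepsilon_i(f)\bar\varepsilon_i(h)\bigr).
\]
If $f$ is localizing, each term vanishes, which proves one direction.

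\emph{Step 3: the converse, which is the main obstacle.} Assume the trace sum vanishes for all $h$. Set $W_i=\operatorname{im} e_i^\star$ and $W=\bigoplus_i W_i$, so that $\dim W=\wsrk(e)\le\rhoalgo$. Then $e_i^\star\bar\varepsilon_i(f)\in\operatorname{Hom}(V_i,W_i)$, and the pairing above is exactly $\Phi_W\bigl((e_i^\star\bar\varepsilon_i(f))_i,(\bar\varepsilon_i(h)_{|W_i})_i\bigr)$. Apply Theorem~\ref{FUNDTHEO} over $\DD^\star$ to $B=\can+\operatorname{CoNr}(D)-A-A_1$. Its degree is $2g-2+sr-\deg(A+A_1)>2g-2+\rhoalgo$ by Lemma~\ref{PARAM}(ii), so $\deg(B,W)>2g-2$. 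The restriction map $h\mapsto(\bar\varepsilon_i(h)_{|W_i})_i$ is therefore surjective onto $\prod_i\operatorname{Hom}(W_i,V_i)$. Nondegeneracy of $\Phi_W$ (Lemma~\ref{PAIRING}) then forces $e_i^\star\bar\varepsilon_i(f)=0$ for all $i$, so $f$ is localizing.
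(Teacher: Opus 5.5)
Your proposal is correct and follows the paper's proof essentially step for step: the codeword part is killed via Lemma~\ref{LAGBELG1} and Theorem~\ref{NONCAN}, the pairing is rewritten as $\Phi_W$ with $W_i=\operatorname{im}e_i^\star$, and the converse follows from the surjectivity in Theorem~\ref{FUNDTHEO} together with the nondegeneracy in Lemma~\ref{PAIRING}. The bookkeeping point you leave open in Step~1 is settled, as in the paper, by observing that $\can$ and $\operatorname{CoNr}(D)$ are themselves Galois (the coefficients at $Q$ of $(\frac{dt_D}{t_D})_\LL=\operatorname{CoNr}((\frac{dt_D}{t_D})_\KK)+\operatorname{Diff}(\LL/\KK)$ and of $\Delta$ depend only on the place of $\KK$ below $Q$); hence $\can+\operatorname{CoNr}(D)-A-A_1$ is Galois, and Lemma~\ref{LAGBELG1} applies directly to $fh$ with $h$ in the Galois second slot, so no adjunction is needed.
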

\begin{proof}
	Let $f \in \Lambda_{\DD^\star}(A_1)$ and
	set $A_2 := \can+\operatorname{CoNr}(D) - A - A_1$ for simplicity.
	Noticing that $\can$ and $\operatorname{CoNr}(D)$ are both Galois divisors, we conclude that $A_2$ is Galois too.
	By Lemma~\ref{LAGBELG1}, we thus have:
	$$\Lambda_{\DD^\star}(A_1) \cdot \Lambda_{\DD^\star}(A_2) \subset \Lambda_{\DD^\star}\big(\can+\operatorname{CoNr}(D) - A\big).$$
	It then follows from Theorem~\ref{NONCAN} that $\langle\bar\varepsilon_{D}(fh), c\rangle=0$ for all $h \in \Lambda_{\DD^\star}(A_2)$.
	Thus
	\begin{equation}
	\label{eq:pairingfgm}
	\langle\bar\varepsilon_{D}(fh), m\rangle = \langle\bar\varepsilon_{D}(fh), e\rangle
	= \sum_{i=0}^s \operatorname{Tr}_\FF \big(\bar\varepsilon_i(fh) \circ e_i^\star\big)
	= \sum_{i=0}^s \operatorname{Tr}_\FF \big(e_i^\star \circ \bar\varepsilon_i(f) \circ \bar\varepsilon_i(h)\big).
	\end{equation}
	If $f$ is a localizing function, then $e_i^\star \circ \bar\varepsilon_i(f)  = 0$, so $\langle\bar\varepsilon_{D}(fh), m\rangle$ vanishes as well.

	Conversely, let us assume that $\langle\bar\varepsilon_{D}(fh), m\rangle = 0$ for all $h \in \Lambda_{\DD^\star}(A_2)$.
	We define $W_i := \operatorname{im} e_i^\star$ and set $W := \bigoplus_{i=1}^s W_i$ accordingly.
	We notice that the right hand side of Equation~\eqref{eq:pairingfgm} can be rewritten as
	$\Phi_W\big(e_i^\star\circ \bar\varepsilon_i(f), \,  \bar\varepsilon_i(h)_{|W_i}\big)$.
	Our assumption then reads
	\begin{equation}\label{eq:phiW}
	\forall h \in \Lambda_{\DD^\star}(A_2), \quad \Phi_W\big(e_i^\star\circ \bar\varepsilon_i(f), \,  \bar\varepsilon_i(h)_{|W_i}\big) = 0.
	\end{equation}
	Observing moreover that
	\begin{align*}
	\operatorname{deg}({\can+\operatorname{CoNr}(D)-A}-A_1, W) 
	& = 2g - 2 + sr - \deg(A + A_1) - \dim_\FF W \\
	& > 2g - 2 + \rhoalgo - \wsrk(e) > 2g - 2
	\end{align*}
	we can apply Theorem~\ref{FUNDTHEO} to get the surjectivity of the map
	$$\bar\varepsilon_{D|W} : \Lambda_{\DD^\star}\big(\can+\operatorname{CoNr}(D)-A - A_1\big) \longrightarrow \prod_{i=1}^s \operatorname{Hom}_{\FF}(W_i,V_{i}).$$
	Therefore Equation~\eqref{eq:phiW} shows that $e_i^\star\circ \bar\varepsilon_i(f)$ is orthogonal (for $\Phi_W$) to the whole space $\prod_{i=1}^s \operatorname{Hom}_{\FF}(W_i,V_{i})$. Thanks to Lemma~\ref{PAIRING}, it then must vanish.
\end{proof}

\subsubsection{Syndrome equation}

It now only remains to solve the syndrome equation restricted to the localizing space previously determined.

\begin{Proposition}\label{SYNEQ}
Let $f$ be a nonzero localizing error and let $e' = (e'_1, \ldots, e'_s) \in \prod_{i=1}^{s}\operatorname{End}_{\FF}(V_{i})$ such that
\begin{enumerate}
\item for all $i \in [s]$, we have $\operatorname{im}(e'_i) \subset \ker \bar\varepsilon_i(f^\star)$,
\item for all $h \in \Lambda_{\DD^\star}\big(\can+\operatorname{CoNr}(D) - A \big)$, we have
$\langle \bar\varepsilon_{D}(h), e' \rangle =\langle \bar\varepsilon_{D}(h), m\rangle$.
\end{enumerate}
Then $e' = e$.
\end{Proposition}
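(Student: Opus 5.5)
Set $\epsilon := e' - e \in \mathcal H$ and aim to show $\epsilon = 0$ by proving it is a codeword of small sum-rank weight. Write $m = c + e$. Condition 2 together with Theorem~\ref{NONCAN} (which gives $\langle \bar\varepsilon_D(h), c\rangle = 0$ for every $h \in \Lambda_{\DD^\star}(\can+\operatorname{CoNr}(D)-A)$) yields $\langle \bar\varepsilon_D(h), \epsilon\rangle = 0$ for all such $h$. Hence $\epsilon$ is orthogonal to $\LAG_{\DD^\star}(\can+\operatorname{CoNr}(D)-A,D) = \LAG_\DD(A,D)^\perp$. Since the pairing on $\mathcal H$ is nondegenerate and $\mathcal H$ is finite dimensional, biduality then gives $\epsilon \in \LAG_\DD(A,D)$.

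Next I bound the sum-rank weight of $\epsilon$. For each $i$, condition 1 gives $\operatorname{im}(e'_i) \subset \ker \bar\varepsilon_i(f^\star)$. Because $f$ is a localizing function, the remark after the definition gives $\operatorname{im}(e_i) \subset \ker \bar\varepsilon_i(f^\star)$ as well. Therefore $\operatorname{im}(\epsilon_i) \subset U_i := \ker \bar\varepsilon_i(f^\star)$, and so $\wsrk(\epsilon) \le \sum_i \dim U_i$.

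It remains to show $\sum_i \dim U_i < \ddesign(A) = sr - \deg A$. Then $\epsilon$ is a codeword of weight below the minimum distance bound of \cite[Theorem~2]{1} (valid since $\deg A < sr$), so $\epsilon = 0$. I expect this dimension count to be the main obstacle.

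For the count, set $U = \bigoplus_i U_i$. Since $f \in \Lambda_{\DD^\star}(A_1)$, we have $f^\star \in \Lambda_\DD(A_1)$, because adjunction preserves the $\omega_Q$; I would need to check that it preserves each $\omega_Q$ individually, or else rely on the Galois property. By construction $f^\star \in \Lambda_\DD(A_1, U)$, and $f^\star \neq 0$. The vanishing part of Theorem~\ref{RROCHTHEO} (Theorem~\ref{RTHEO}) therefore forces $\deg(A_1,U) \ge 0$, that is, $\dim U \le \deg A_1$. Combining with Lemma~\ref{PARAM}(ii), which says $\deg A + \deg A_1 < sr - \rhoalgo$, we get
\[
\dim U \le \deg A_1 < sr - \rhoalgo - \deg A \le sr - \deg A = \ddesign(A),
\]
which is exactly the required bound. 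A residual subtlety to verify is that the evaluation of $f^\star$ in $\DD$ and the kernel of $\bar\varepsilon_i(f^\star)$ are consistent with the definition of $\Lambda_\DD(A_1,U)$; this compatibility comes from Lemma~\ref{COMMLEMMA}. If the exact constants fail to close, the slack $\rhoalgo$ in the inequality above provides room.
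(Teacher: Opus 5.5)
Your overall strategy is sound. In the weight-bound step it differs from the paper's proof, and that is exactly where the paper is too quick.

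Both arguments begin by showing $e'-e\in\LAG_\DD(A,D)$ via Theorem~\ref{NONCAN} and biduality. The paper then asserts that the image of $e'-e$ lies in $\bigoplus_i\operatorname{im}(e_i)$, so that $\wsrk(e'-e)\le\wsrk(e)\le\rhoalgo$. Hypothesis~1 does not give this. It only places $\operatorname{im}(e'_i-e_i)$ inside $\ker\bar\varepsilon_i(f^\star)$, which contains $\operatorname{im}(e_i)$ but can be strictly larger (e.g.\ at positions where $e_i=0$). Your bound $\wsrk(e'-e)\le\sum_i\dim\ker\bar\varepsilon_i(f^\star)$, combined with a degree count on $A_1$, is what actually closes the argument. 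It is the sum-rank analogue of bounding the number of zeros of the error locator in the classical algorithm.

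However, one step of your count fails as written: $f\in\Lambda_{\DD^\star}(A_1)$ does not imply $f^\star\in\Lambda_\DD(A_1)$. The adjunction preserves $\omega_P$ but not the individual $\omega_Q$. Take $f=f_jT^j\in\DD^\star$ with $0<j<r$. The only nonzero coefficient of $f^\star\in\DD$ is $x^{-1}\theta^{-j}(f_j)$, in degree $r-j$, and one finds $\omega_Q(f^\star)-\omega_Q(f)=w_Q(\theta^{-j}(f_j))-w_Q(f_j)$. This need not vanish, since $w_Q\circ\theta^{-j}$ is the valuation at another place above $P$ when $P$ splits in $\LL$. The Galois fallback is not available either: $A_1=G-A$ is Galois only if $A$ is, which is not assumed.

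The fix is to stay in $\DD^\star$.
\begin{itemize}
\item By Lemma~\ref{COMMLEMMA} (with the roles of $\DD$ and $\DD^\star$ exchanged), $\bar\varepsilon_i(f^\star)=\bar\varepsilon_i(f)^\star$.
\item Since $\ker\varphi^\star=(\operatorname{im}\varphi)^\perp$, you get $\dim\ker\bar\varepsilon_i(f^\star)=\dim\ker\bar\varepsilon_i(f)$.
\item Set $U'=\bigoplus_i\ker\bar\varepsilon_i(f)$. Then $0\neq f\in\Lambda_{\DD^\star}(A_1,U')$ and $\dim U'=\dim U$.
\item Apply the vanishing part of Theorem~\ref{RTHEO} to $\DD^\star$. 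This is legitimate because $\DD^\star$ is a division algebra (it is anti-isomorphic to $\DD$) and $A_1$ is away from $D$.
\end{itemize}
This gives $\dim U\le\deg A_1<sr-\deg A-\rhoalgo\le\ddesign(A)$. Here $\rhoalgo\ge 0$ holds because an error of weight at most $\rhoalgo$ exists. With this change your proof is complete.
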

\begin{proof}
	From Theorem~\ref{NONCAN}, we derive that the difference $e'-e$ lies in $\LAG_{\DD}(A,D)$.
	Besides its image lies in $\bigoplus_{i=1}^s \operatorname{im}(e_i)$ and thus $\wsrk(e'-e) \leq \wsrk(e) \leq \rhoalgo$.
	Since the minimum distance of $\LAG_{\DD}(A,D)$ is at least $\ddesign(A) > \rhoalgo$, we conclude that $e' = e$.
\end{proof}


Algorithm~\ref{ALGO} summarizes our decoding algorithm.

\begin{algorithm}\label{ALGO}
	\caption{Decoding algorithm for linearized Algebraic Geometry codes}
	\begin{algorithmic}[1]
		\Require{A message $m$ with an error of weight at most $\rhoalgo=\left \lfloor \frac{\ddesign(A)-g-1}{2}\right \rfloor$. }
		\Ensure{The error $e$ corresponding to the maximum likelihood principle.}
		\State Pick a divisor $A_1 \in \operatorname{Div}(\DD,D)$ satisfying the requirements of Lemma~\ref{PARAM}.
		\State Pick a nonzero solution $f\in \Lambda_{\DD^\star}(A_1)$ of the linear system
		$$\forall h \in \Lambda_{\DD^\star}\big(\can+\operatorname{CoNr}(D)-A-A_1\big), \quad \langle \bar\varepsilon_{D}(fh),m\rangle=0.$$
		\State Solve the linear system with $e \in \prod_{i=1}^{s}\operatorname{Hom}_{\FF}(V_{i},\ker \bar\varepsilon_i(f^\star))$:
		$$\forall h \in \Lambda_{\DD^\star}\big(\can+\operatorname{CoNr}(D)-A\big), \quad \langle \bar\varepsilon_{D}(h), e\rangle = \langle \bar\varepsilon_{D}(h), m \rangle.$$
		\State Return $e$.
	\end{algorithmic}
\end{algorithm}

\begin{Theorem}\label{LAGDECODING}
	We assume Hypothesis~\ref{HYPO} and~\ref{HYPO2}.
	Then, Algorithm \ref{ALGO} can decode any error of weight at most
	$$\rhoalgo = \left\lfloor \frac{\ddesign(A)-g-1}{2} \right\rfloor.$$
	Moreover, its complexity is polynomial in $s$ and $r$.
\end{Theorem}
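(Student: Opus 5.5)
The proof has two parts: correctness, which assembles the propositions of this subsection, and complexity, which reduces every step to finite-dimensional linear algebra over $\FF$.

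\emph{Correctness.} Let $m=c+e$ with $c\in\LAG_\DD(A,D)$ and $\wsrk(e)\le\rhoalgo$.
\begin{enumerate}
\item Step~1 of Algorithm~\ref{ALGO} is possible by Lemma~\ref{PARAM}, which uses Hypothesis~\ref{HYPO2}.
\item The linear system of Step~2 has a nonzero solution. By Proposition~\ref{ERRLOCPROP}, applied on the $\DD^\star$ side (the Riemann--Roch estimate works there equally well), a nonzero error localizing function exists in $\Lambda_{\DD^\star}(A_1)$. By the easy direction of Proposition~\ref{ERRLOC}, every localizing function satisfies \eqref{LOC}.
\item Conversely, any nonzero $f$ returned by Step~2 satisfies \eqref{LOC}. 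By the other direction of Proposition~\ref{ERRLOC}, it is therefore a genuine error localizing function, i.e.\ $\operatorname{im} e_i\subset\ker\bar\varepsilon_i(f^\star)$ for all $i$. This direction rests on the surjectivity given by Theorem~\ref{FUNDTHEO} and on the degree bound from Lemma~\ref{PARAM}(ii).
\item Hence the true error $e$ solves the system of Step~3. The syndrome condition holds because $\langle\bar\varepsilon_D(h),c\rangle=0$ for all $h\in\Lambda_{\DD^\star}(\can+\operatorname{CoNr}(D)-A)$, by Theorem~\ref{NONCAN}.
\item By Proposition~\ref{SYNEQ}, $e$ is the unique solution of Step~3. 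The algorithm therefore returns $e$, and $c=m-e$; if needed, $c$ is turned back into a function via the injective map $\bar\varepsilon_D$ (Remark~\ref{GALINTEGER}).
\end{enumerate}

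\emph{Complexity.} Every step is the solution of a linear system over $\FF$ of polynomial size. I would fix $\FF$-bases of the Riemann--Roch spaces $\Lambda_{\DD^\star}(A_1)$, $\Lambda_{\DD^\star}(A_2)$ and $\Lambda_{\DD^\star}(\can+\operatorname{CoNr}(D)-A)$, where $A_2=\can+\operatorname{CoNr}(D)-A-A_1$. By \eqref{ISOLBDA} these reduce to $r$ classical Riemann--Roch spaces $L(\pi^\star_i(\cdot))$ of $\LL$, and the bases are precomputed, as in the classical AG setting. Their dimensions are $O(sr^2)$ by Theorem~\ref{ELEM}, since all degrees involved are $O(sr)$. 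The matrices of $\bar\varepsilon_i$ on these bases lie in $\FF^{r\times r}$.

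Step~2 is then a homogeneous system whose unknowns are the coordinates of $f$. Its equations are $\langle\bar\varepsilon_D(fh),m\rangle=0$ for $h$ ranging over a basis of $\Lambda_{\DD^\star}(A_2)$. Because $\bar\varepsilon_i$ is multiplicative on the relevant orders, $\bar\varepsilon_i(fh)=\bar\varepsilon_i(f)\bar\varepsilon_i(h)$, so each equation is bilinear in known data and linear in $f$. Step~3 is a linear system with at most $sr^2$ unknowns in $\prod_i\operatorname{Hom}(V_i,\ker\bar\varepsilon_i(f^\star))$ and $O(sr^2)$ equations. Gaussian elimination therefore costs polynomial time in $s$ and $r$.

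\emph{Main obstacle.} The subtle point is the complexity claim. It requires that computing bases of the $\Lambda$ spaces and evaluating $\bar\varepsilon_i$ are either preprocessing or polynomial. I would handle this by stating these costs as precomputation: they depend only on the code, not on $m$. I would also note that computing $\bar\varepsilon_i(f)$ from $f(u_{P_i}\theta)$ modulo $P_i$ takes $O(r^2)$ operations per coefficient.
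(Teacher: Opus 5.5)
Your proposal is correct and follows the paper's proof, which assembles Lemma~\ref{PARAM} and Propositions~\ref{ERRLOCPROP} and~\ref{ERRLOC} and observes that every step is polynomial-size linear algebra; yours is more complete, since you also invoke Proposition~\ref{SYNEQ} for uniqueness in Step~3 (the paper's proof does not cite it), read Proposition~\ref{ERRLOCPROP} on the $\DD^\star$ side as it must be, and give the accurate $O(sr^2)$ bound for the $\FF$-dimensions of the Riemann--Roch spaces (the paper writes $O(sr)$, which does not affect polynomiality). If you write out the uniqueness step yourself, the bound that makes it work is $\wsrk(e'-e)\le\sum_i\dim\ker\bar\varepsilon_i(f^\star)\le\deg A_1<\ddesign(A)-\rhoalgo$, which follows from $0\neq f\in\Lambda_{\DD^\star}(A_1)$, the vanishing part of Theorem~\ref{RTHEO} and Lemma~\ref{PARAM}(ii); the paper's proof of Proposition~\ref{SYNEQ} instead uses the inclusion $\operatorname{im}(e'_i-e_i)\subset\operatorname{im}(e_i)$, which is not justified a priori because $\ker\bar\varepsilon_i(f^\star)$ can be strictly larger than $\operatorname{im}(e_i)$.
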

\begin{proof}
	Correctness follows from Propositions~\ref{PARAM}, \ref{ERRLOCPROP} and~\ref{ERRLOC}. All steps clearly have polynomial complexity in $s$ and $r$, as the dimensions of both
	$\Lambda_{\DD^\star}(\can+\operatorname{CoNr}(D) - A)$ and $\Lambda_{\DD^\star}(\can+\operatorname{CoNr}(D)-A - A_1)$ are in $O(sr)$.
 \end{proof}

%

\section*{Acknowledgments}
This work was supported by ANR-21-CE39-0009-BARRACUDA and ANR-22-CPJ2-0047-01.

\appendix
\section{SageMath implementation of the decoding algorithm}\label{SAGE}

We provide some results from a SageMath implementation of our decoding algorithm.
The source code is available at \url{https://plmlab.math.cnrs.fr/drain/lag-decoding.git}.
We set a Kummer, Artin--Schreier or Artin--Schreier-Witt extension $\LL/\KK$ and assume the place(s) at infinity to be totally ramified in $\LL/\KK$.
For simplicity, we also assume the evaluation places  $Q_{i,j}|P_i$ to be totally split.
The code $\operatorname{LAG}_{\DD}(A,D)$ is the image of the following morphisms up to conjugation in a basis corresponding to the fiber $(Q_{i,0},\ldots,Q_{i,r-1})$ above $P_i$:	
$$\begin{array}{rcl}
\varepsilon'_{i}: \quad \Lambda_{\DD}(A)& \rightarrow & \mathbb{F}_q^{r\times r}\\
	 \lambda,T &\mapsto & \begin{bmatrix}\pi_{{i,0}}(\lambda)& 0 & \dots & 0 & 0 \\0 & \pi_{{i,1}}(\lambda) & \dots & 0 & 0 \\0 & 0 & \ddots & & 0 \\ \vdots & & \ddots & 0 & \vdots \\0 & 0 & \dots & 0 & \pi_{{i,r-1}}(\lambda)\end{bmatrix}, \begin{bmatrix}0 & 0 & \dots & 0 & \pi_{{i,0}}(x) \\1 & 0 & \dots & 0 & 0 \\0 & 1 & \ddots & & 0 \\ \vdots & & \ddots & 0 & \vdots \\0 & 0 & \dots & 1 & 0 \end{bmatrix}
\end{array}$$
where $\pi_{i,j}$ denotes the projection $\mathcal{O}_{Q_{i,j}}\rightarrow \mathcal{O}_{Q_{i,j}}/Q_{i,j}$.
These morphisms $\varepsilon'_{i}$ extend to the gauge order $\Lambda_{P_i}$ and are conjugated to the evaluation morphisms $\bar\varepsilon_i$.
Indeed, we check $$\varepsilon'_{i}(-) = \varepsilon_{i}(u_{P_i}^{-1}\cdot -\cdot u_{P_i} ). $$
\subsection*{SageMath use case}

We present hereunder a decoding test for a Kummer extension $\LL/\KK$ of degree $2$ over another Kummer extension $\KK/\mathbb{F}_{81}(t)$ of degree $2$.
{\small
\begin{sagecommandline}
    sage: load('LAG_decoding.sage')
    sage: F = GF(81)
    sage: DLx = DivisionAlgebra(F=F, rK=2, r=2, pol_K_arg=[0,2,1,1], pol_L_arg=[0,1], pol_x_arg=[F.gen(),1]) 
    sage: LAG = LinearizedAGCode(DLx=DLx, degA=7, s_eval=7) 
    sage: LAG
    sage: omega = LAG.maximal_decodable_error_weight() 
    sage: omega 
    sage: error = LAG.random_error(omega)
    sage: error  # RANDOM ERROR 
    sage: word = LAG.random_codeword()
    sage: message = word + error 
    sage: message # RANDOM MESSAGE 
    sage: decoded_error = LAG.decode_msg(message) # DECODING ...
    sage: decoded_error == error # CHECKING THE RESULT
\end{sagecommandline}
}

\subsection*{SageMath benchmark}

We tested the decoding of linearized AG codes on different Kummer and Artin--Schreier--Witt extensions.
A benchmark is reported on Figure~\ref{fig:s2} (resp.~Figure~\ref{fig:s3}) for the initial computation of Riemann--Roch spaces (resp.~iterative message decoding corresponding to codes described in Figure~\ref{fig:s1}).
The computations were run on a computer with AMD Ryzen 7 PRO 8840U (3,30 GHz up to 5,10 GHz) Processor and 16 Go DDR5-5 600MT/s (SODIMM).

\begin{figure}[t]
        \hfill\begin{tabular}{ |c|c|c|c|c|c|c|c|c|}            
            \hline
             &  $\FF$ & $\KK$   &  $\LL$& $x_{\DD}$  & $g_{\KK}$ &$g_{\LL}$ &  $g_{\DD}$&  $s$    \\
            \hline
           $C(2)$          &   $\mathbb{F}_{5^2}$ &{$\mathbb{F}_{5^2}(t)$}  &   {$\frac{\mathbb{F}_{5^2}(t)[y]}{y^2 + 4 t^3 + 4 t^2 + 4 t}$}& $t + z_2$   &  $0$&$1$&$3/2$ & $14$ \\
            \hline
			$C(3)$       &  $\mathbb{F}_{3^3}$ &{$\mathbb{F}_{3^3}(t)$}   &  {$\frac{\mathbb{F}_{3^3}(t)[y]}{y^3 + 2 y + 2 t^3 + 2 t^2 + 2 t}$} & $t + z_3$ &   $0$&$1$&$2$  & $9$ \\
            \hline
           $C(4)$         &    $\mathbb{F}_{5^2}$ &{$\mathbb{F}_{5^2}(t)$}   &   {$\frac{\mathbb{F}_{5^2}(t)[y]}{y^4 + 4 t^3 + 4 t^2 + 4 t}$} & $t + z_2 $ &   $0$&$3$&$9/2$  & $7$\\
            \hline
          $C(5)$          &     $\mathbb{F}_{5^2}$  & {$\mathbb{F}_{5^2}(t)$}    &  {$\frac{\mathbb{F}_{5^2}(t)[y]}{y^5 + 4 y + 4 t^3 + 4 t^2 + 4 t}$} & $t + z_2$ &   $0$&$4$&$6$  & $7$\\
            \hline
          $C(8)$          &     $\mathbb{F}_{7^2}$  & {$\mathbb{F}_{7^2}(t)$}   &  {$\frac{\mathbb{F}_{7^2}(t)[y]}{y^8 + 6 t^3 + 6 t^2 + 6 t}$} & $t + z_2$  &    $0$&$7$&$21/2$& $6$\\
            \hline
         $C(2,2)$    &      $\mathbb{F}_{3^4}$ & { $\frac{\mathbb{F}_{3^4}(t)[x]}{x^2 + 2 t^3 + 2 t^2 + t}$} &   {$\frac{\KK(x)[y]}{y^2 + 2 x}$} & $t + z_4$ &   $1$&$3$&$4$  & $15$ \\
            \hline
        \end{tabular}\hfill\null
\caption{Description of the tested codes, where $z_n$ denotes a generator of $\mathbb{F}^\times_{p^n}$  }\label{fig:s1}
\end{figure}
\begin{figure}[t]
        \hfill\begin{tabular}{ |r|r|r|r||r|r|r||r|r|r|}            
            \hline
             &  { $\omega$ }    & $\operatorname{deg}A$ & time &  { $\omega$ }   & $\operatorname{deg}A$   & time &  { $\omega$  }   &  $\operatorname{deg}A$ & time    \\
            \hline
           $C(2)$           &    $1$ &$23$ &$1.985$   &     $6$ & $12$ & $2.978$  &   $11$ &$2$& $4.003$ \\
            \hline
			$C(3)$        & $1$ &$22$ &$4.697$    &     $6$ &$12$ &$5.996$  &   $10$ &$3$& $7.756$  \\
            \hline
           $C(4)$         &     $1$ &$20$ &$7.023$   &     $4$ & $14$&$7.682$  &   $7$ &$8$& $8.793$  \\
            \hline
          $C(5)$         &       $1$  & $26$& $21.415$   &      $5$ &$18$& $23.994$  &   $8$ &$11$& $27.936$  \\
            \hline
          $C(8)$        &         $1$  & $34$ &$142.395$  &      $4$ &$27$ &$153.397$  &   $8$ &$20$& $164.652$ \\
            \hline
         $C(2,2)$    &       $1$ &  $23$& $15.738$   &$15$ &   $15$ & $23.138$&   $9$ &$7$& $30.225$  \\
            \hline
        \end{tabular}\hfill\null
\caption{CPU Timings in seconds for constructing a code.}\label{fig:s2}
\end{figure}
\begin{figure}[t]
        \hfill\begin{tabular}{ |r|r|r|r||r|r|r||r|r|r|}           
   \hline
             &    $\omega$ &  $\operatorname{deg}A$   & time &   $\omega$ &  $\operatorname{deg}A$  & time &  $\omega$ &  $\operatorname{deg}A$ & time    \\
            \hline
           $C(2)$  &   $1$  &$23$  &  $0.009$    &    $6$ & $12$ & $0.026$       &   $11$  & $2$ & $0.052$ \\
            \hline
           $C(3)$    &    $1$  &$22$   & $0.016$   &     $6$ &$12$  & $0.045$     &  $10$ & $3$   & $0.094$  \\
            \hline
           $C(4)$   &   $1$   & $20$ & $0.032$   &    $4$   & $14$ & $0.065$      &  $7$ &  $8$ & $0.137$   \\
            \hline
           $C(5)$   &    $1$ &   $26$  & $0.093$  &     $5$ & $18$& $0.236$       &  $8$ &  $11$ & $0.440$   \\
            \hline
            $C(8)$   &  $1$   &$34$  & $1.386$  &     $4$ & $27$& $2.394$      &   $8$ &  $20$ & $4.741$ \\
            \hline
             $C(2,2)$   &   $1$  &  $23$ & $0.016$  &  $5$ &  $15$& $0.028$      &   $9$ &  $7$& $0.047$ \\
            \hline
        \end{tabular}\hfill\null
\caption{CPU Timings in seconds for decoding a single random message in a code.}\label{fig:s3}
\end{figure}

\bibliographystyle{alpha}
\bibliography{LAG-decoding}
\end{document}